\newtheorem{theorem}{Theorem}[section]
\newtheorem{definition}{Definition}[section]
\newtheorem{lemma}[theorem]{Lemma}
\newtheorem{problem}{Problem}
\newtheorem{conjecture}[problem]{Conjecture}
\newtheorem{remark}{Remark}[section]
\title{Some new results on permutation polynomials over finite fields}
\author{
Jingxue Ma$^{\text{a}}$, Tao Zhang$^{\text{a}}$, Tao Feng$^{\text{a,c}}$ and Gennian Ge$^{\text{b,c,}}$\thanks{Corresponding author. Email address: majingxue@zju.edu.cn (J. Ma), tzh@zju.edu.cn (T. Zhang), tfeng@zju.edu.cn (T. Feng), gnge@zju.edu.cn (G. Ge)}\\
  \footnotesize $^{\text{a}}$ School of Mathematical Sciences, Zhejiang University, Hangzhou 310027, Zhejiang, China\\
  \footnotesize $^{\text{b}}$ School of Mathematical Sciences, Capital Normal University, Beijing, 100048, China\\
\footnotesize $^{\text{c}}$ Beijing Center for Mathematics and Information
Interdisciplinary Sciences,
Beijing, 100048, China}
\begin{document}

\date{}\maketitle
\begin{abstract}
Permutation polynomials over finite fields constitute an active
research area and have applications in many areas of science and
engineering. In this paper, four classes of monomial complete
permutation polynomials and one class of trinomial complete
permutation polynomials are presented, one of which confirms a conjecture proposed
by Wu et al. (Sci. China Math., 2015, 58, pp. 2081-2094). Furthermore, we give two
classes of permutation trinomial, and  make some progress
on a conjecture about the differential uniformity of power
permutation polynomials proposed by Blondeau et al. (Int. J. Inf. Coding Theory, 2010, 1, pp. 149-170).

\medskip
\noindent {{\it Keywords and phrases\/}: Permutation polynomials,
complete permutation polynomials, trace function, differential uniformity
}\\
\smallskip

\noindent {{\it Mathematics subject classifications\/}: 11T06, 11T55, 05A05.}
\end{abstract}

\section{Introduction}
Let $\mathbb{F}_{p^{n}}$ be a finite field with $p^{n}$ elements,
where $p$ is a prime and $n$ is a positive integer. A polynomial
$f(x)\in\mathbb{F}_{p^{n}}[x]$ is called a permutation polynomial
(PP) if the associated mapping $f:c\mapsto f(c)$ from
$\mathbb{F}_{p^{n}}$ to itself is a bijection. PPs have been intensively studied in recent years due to
their important applications in cryptography, coding theory and combinatorial design theory (see \cite{Ding06,D99,L07,QTTL13,ST05}
and the references therein).  For instance, Ding et al. \cite{Ding06} constructed a family of skew Hadamard difference sets via the Dickson PP of order five, which disproved the longstanding conjecture on skew Hadamard difference sets. Some recent progress on PPs can be found in
\cite{AGW11,DQWYY15,DXYY09,FH12,H15,HX05,H11,H12,LHT13,YD07,YDWP08,ZH12}.

A polynomial $f(x)\in\mathbb{F}_{p^{n}}[x]$ is called a complete
permutation polynomial (CPP) if both $f(x)$ and $f(x)+x$ are
permutations of $\mathbb{F}_{p^{n}}$. These polynomials were
introduced by Niederreiter and Robinson in \cite{NR82}. The simplest
polynomials are monomials, and for a positive integer $d$ and
$\alpha\in\mathbb{F}_{p^{n}}^{*}$, the monomial $\alpha x^{d}$ over
$\mathbb{F}_{p^{n}}$ is a CPP if and only if
$\textup{gcd}(d,q-1)=1$ and $\alpha x^{d}+x$ is a PP. We call such an integer
$d$ a CPP exponent over $\mathbb{F}_{p^{n}}$. Recently, Charpin and
Kyureghyan \cite{CK08} proved that $2^{k}+2$ is a CPP exponent over
$\mathbb{F}_{2^{2k}}$ for odd $k$. In \cite{TZHL}, a class of CPP
exponents over $\mathbb{F}_{2^{n}}$ of Niho type was given. Two new
classes of CPP exponents and a multinomial CPPs were proposed in
\cite{WLHZ}. Further results about CPPs can be found in
\cite{TZH14,WLHZ14,Z}.

In this paper, we construct four classes of monomial CPPs, one class of trinomial CPPs and two classes of permutation trinomials as follows:
\begin{enumerate}
  \item Let $p$ be an odd prime, $r+1=p$ and $d=\frac{p^{rk}-1}{p^{k}-1}+1.$
Then $a^{-1}x^{d}$ is a CPP over $\mathbb{F}_{p^{rk}}$, where $a\in
\mathbb{F}_{p^{rk}}^{*}$ such that $a^{p^{k}-1}=-1$.
  \item Let $n=6k$, where $k$ is a positive integer with $\textup{gcd}(k,3)=1$.
Then $d=2^{4k-1}+2^{2k-1}$ is a CPP exponent over $\mathbb{F}_{2^{n}}$. To be specific,
$a^{-1}x^{d}$ is a CPP over $\mathbb{F}_{2^{n}}$, where $a\in\{\omega^{t(2^{2k}-1)}|0< t\leq 2^{2k}+2^{4k},\ 3\nmid t\}.$
  \item Let $n=4k$. Then $d=(1+2^{2k-1})(1+2^{2k})+1$ is a CPP exponent over $\mathbb{F}_{2^{n}}$.
To be specific, if $a$ is a non-cubic element of $\mathbb{F}_{2^{2k}}^{*}$, then $a^{-1}x^{d}$ is a CPP over $\mathbb{F}_{2^{n}}$.
  \item Let $p$ be an odd prime and $n=4k$. Then $d=\frac{p^{4k}-1}{2}+p^{2k}$ is a CPP exponent over
 $\mathbb{F}_{p^{n}}$. To be specific, $a^{-1}x^{d}$ is a CPP over $\mathbb{F}_{p^{n}}$, where $a\in\{\omega^{t(p^{2k}-1)+\frac{p^{2k}-1}{2}}:\,0\leq t\leq p^{2k}\}.$
  \item For any odd prime $p$,
$f(x)=-x+x^{\frac{p^{2m}+1}{2}}+x^{\frac{p^{2m}+1}{2}p^m}$ is a CPP
over $\mathbb{F}_{p^{3m}}$.
  \item Let $m>1$ be an odd integer, and write $k=\frac{m+1}{2}.$ Then for each $u\in\mathbb{F}_{2^m}^{*}$,
$f(x)=x+u^{2^{k-1}-1}x^{2^k-1}+u^{2^{k-1}}x^{2^k+1}$ is a PP over
$\mathbb{F}_{2^m}$.
  \item Let $m>1$ be an odd integer such that $m=2k-1.$ Then
$f(x)=x+ux^{2^k-1}+u^{2^k}x^{2^m-2^{k+1}+2}, u\in
\mathbb{F}_{2^m}^{*},$ is a PP over
$\mathbb{F}_{2^m}$.
\end{enumerate}

The first class is a conjecture made by Wu et al. \cite{WLHZ}, and our main technique is using the AGW Criterion \cite{AGW11}. By using the additive characters over the underlying finite fields \cite{TZH14}, we give three other new classes (Class 2--Class 4) of monomial CPPs over finite fields. Classes 5, 6 and 7 are explicit constructions of PPs and CPPs through the study of the number of solutions of special equations \cite{DQWYY15,dobb}.

Functions with a low differential uniformity are interesting from
the viewpoint of cryptography as they provide good resistance to
differential attack. In \cite{BCC}, the authors
considered the differential properties of power functions and
proposed the following conjecture.
\begin{conjecture}{\rm\cite{BCC}}\label{conjecture2}
Let $n=2m$ with $m$ odd. Let $F_{d}:x\rightarrow x^d $ be the power PPs over $\mathbb{F}_{2^n}$ defined by the following
values of $d$:
\begin{enumerate}
  \item[(1)] $d=2^m+2^{(m+1)/2}+1,$
  \item[(2)] $d=2^{m+1}+3.$
\end{enumerate}
Then, for these values of $d$, $F_d$ is differentially 8-uniform and
all values $0,2,4,6,8$ appear in its differential spectrum.
\end{conjecture}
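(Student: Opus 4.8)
\section*{Proof proposal for Conjecture~\ref{conjecture2}}

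The plan is, in both cases, to reduce to counting solutions of the derivative equation and to exploit the quadratic subfield $\mathbb{F}_{2^m}\subset\mathbb{F}_{2^n}$. Since $F_d$ is a power map, $(x+a)^d+x^d=a^d\big((a^{-1}x+1)^d+(a^{-1}x)^d\big)$, so the differential spectrum is governed by the fibre sizes $\delta(b):=\#\{x\in\mathbb{F}_{2^n}:(x+1)^d+x^d=b\}$, each of which is even because $x\mapsto x+1$ permutes every fibre. Write $q=2^m$, so $\mathbb{F}_{2^n}=\mathbb{F}_{q^2}$; for $x\in\mathbb{F}_{q^2}$ put $T=x+x^q$ and $N=x^{q+1}$ (relative trace and norm, both in $\mathbb{F}_q$), and recall $x^q=x+T$ and $x^2+Tx=N$. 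Using $x^{q^2}=x$ one computes the closed forms
\[
(x+1)^d+x^d=N^2+(T^2+1)(x+1)\qquad\text{for }d=2^{m+1}+3,
\]
\[
(x+1)^d+x^d=(T+1)\,x^{2^{(m+1)/2}}+N+T+1\qquad\text{for }d=2^{m}+2^{(m+1)/2}+1,
\]
the second because $x^{d}=x^{q+1}x^{2^{(m+1)/2}}=N\,x^{2^{(m+1)/2}}$ and $(x+1)^{q+1}=N+T+1$.

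Fix $b$ and set $\beta=\mathrm{Tr}_{\mathbb{F}_{q^2}/\mathbb{F}_q}(b)\in\mathbb{F}_q$; treat $d=2^{m+1}+3$ first. If $T=1$ the equation forces $b=N^2$, solvable only when $b\in\mathbb{F}_q$ and then contributing exactly $2$. If $T\neq1$ then $x$ is uniquely determined as $x=1+(b+N^2)/(T^2+1)$; the requirement $\mathrm{Tr}(x)=T$ becomes (using $\mathrm{Tr}(N^2)=0$) the cubic $T^3+T=\beta$, and $x^{q+1}=N$ becomes $(N+T+1)(T^2+1)^2=N^4+\beta N^2+b^{q+1}$, whose homogeneous part $N\mapsto N^4+\beta N^2+(T^2+1)^2N$ is $\mathbb{F}_2$-linear. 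Substituting $\beta=T^3+T$ one factors this linear map as $N\,(N+T+1)\,(N^2+(T+1)N+(T+1)^3)$, so its kernel has $\mathbb{F}_2$-dimension $2$ if $\mathrm{Tr}_{\mathbb{F}_q/\mathbb{F}_2}(T)=1$ and $1$ otherwise; consequently each admissible $T\neq1$ contributes an \emph{even} number of $x$'s — namely $0$, $2$, or (only if $\mathrm{Tr}_{\mathbb{F}_q/\mathbb{F}_2}(T)=1$) $4$. Since $\beta=0\iff b\in\mathbb{F}_q$, for such $b$ the cubic degenerates to $T(T+1)^2=0$ with unique admissible root $T=0$ and $\delta(b)\le 2+2=4$. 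For $b\notin\mathbb{F}_q$ the cubic is separable with $0$, $1$, or $3$ roots in $\mathbb{F}_q$, and $\sum_i\mathrm{Tr}_{\mathbb{F}_q/\mathbb{F}_2}(T_i)=\mathrm{Tr}_{\mathbb{F}_q/\mathbb{F}_2}(0)=0$ forces an even number of the roots to have trace $1$; a crude tally then gives only $\delta(b)\le 4+4+2=10$. Sharpening this $10$ down to $8$ is the crux: one must show the three affine conditions $(N+T_i+1)(T_i^2+1)^2=N^4+\beta N^2+b^{q+1}$ cannot all be solvable that wastefully, which I expect follows from a relation among the constants $b^{q+1}+(T_i+1)(T_i^2+1)^2$ forced by the symmetric functions ($\sum T_i=0,\ \sum_{i<j}T_iT_j=1,\ \prod T_i=\beta$) of the cubic.

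For $d=2^{m}+2^{(m+1)/2}+1$ the same scheme applies with heavier bookkeeping, since $z:=x^{2^{(m+1)/2}}$ is not in $\mathbb{F}_q$: for $T\neq1$ one gets $z=1+(b+N)/(T+1)$, and expressing $x,x^q$ as Frobenius twists of $z$ turns $\mathrm{Tr}(x)=T$ into the Gold‑perturbed equation $T^{2^{(m+1)/2}}(T+1)=\beta$ and $x^{q+1}=N$ into a linearized equation in $N$ of $2$-degree $2^{(m+1)/2}$. The substitution $S=T+1$ recasts the $T$-equation as $S^{2^{(m+1)/2}+1}+S=\beta$, and because $\gcd\!\big(\tfrac{m+1}{2},m\big)=1$ the monomial $x^{2^{(m+1)/2}+1}$ is an APN (Gold) exponent over $\mathbb{F}_q$, which bounds the number of admissible $T$ and hence the whole count, reducing again to a ``no conspiracy'' estimate of the above type. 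For the second assertion — that $0,2,4,6,8$ all occur — I would exhibit explicit witnesses: $b=0$ gives $\delta(0)=0$ since the surviving condition becomes $N^4+N+1=0$ (resp.\ $N^{2^{(m+1)/2}}+N+1=0$), which has no root in $\mathbb{F}_q$ for $m$ odd ($N^4+N+1$ is irreducible over $\mathbb{F}_2$ and $4\nmid m$; resp.\ $1$ is not in the image of the trace‑zero‑valued map $N\mapsto N^{2^{(m+1)/2}}+N$ because $\mathrm{Tr}_{\mathbb{F}_q/\mathbb{F}_2}(1)=1$). Choosing $\beta\neq0$ so that the cubic (resp.\ Gold‑perturbed equation) has a single root, of trace $0$, and then tuning $b^{q+1}$ yields $\delta(b)=2$; a $\beta$ admitting a trace‑$1$ root gives $4$; and $6$ and $8$ come from forcing three roots and tuning $b^{q+1}$ so that all three affine equations are solvable — with kernel dimension $1$ at each to get $2+2+2$, or dimension $2$ at one (or two) to get $4+2$ (or $4+4$). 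The principal obstacles I foresee are (i) the finite but delicate step pinning the uniformity at exactly $8$ rather than $10$; (ii) producing a $b$ with $\delta(b)$ exactly $6$, which needs three simultaneous solvability conditions to cooperate; and (iii) making everything uniform in $m$, small $m$ being genuinely exceptional since there the exponents reduce modulo $2^n-1$ to low‑uniformity Gold exponents.
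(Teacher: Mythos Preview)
The paper does not prove Conjecture~\ref{conjecture2}. What it proves (Section~\ref{section6}) is only the weaker bound $\delta(F_d)\le 10$, and the concluding section says explicitly that ruling out $\delta(b)=10$ ``seems not easy.'' Measured against that, your proposal reproduces exactly what the paper achieves and honestly flags the same obstruction; it does not close it.

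Your route to $\delta\le 10$ is essentially the paper's. Both set $y=T=x+x^{q}$ and reduce the derivative equation to a cubic over $\mathbb{F}_{q}$: your $T^{3}+T=\beta$ is the paper's $y^{3}+y+a=0$ for $d=2^{m+1}+3$, and for $d=2^{m}+2^{(m+1)/2}+1$ your relation $T^{2^{r}}(T+1)=\beta$, combined with its own $2^{r}$-th power and $T^{2^{m+1}}=T^{2}$ in $\mathbb{F}_{q}$, collapses to the paper's cubic $y^{3}+(a+1)y^{2}+a^{2^{r}}y+a^{2^{r}}=0$, so the Gold/APN digression is unnecessary --- you get at most three admissible $T$ directly. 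You then parametrise the remaining constraint by $N=x^{q+1}$ and factor the linearised part of the affine quartic in $N$; the paper instead keeps $x$ as the variable and invokes a quartic-in-$x$ lemma (its Lemma~\ref{lem3}). These bookkeepings are equivalent, and the final trace argument --- $\sum_{i}T_{i}=0$ forces an even number of roots with $\mathrm{Tr}_{m}(T_{i})=1$, hence at most two $T_{i}$ can contribute $4$ --- is identical in both.

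The genuine gap is the one you name yourself: pushing $10$ down to $8$. Your proposed mechanism (a relation among the constants $b^{q+1}+(T_{i}+1)^{5}$ forced by the symmetric functions of the $T_{i}$) is not carried out, and nothing in it visibly obstructs the $4{+}4{+}2$ pattern. What actually controls the count is whether the \emph{inhomogeneous} quartic in $N$ is solvable at each $T_{i}$, not merely its kernel size; the paper's worked example ($n=10$, where the crude tally permits $10$ but the true value is $\delta(b)=6$, with contributions $0{+}4{+}2$ from the three $T_{i}$) shows this solvability behaves erratically. So the conjecture remains open after your proposal just as it does after the paper. Your sketch that each of $0,2,4,6,8$ occurs in the spectrum goes beyond anything the paper attempts; the $\delta(0)=0$ computation is correct and matches the paper's subfield analysis, but producing witnesses for $6$ and $8$ again requires simultaneous control of solvability at several $T_{i}$, which is the same uncracked problem.
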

To determine the exact value of differential uniformity is a
difficult problem. In this paper, we make some progress towards Conjecture~\ref{conjecture2}, and prove that the differential uniformities of these
polynomials are at most $10$.

This paper is organized as follows. In Section~\ref{preliminaries}, we introduce some basic notations and related results. In Section~\ref{monomial}, four classes of monomial CPPs are given. In Section~\ref{trinomial cpp}, we give a class of trinomial CPPs. Two classes of PPs are presented in Section~\ref{trinomial pp}. Section~\ref{section6} investigates the differential properties of monomial PPs. Section~\ref{conclusion} concludes the paper.

\section{Preliminaries}\label{preliminaries}
The following notations are fixed throughout this paper.
\begin{itemize}
  \item Let $p$ be a prime, $n$ be an integer, and $\mathbb{F}_{p^{n}}$ be the finite field of order $p^{n}$.
  \item Let $\textup{Tr}_{r}^{n}\ :\ \mathbb{F}_{p^{n}}\mapsto\mathbb{F}_{p^{r}}$ be the trace mapping defined by
  $$\textup{Tr}_{r}^{n}(x)=x+x^{p^{r}}+x^{p^{2r}}+\cdots+x^{p^{n-r}},$$
  where $r|n$. For $r=1$, we get the absolute trace function mapping onto the prime field $\mathbb{F}_{p}$, which is denoted by $\textup{Tr}_{n}$.
    \item Let $\textup{N}_{r}^{n}\ :\ \mathbb{F}_{p^{n}}\mapsto\mathbb{F}_{p^{r}}$ be the norm mapping defined by
  $$\textup{N}_{r}^{n}(x)=xx^{p^{r}}x^{p^{2r}}\cdots x^{p^{n-r}},$$
  where $r|n$. For $r=1$, we get the absolute norm function mapping onto the prime field $\mathbb{F}_{p}$, which is denoted by $\textup{N}_{n}$.
  \item Let $\zeta_{p}=\textup{exp}(2\pi\sqrt{-1}/p)$ be a $p$-th root of unity, and $\chi_{n}(x)=\zeta_{p}^{\textup{Tr}_{n}(x)}$ be the canonical additive character on $\mathbb{F}_{p^{n}}$.
\end{itemize}

We first recall a criterion for PPs which can be given by using additive characters over the underlying finite field.

\begin{lemma}{\rm\cite{LN83}}
A mapping $f:\mathbb{F}_{p^{n}}\mapsto\mathbb{F}_{p^{n}}$ is a PP if and only if for every $\alpha\in\mathbb{F}_{p^{n}}^{*}$,
$$\sum_{x\in\mathbb{F}_{p^{n}}}\chi_{n}(\alpha f(x))=0.$$
\end{lemma}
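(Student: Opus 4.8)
The plan is to rely on the orthogonality relations for the additive characters of the group $(\mathbb{F}_{p^{n}},+)$, using that the canonical additive character $\chi_{n}$ is nontrivial (because $\textup{Tr}_{n}$ maps onto $\mathbb{F}_{p}$), so that $\sum_{z\in\mathbb{F}_{p^{n}}}\chi_{n}(z)=0$, whereas $\sum_{x\in\mathbb{F}_{p^{n}}}\chi_{n}(0)=p^{n}$. Both implications then reduce to bookkeeping with these two facts.

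For the forward implication, suppose $f$ is a PP. Then as $x$ runs over $\mathbb{F}_{p^{n}}$ so does $f(x)$, and for a fixed $\alpha\in\mathbb{F}_{p^{n}}^{*}$ the map $y\mapsto\alpha y$ is a bijection of $\mathbb{F}_{p^{n}}$; hence
$$\sum_{x\in\mathbb{F}_{p^{n}}}\chi_{n}(\alpha f(x))=\sum_{y\in\mathbb{F}_{p^{n}}}\chi_{n}(\alpha y)=\sum_{z\in\mathbb{F}_{p^{n}}}\chi_{n}(z)=0.$$

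For the converse, assume the displayed sum vanishes for every $\alpha\in\mathbb{F}_{p^{n}}^{*}$, and for $b\in\mathbb{F}_{p^{n}}$ put $N(b)=\abs{\{x\in\mathbb{F}_{p^{n}}:f(x)=b\}}$. Using that $\frac{1}{p^{n}}\sum_{\alpha\in\mathbb{F}_{p^{n}}}\chi_{n}(\alpha c)$ equals $1$ when $c=0$ and $0$ otherwise, I would compute
$$N(b)=\sum_{x\in\mathbb{F}_{p^{n}}}\frac{1}{p^{n}}\sum_{\alpha\in\mathbb{F}_{p^{n}}}\chi_{n}\bigl(\alpha(f(x)-b)\bigr)=\frac{1}{p^{n}}\sum_{\alpha\in\mathbb{F}_{p^{n}}}\chi_{n}(-\alpha b)\sum_{x\in\mathbb{F}_{p^{n}}}\chi_{n}(\alpha f(x)),$$
after interchanging the two finite summations. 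The term $\alpha=0$ contributes $\frac{1}{p^{n}}\cdot p^{n}=1$, and every term with $\alpha\neq0$ vanishes by hypothesis; thus $N(b)=1$ for all $b\in\mathbb{F}_{p^{n}}$, so $f$ is a bijection and hence a PP. There is no genuine obstacle in this argument: the only points meriting a line of justification are the nontriviality of $\chi_{n}$ and the legitimacy of swapping the two finite sums, both of which are immediate.
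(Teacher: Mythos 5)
Your proof is correct and is the standard orthogonality argument for this criterion (it is essentially the proof given in the cited reference, Lidl--Niederreiter); the paper itself states this lemma with only a citation and supplies no proof, so there is nothing further to compare. Both directions are handled properly: the forward direction by reindexing the sum over a nontrivial character, and the converse by the character-sum expression for the number of preimages $N(b)$.
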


Let $n,r,k$ be integers such that $n=rk$. For any $a\in
\mathbb{F}_{p^n}$, let $a_i=a^{p^{ik}}$, where $0\leq i\leq r-1$. Define
$$h_{a}(x)=x\prod_{i=0}^{r-1}(x+a_i).$$
Then we have the following lemma.

\begin{lemma}{\rm\label{hx}\cite{WLHZ}}
Let $n=rk$, and $d=\frac{p^{rk}-1}{p^{k}-1}+1$. Then $x^{d}+ax\in\mathbb{F}_{p^{n}}[x]$ is a PP over $\mathbb{F}_{p^{n}}$ if and only if $h_{a}(x)\in\mathbb{F}_{p^{k}}[x]$ is a PP over $\mathbb{F}_{p^{k}}$.
\end{lemma}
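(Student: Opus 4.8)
The plan is to realise $f(x):=x^{d}+ax$ as a lift of $h_a$ along the norm map $\textup{N}_{k}^{n}\colon\mathbb{F}_{p^{n}}\to\mathbb{F}_{p^{k}}$ and then apply the AGW criterion \cite{AGW11}. Write $q=p^{k}$. The case $r=1$ is immediate, since then $n=k$, $d=2$, and $h_a(x)=x(x+a)=x^{2}+ax=f(x)$; so assume $r\geq 2$. The reformulation that makes everything go is this: because $d-1=\frac{p^{rk}-1}{p^{k}-1}=1+q+\cdots+q^{r-1}$ is exactly the exponent defining $\textup{N}_{k}^{n}$, we have $x^{d}=x\cdot\textup{N}_{k}^{n}(x)$, hence
$$f(x)=x\bigl(\textup{N}_{k}^{n}(x)+a\bigr).$$
On the other hand, the Frobenius $y\mapsto y^{q}$ cyclically permutes $a_0,\dots,a_{r-1}$, so $\prod_{i=0}^{r-1}(x+a_i)$ and hence $h_a$ lie in $\mathbb{F}_{q}[x]$, and for $y\in\mathbb{F}_{q}$ we have $\prod_{i=0}^{r-1}(y+a_i)=\prod_{i=0}^{r-1}(y+a)^{q^{i}}=\textup{N}_{k}^{n}(y+a)$, so $h_a$ acts on $\mathbb{F}_{q}$ by $y\mapsto y\,\textup{N}_{k}^{n}(y+a)$.

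The key step is the commuting square $\textup{N}_{k}^{n}\circ f=h_a\circ\textup{N}_{k}^{n}$ on $\mathbb{F}_{p^{n}}$. It holds at $0$, and for $x\neq 0$, setting $s=\textup{N}_{k}^{n}(x)\in\mathbb{F}_{q}^{*}$ and using multiplicativity of the norm,
$$\textup{N}_{k}^{n}\bigl(f(x)\bigr)=\textup{N}_{k}^{n}(x)\cdot\textup{N}_{k}^{n}\bigl(\textup{N}_{k}^{n}(x)+a\bigr)=s\prod_{i=0}^{r-1}(s+a_i)=h_a(s)=h_a\bigl(\textup{N}_{k}^{n}(x)\bigr).$$
Since $\textup{N}_{k}^{n}$ is surjective (it is the $\tfrac{p^{n}-1}{p^{k}-1}$-th power map on $\mathbb{F}_{p^{n}}^{*}$, together with $0\mapsto 0$), the AGW criterion applied with $\lambda=\bar\lambda=\textup{N}_{k}^{n}$ and $\bar f=h_a$ reduces the claim to the following: $f$ is a PP of $\mathbb{F}_{p^{n}}$ if and only if $h_a$ is a PP of $\mathbb{F}_{q}$ \emph{and} $f$ is injective on each fibre $(\textup{N}_{k}^{n})^{-1}(s)$, $s\in\mathbb{F}_{q}$.

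It remains to settle the fibre condition, and here the factorisation $f(x)=x(\textup{N}_{k}^{n}(x)+a)$ does the work: on the fibre over $s$, $f$ is the scalar map $x\mapsto(s+a)x$. The fibre over $0$ is $\{0\}$, and each fibre over $s\neq 0$ is a coset of $\ker\textup{N}_{k}^{n}$ of size $\frac{p^{n}-1}{p^{k}-1}>1$; so $f$ is injective on every fibre exactly when $s+a\neq 0$ for all $s\in\mathbb{F}_{q}^{*}$, i.e.\ exactly when $a\notin\mathbb{F}_{q}^{*}$. If $a\notin\mathbb{F}_{q}^{*}$ the fibre condition is free and the equivalence follows from the square. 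If $a\in\mathbb{F}_{q}^{*}$, then $f$ sends the whole fibre over $-a$ to $0$, hence $f$ is not a PP; but then also $a_i=a$ for all $i$, so $h_a(x)=x(x+a)^{r}$ with $h_a(0)=h_a(-a)=0$ and $-a\neq 0$, hence $h_a$ is not a PP either, and the two statements fail together. I expect the genuinely delicate points to be the verification of the commuting square (which rests on $x^{d}=x\,\textup{N}_{k}^{n}(x)$ and on multiplicativity of the norm) and the treatment of this degenerate case $a\in\mathbb{F}_{q}^{*}$; everything else is formal once $f$ is written as $x(\textup{N}_{k}^{n}(x)+a)$. Alternatively, one can bypass AGW and obtain both implications directly from the same square, using that the image under $f$ of the fibre over $s$ is a $\ker\textup{N}_{k}^{n}$-coset on which $\textup{N}_{k}^{n}$ is constantly $h_a(s)$.
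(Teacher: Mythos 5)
Your proposal is correct, but note that the paper does not prove this lemma at all: it is quoted verbatim from Wu et al.\ \cite{WLHZ}, so there is no in-paper argument to compare against. What you have written is a complete, self-contained proof, and it fits naturally into the paper's own toolkit, since the AGW criterion is already stated as Lemma~\ref{awg} and is used in exactly this style in the proof of Lemma~\ref{clem}. The key identities all check out: $d-1=1+p^{k}+\cdots+p^{(r-1)k}$ gives $x^{d}=x\,\textup{N}_{k}^{n}(x)$ and hence $f(x)=x(\textup{N}_{k}^{n}(x)+a)$; multiplicativity of the norm yields the commuting square $\textup{N}_{k}^{n}\circ f=h_{a}\circ\textup{N}_{k}^{n}$ with $h_{a}(y)=y\,\textup{N}_{k}^{n}(y+a)$ for $y\in\mathbb{F}_{p^{k}}$; and since $f$ acts on the fibre over $s$ as multiplication by $s+a$, the fibre-injectivity condition in AGW is automatic unless $a\in\mathbb{F}_{p^{k}}^{*}$, in which case you correctly observe that $f$ collapses the fibre over $-a$ to $0$ while $h_{a}(x)=x(x+a)^{r}$ vanishes at both $0$ and $-a$, so the equivalence holds with both sides false. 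The only cosmetic remark is that the degenerate case never arises in the application (there $a^{p^{k}-1}=-1$ with $p$ odd forces $a\notin\mathbb{F}_{p^{k}}$), but handling it is necessary for the lemma as stated, and you do so correctly.
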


The following lemmas will also be needed in the following sections.
\begin{lemma}{\rm{\cite{AGW11}}$(${\bf AGW Criterion}$)$}\label{awg}
Let $A, S$ and $\overline{S}$ be finite sets with
$\#S=\#\overline{S}$, and let $f:A\rightarrow A,$ $h:S\rightarrow
\overline{S},$ $\lambda:A\rightarrow S,$ and
$\overline{\lambda}:A\rightarrow \overline{S}$ be maps such that
$\overline{\lambda}\circ f=h\circ\lambda.$ If both $\lambda$ and
$\overline{\lambda}$ are surjective, then the following statements
are equivalent:
\begin{enumerate}
  \item $f$ is bijective;
  \item $h$ is bijective from $S$ to $\overline{S}$ and $f$ is injective on $\lambda^{-1}(s)$ for each $s\in S$.
\end{enumerate}
\end{lemma}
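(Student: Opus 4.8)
The plan is to prove the equivalence by establishing the two implications $(1)\Rightarrow(2)$ and $(2)\Rightarrow(1)$ separately; both are short diagram chases through the hypothesis $\overline{\lambda}\circ f=h\circ\lambda$, combined with the finiteness of the sets involved to upgrade one-sided maps to bijections.

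For $(1)\Rightarrow(2)$, assume $f$ is bijective. Then $f$ is injective on the whole of $A$, hence a fortiori injective on each fiber $\lambda^{-1}(s)$, which is half of statement $(2)$. For the other half I would show $h$ is surjective: given any $\overline{s}\in\overline{S}$, surjectivity of $\overline{\lambda}$ yields $a\in A$ with $\overline{\lambda}(a)=\overline{s}$, and surjectivity of $f$ yields $b\in A$ with $f(b)=a$; the commuting relation then gives $h(\lambda(b))=\overline{\lambda}(f(b))=\overline{\lambda}(a)=\overline{s}$. Thus $h$ maps $S$ onto $\overline{S}$, and since $\#S=\#\overline{S}<\infty$, $h$ is bijective.

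For $(2)\Rightarrow(1)$, assume $h$ is bijective and $f$ is injective on every $\lambda^{-1}(s)$. I would prove $f$ is injective on $A$: if $f(a)=f(a')$, apply $\overline{\lambda}$ and use $\overline{\lambda}\circ f=h\circ\lambda$ to get $h(\lambda(a))=h(\lambda(a'))$; injectivity of $h$ then forces $\lambda(a)=\lambda(a')=:s$, so $a$ and $a'$ both lie in $\lambda^{-1}(s)$, on which $f$ is injective, whence $a=a'$. Since $A$ is finite and $f:A\to A$, injectivity gives bijectivity, so $f$ is bijective.

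There is no genuine obstacle here; the only point that deserves care is the role of the finiteness hypotheses. In $(1)\Rightarrow(2)$ one should prove surjectivity of $h$ (not injectivity) and invoke $\#S=\#\overline{S}$, and in $(2)\Rightarrow(1)$ one should prove injectivity of $f$ and invoke $\#A<\infty$ rather than trying to verify surjectivity of $f$ directly, since the naive attempt at the latter (pull $\overline{\lambda}(a)$ back through $h$ and then through $\lambda$) does not by itself produce a preimage of $a$ under $f$.
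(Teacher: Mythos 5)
Your proof is correct: both directions are sound, and you correctly identify where each finiteness hypothesis is needed (using $\#S=\#\overline{S}$ to upgrade surjectivity of $h$ to bijectivity in $(1)\Rightarrow(2)$, and $\#A<\infty$ to upgrade injectivity of $f$ to bijectivity in $(2)\Rightarrow(1)$). The paper itself gives no proof of this lemma --- it is quoted from Akbary, Ghioca and Wang \cite{AGW11} --- and your argument is essentially the standard one found there, so there is nothing to compare against beyond noting that your diagram chase is the expected route.
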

\begin{lemma}\label{cpplem1}{\rm\cite{LN83}}
Let $p$ be an odd prime, and let $m,k$ be positive integers such
that $\frac{m}{\textup{gcd}(m,k)}$ is odd. Then $x^{p^k}+x$ is a
permutation on $\mathbb{F}_{p^m}$.
\end{lemma}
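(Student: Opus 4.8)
The plan is to exploit the fact that $L(x)=x^{p^k}+x$ is a \emph{linearized polynomial}. Since $(x+y)^{p^k}=x^{p^k}+y^{p^k}$ and $c^{p^k}=c$ for every $c\in\mathbb{F}_p$, the map $L$ is $\mathbb{F}_p$-linear on $\mathbb{F}_{p^m}$, regarded as an $m$-dimensional vector space over $\mathbb{F}_p$. A linear endomorphism of a finite-dimensional space is bijective iff its kernel is trivial, so $L$ is a permutation of $\mathbb{F}_{p^m}$ iff the only $x\in\mathbb{F}_{p^m}$ with $x^{p^k}=-x$ is $x=0$. Everything then reduces to proving this last assertion.

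The key step is to iterate the relation $x^{p^k}=-x$. Applying the $p^k$-th power map once more gives
$$x^{p^{2k}}=\bigl(x^{p^k}\bigr)^{p^k}=(-x)^{p^k}=-x^{p^k}=x,$$
so any such $x$ lies in the fixed field of $y\mapsto y^{p^{2k}}$, whence $x\in\mathbb{F}_{p^{2k}}\cap\mathbb{F}_{p^m}=\mathbb{F}_{p^{\textup{gcd}(2k,m)}}$. Next I would invoke the parity hypothesis through the elementary identity $\textup{gcd}(2k,m)=\textup{gcd}(k,m)$: writing $d=\textup{gcd}(k,m)$, $k=dk'$, $m=dm'$ with $\textup{gcd}(k',m')=1$, one gets $\textup{gcd}(2k,m)=d\cdot\textup{gcd}(2k',m')=d\cdot\textup{gcd}(2,m')$, and since $m'=m/\textup{gcd}(k,m)$ is odd this equals $d$. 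Hence $x\in\mathbb{F}_{p^{\textup{gcd}(k,m)}}$; as $\textup{gcd}(k,m)\mid k$, this element satisfies $x^{p^k}=x$. Combining with $x^{p^k}=-x$ yields $2x=0$, and since $p$ is odd we conclude $x=0$. Therefore $\ker L=\{0\}$ and $L$ is a permutation of $\mathbb{F}_{p^m}$.

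There is no serious obstacle here; the proof is short. The only points requiring care are that the oddness of $p$ is genuinely used twice — first so that $-x^{p^k}\neq x^{p^k}$ is not vacuous in deriving $x^{p^{2k}}=x$, and again in passing from $2x=0$ to $x=0$ — and the elementary gcd computation, which is exactly where the hypothesis ``$m/\textup{gcd}(m,k)$ is odd'' enters. As an alternative route, one could instead apply the additive-character criterion recalled above: using trace-invariance under Frobenius, $\textup{Tr}_{m}(\alpha(x^{p^k}+x))=\textup{Tr}_{m}((\alpha^{p^{m-k}}+\alpha)x)$, so $\sum_{x}\chi_m(\alpha(x^{p^k}+x))$ vanishes for all $\alpha\in\mathbb{F}_{p^m}^{*}$ iff $\alpha^{p^{m-k}}+\alpha\neq0$ for all such $\alpha$; since $\textup{gcd}(m-k,m)=\textup{gcd}(k,m)$, this is the same statement and the same short argument settles it.
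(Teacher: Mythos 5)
Your proof is correct and complete. Note that the paper itself offers no proof of this lemma --- it is quoted from Lidl and Niederreiter \cite{LN83} --- so there is nothing to compare against; your argument (reduce to triviality of the kernel of the linearized polynomial $x^{p^k}+x$, iterate $x^{p^k}=-x$ to get $x\in\mathbb{F}_{p^{\gcd(2k,m)}}=\mathbb{F}_{p^{\gcd(k,m)}}$ using the oddness of $m/\gcd(m,k)$, then conclude $2x=0$) is the standard one and every step checks out. The only cosmetic quibble is your description of the first use of the oddness of $p$: what is actually needed there is simply that $(-1)^{p^k}=-1$, i.e.\ $(-x)^{p^k}=-x^{p^k}$, which your computation does use correctly.
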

\begin{lemma}\label{lempp1}{\rm\cite{LN83}}
An irreducible polynomial over $\mathbb{F}_q$ of degree $n$ remains irreducible over $\mathbb{F}_{q^k}$ if and only if $\textup{gcd}(k,n)=1$.
\end{lemma}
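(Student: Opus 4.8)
The plan is to fall back on the standard theory of finite-field extensions. Fix a root $\alpha$ of $f$ in an algebraic closure $\overline{\mathbb{F}_q}$. Since $f$ is irreducible of degree $n$ over $\mathbb{F}_q$, we have $\mathbb{F}_q(\alpha)=\mathbb{F}_{q^n}$, and the factorization of $f$ over $\mathbb{F}_{q^k}$ is governed by the minimal polynomial $g$ of $\alpha$ over $\mathbb{F}_{q^k}$. The key structural observation is that all irreducible factors of $f$ in $\mathbb{F}_{q^k}[x]$ have the same degree: the Frobenius $\sigma:x\mapsto x^q$ acts transitively on the roots of $f$ and commutes with the Frobenius $x\mapsto x^{q^k}$ of $\mathbb{F}_{q^k}$, so $\sigma$ carries any irreducible factor of $f$ over $\mathbb{F}_{q^k}$ onto another one of equal degree. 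Consequently $f$ is irreducible over $\mathbb{F}_{q^k}$ if and only if $\deg g=n$.

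Next I would compute $\deg g=[\mathbb{F}_{q^k}(\alpha):\mathbb{F}_{q^k}]$. The field $\mathbb{F}_{q^k}(\alpha)$ is the compositum $\mathbb{F}_{q^k}\cdot\mathbb{F}_{q^n}$ inside $\overline{\mathbb{F}_q}$, and using that $\mathbb{F}_{q^a}\subseteq\mathbb{F}_{q^b}$ exactly when $a\mid b$, this compositum is $\mathbb{F}_{q^{\operatorname{lcm}(k,n)}}$. Hence $[\mathbb{F}_{q^k}(\alpha):\mathbb{F}_{q^k}]=\operatorname{lcm}(k,n)/k=n/\gcd(k,n)$. Combining with the previous paragraph, $f$ remains irreducible over $\mathbb{F}_{q^k}$ if and only if $n/\gcd(k,n)=n$, i.e.\ $\gcd(k,n)=1$. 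For the contrapositive of the ``if'' direction made concrete: if $e:=\gcd(k,n)>1$ then $\deg g=n/e<n$, so $g$ is a non-constant proper divisor of $f$ in $\mathbb{F}_{q^k}[x]$ and $f$ is reducible there; conversely, when $\gcd(k,n)=1$ we get $\deg g=n=\deg f$, so $f$ equals $g$ up to a scalar and is irreducible.

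The only step that requires care is the claim that the irreducible factors of $f$ over $\mathbb{F}_{q^k}$ all share a common degree; everything else is a routine degree count once the compositum identity $\mathbb{F}_{q^k}\cdot\mathbb{F}_{q^n}=\mathbb{F}_{q^{\operatorname{lcm}(k,n)}}$ is in place. An alternative that bypasses this point entirely is the divisibility criterion: an irreducible polynomial of degree $d$ over a field with $Q$ elements divides $x^{Q^m}-x$ precisely when $d\mid m$. Applying this with $Q=q^k$ and using that $f$ has coefficients in $\mathbb{F}_q$ (so $f\mid x^{q^{km}}-x$ iff $n\mid km$), one sees directly that every irreducible factor of $f$ over $\mathbb{F}_{q^k}$ has degree $n/\gcd(k,n)$, yielding the lemma immediately.
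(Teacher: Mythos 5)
Your proof is correct. The paper offers no argument for this lemma at all---it is quoted as a known result from Lidl and Niederreiter \cite{LN83} (Theorem 3.46 there)---and your argument is essentially the standard textbook one: the degree of a root $\alpha$ of $f$ over $\mathbb{F}_{q^k}$ is $[\mathbb{F}_{q^k}(\alpha):\mathbb{F}_{q^k}]=[\mathbb{F}_{q^{\operatorname{lcm}(k,n)}}:\mathbb{F}_{q^k}]=n/\gcd(k,n)$, which equals $n$ exactly when $\gcd(k,n)=1$. One small caveat: your closing ``alternative'' via the divisibility criterion only gives directly that the \emph{least common multiple} of the factor degrees is $n/\gcd(k,n)$, not that each factor has that degree, so as stated it does not quite bypass the equal-degree observation; but this is immaterial since your main argument is complete on its own.
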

\section{Four classes of monomial CPPs}\label{monomial}

\subsection{The first class of monomial CPPs}
In this subsection, we will prove the conjecture of Wu et al. \cite{WLHZ}. Before proving it, we establish the following useful lemma.
\begin{lemma}\label{clem}
Let $p$ be an odd prime and $k$ be a positive integer. Then $f(x)=x(x^2-c)^{\frac{p-1}{2}}$ is a PP over $\mathbb{F}_{p^{k}}$, where $c$ is a non-square element in $\mathbb{F}_{p^{k}}$.
\end{lemma}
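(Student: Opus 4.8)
The plan is to show that $f$ is injective on $\mathbb{F}_{p^{k}}$, which, the field being finite, already gives that $f$ is a PP. Write $q=p^{k}$. Since $c$ is a non‑square, $x^{2}=c$ has no solution in $\mathbb{F}_{q}$, so $x^{2}-c\in\mathbb{F}_{q}^{*}$ for every $x\in\mathbb{F}_{q}$; hence $f$ is a genuine polynomial map, $f(0)=0$, $f(x)\neq 0$ for $x\neq 0$, and $f(-x)=-f(x)$. The key observation is that the identity $f(a)=f(b)$, once squared, becomes a statement about the \emph{squares} $a^{2},b^{2}$, namely $a^{2}(a^{2}-c)^{p-1}=b^{2}(b^{2}-c)^{p-1}$. (Equivalently, one could organise this through the AGW criterion (Lemma~\ref{awg}) with $\lambda=\overline{\lambda}\colon x\mapsto x^{2}$ and $h(s)=s(s-c)^{p-1}$, since $\overline{\lambda}(f(x))=f(x)^{2}=\lambda(x)\,(\lambda(x)-c)^{p-1}$; there the fibre condition is immediate from $f(-x)=-f(x)$, and the whole content is that $h$ permutes the set of squares of $\mathbb{F}_{q}$.)

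So I would argue as follows. Suppose $f(a)=f(b)$. If $a=0$ then $b(b^{2}-c)^{(p-1)/2}=0$ forces $b=0$, so assume $a,b\neq 0$. Squaring gives $a^{2}(a^{2}-c)^{p-1}=b^{2}(b^{2}-c)^{p-1}$. Put $r=\dfrac{a^{2}-c}{b^{2}-c}\in\mathbb{F}_{q}^{*}$. Then $(a^{2}-c)^{p-1}=r^{p-1}(b^{2}-c)^{p-1}$ turns the relation into $a^{2}r^{p-1}=b^{2}$, while $a^{2}-c=r(b^{2}-c)$ gives $a^{2}=rb^{2}+c(1-r)$. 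Eliminating $b^{2}$ between these two yields $a^{2}(1-r^{p})=c(1-r)$, and since we are in characteristic $p$ we have $1-r^{p}=(1-r)^{p}$, so $a^{2}(1-r)^{p}=c(1-r)$. If $r\neq 1$ we may cancel $1-r$ and obtain $a^{2}(1-r)^{p-1}=c$, that is
$$\frac{c}{a^{2}}=(1-r)^{-(p-1)}=\Big((1-r)^{-(p-1)/2}\Big)^{2},$$
a nonzero square of $\mathbb{F}_{q}$; since $a^{2}$ is a nonzero square this would force $c$ to be a square, contradicting the hypothesis. Hence $r=1$, i.e. $a^{2}-c=b^{2}-c$, i.e. $a^{2}=b^{2}$.

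Finally, $a^{2}=b^{2}$ gives $b=\pm a$; if $b=-a$ then $f(b)=f(-a)=-f(a)$, so $f(a)=f(b)$ would give $2f(a)=0$, hence $f(a)=0$ (the characteristic is odd), impossible since $a\neq 0$. Thus $b=a$, so $f$ is injective and therefore a PP. I expect the only genuinely clever point to be the introduction of the ratio $r$ together with the characteristic‑$p$ identity $1-r^{p}=(1-r)^{p}$: that is what collapses the two consequences of $f(a)=f(b)$ into the single relation $a^{2}(1-r)^{p-1}=c$, after which the non‑square hypothesis on $c$ finishes things at once. The only routine care required is in the degenerate cases ($a=0$, $r=1$, and the sign ambiguity $b=-a$) and in first noting that $x^{2}-c$ never vanishes, so that $f$ — and the cancellations above — make sense.
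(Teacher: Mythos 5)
Your proof is correct. There is one harmless slip: from $a^{2}(1-r)^{p-1}=c$ you should get $c/a^{2}=(1-r)^{p-1}$, not $(1-r)^{-(p-1)}$; since both are squares (the exponent $\pm(p-1)$ is even) the contradiction goes through unchanged.

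Structurally your argument parallels the paper's: both first observe that $x^{2}-c$ never vanishes so $f(x)=0$ only at $x=0$, both use the oddness $f(-x)=-f(x)$ to dispose of the sign ambiguity on each fibre of $x\mapsto x^{2}$, and both reduce the problem to injectivity of the induced map on squares. Where you genuinely diverge is in how that last injectivity is proved. The paper runs the AGW criterion with $\lambda(x)=x^{2}-c$, $\overline{\lambda}(x)=x^{2}$, $h(s)=(s+c)s^{p-1}$, and proves $h$ injective by the substitution $y=1/x$, which turns $x^{p}+cx^{p-1}=b$ into the Artin--Schreier-type equation $y^{p}-\tfrac{c}{b}y-\tfrac{1}{b}=0$; two distinct roots would differ by a root of $y^{p-1}=\tfrac{c}{b}$, forcing the non-square $\tfrac{c}{b}$ to be a $(p-1)$-st power, a contradiction. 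You instead work multiplicatively: introducing $r=\tfrac{a^{2}-c}{b^{2}-c}$ and using $1-r^{p}=(1-r)^{p}$ collapses the two relations into $a^{2}(1-r)^{p-1}=c$, and the non-squareness of $c$ finishes. The two computations are different routes to the same punchline (``$c$ equals a square times a $(p-1)$-st power''); yours has the advantage of being entirely self-contained (no appeal to AGW and no passage through the reciprocal substitution), while the paper's version isolates the reusable statement that $h$ permutes the relevant set, which fits the general framework they employ elsewhere.
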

\begin{proof}
We first show that $x=0$ is the only solution to $f(x)=0$. If $f(x)=0$, then $x=0$ or $(x^2-c)^{\frac{p-1}{2}}=0$. If $(x^2-c)^{\frac{p-1}{2}}=0$, then $c=x^2$, which leads to a contradiction since $c$ is a non-square element. Therefore $f(x)=0$ if and only if $x=0$.

Next, we prove that $f(x)=a$ has a unique nonzero solution for each nonzero $a\in\mathbb{F}_{p^{k}}$. Let $\lambda(x)=x^2-c$, $\overline{\lambda}(x)=x^2$ and $h(x)=(x+c)x^{p-1}$. Then it is easy to see that the following diagram commutes:

\begin{center}
$\begin{array}[c]{ccc}
\mathbb{F}_{p^{k}}^{*}&\stackrel{\lambda}{\rightarrow}&\lambda(\mathbb{F}_{p^{k}}^{*})\\
\downarrow\scriptstyle{f}&&\downarrow\scriptstyle{h}\\
\mathbb{F}_{p^{k}}^{*}&\stackrel{\overline{\lambda}}{\rightarrow}&\overline{\lambda}(\mathbb{F}_{p^{k}}^{*}).
\end{array}$
\end{center}
By Lemma \ref{awg}, it suffices to prove that $h$ is bijective and $f$ is injective on $\lambda^{-1}(s)$ for each $s\in \lambda(\mathbb{F}_{p^{k}}^{*})$. Since for each $s\in \lambda(\mathbb{F}_{p^{k}}^{*})$, $\lambda^{-1}(s)=\{\pm(c+s)^{\frac{1}{2}}\}$, and $f((c+s)^{\frac{1}{2}})\neq f(-(c+s)^{\frac{1}{2}})$, it implies that $f$ is injective on $\lambda^{-1}(s)$ for each $s\in \lambda(\mathbb{F}_{p^{k}}^{*})$.

In the following, we will verify that $h$ is bijective. Since $\#\lambda(\mathbb{F}_{p^{k}}^{*})=\#\overline{\lambda}(\mathbb{F}_{p^{k}}^{*})$, we only need to show that $h$ is injective. For any $b\in \overline{\lambda}(\mathbb{F}_{p^{k}}^{*})$, $b$ is a square element in
$\mathbb{F}_{p^k}^{*}$. We assume
\begin{equation}\label{eq:conj1}
x^p+cx^{p-1}=b
\end{equation}
has at least two distinct solutions. Setting $y=\frac{1}{x}$, the equation
\begin{equation}\label{eq:conj2}
y^p-\frac{c}{b}y-\frac{1}{b}=0
\end{equation}
has at least two distinct solutions. Assume $y_1,$ $y_2$ are two distinct solutions of Eq. (\ref{eq:conj2}). Then $y_1-y_2$ is a root of
$y^p-\frac{c}{b}y=0$, and so must be a root of $y^{p-1}-\frac{c}{b}=0.$ It
follows that $\frac{c}{b}=y_0^{p-1}$ for some $y_0\in \mathbb{F}_{p^k}$,
which is impossible since $\frac{c}{b}$ is a non-square element in $\mathbb{F}_{p^k}$.
Hence Eq. (\ref{eq:conj1}) has at most one solution in $\lambda(\mathbb{F}_{p^{k}}^{*}).$
Therefore, $h(x)$ is bijective. This completes the proof.
\end{proof}

Now we can prove the following result, which is a conjecture made by Wu et al. \cite{WLHZ}.

\begin{theorem}[\cite{WLHZ} Conjecture 4.20]
Let $p$ be an odd prime, $r+1=p$ and $d=\frac{p^{rk}-1}{p^{k}-1}+1.$
Then $a^{-1}x^{d}$ is a CPP over $\mathbb{F}_{p^{rk}}$, where $a\in
\mathbb{F}_{p^{rk}}^{*}$ such that $a^{p^{k}-1}=-1$.
\end{theorem}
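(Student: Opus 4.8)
The plan is to use the monomial CPP criterion recalled in the introduction: $a^{-1}x^{d}$ is a CPP over $\mathbb{F}_{p^{rk}}$ if and only if $\gcd(d,p^{rk}-1)=1$ and $a^{-1}x^{d}+x$ is a PP over $\mathbb{F}_{p^{rk}}$. So there are two things to establish.

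For the coprimality, write $q=p^{k}$ and $s=\frac{q^{r}-1}{q-1}=1+q+\cdots+q^{r-1}$, so that $d=s+1$ and $p^{rk}-1=(q-1)s$. Since $\gcd(s,s+1)=1$ we get $\gcd(d,p^{rk}-1)=\gcd(s+1,q-1)$; and because $q\equiv 1\pmod{q-1}$ we have $s\equiv r\pmod{q-1}$, hence $s+1\equiv r+1=p\pmod{q-1}$. Therefore $\gcd(d,p^{rk}-1)=\gcd(p,q-1)=1$, as $q-1=p^{k}-1\equiv-1\pmod p$.

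For the second part, composing with the bijection $y\mapsto ay$ shows that $a^{-1}x^{d}+x$ is a PP if and only if $x^{d}+ax$ is a PP over $\mathbb{F}_{p^{rk}}$; by Lemma~\ref{hx} (applied with $n=rk$) this is in turn equivalent to $h_{a}(x)=x\prod_{i=0}^{r-1}(x+a_{i})$ being a PP over $\mathbb{F}_{p^{k}}$, where $a_{i}=a^{p^{ik}}$. The key simplification comes from the hypothesis $a^{p^{k}-1}=-1$: since $p^{k}$ is odd, one computes $a_{i+1}=(a_{i})^{p^{k}}=-a_{i}$, so $a_{i}=(-1)^{i}a$. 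As $r=p-1$ is even, exactly $(p-1)/2$ of the indices $i\in\{0,\dots,r-1\}$ are even and $(p-1)/2$ are odd, whence
$$h_{a}(x)=x(x+a)^{(p-1)/2}(x-a)^{(p-1)/2}=x\bigl(x^{2}-a^{2}\bigr)^{(p-1)/2}.$$
(Note $a^{p^{2k}}=a$, so $a\in\mathbb{F}_{p^{2k}}$ and $a^{2}\in\mathbb{F}_{p^{k}}$, consistent with $h_{a}\in\mathbb{F}_{p^{k}}[x]$.)

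It then remains to invoke Lemma~\ref{clem} with $c=a^{2}$, for which I must check that $a^{2}$ is a non-square in $\mathbb{F}_{p^{k}}$: if $a^{2}=b^{2}$ for some $b\in\mathbb{F}_{p^{k}}^{*}$ then $a=\pm b\in\mathbb{F}_{p^{k}}$, forcing $a^{p^{k}}=a$, which contradicts $a^{p^{k}}=-a\neq a$ (as $p$ is odd and $a\neq 0$). Hence $a^{2}$ is a non-square, Lemma~\ref{clem} yields that $h_{a}$ is a PP over $\mathbb{F}_{p^{k}}$, and the proof is complete. I expect the only genuine content to be the reduction $h_{a}(x)=x(x^{2}-a^{2})^{(p-1)/2}$ via the alternating pattern of the conjugates $a_{i}$, together with pinning down that $a^{2}$ lies in $\mathbb{F}_{p^{k}}$ and is a non-square there; the coprimality verification is routine bookkeeping.
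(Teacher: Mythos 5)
Your proposal is correct and follows essentially the same route as the paper's proof: reduce via Lemma~\ref{hx} to showing $h_{a}(x)=x(x^{2}-a^{2})^{\frac{p-1}{2}}$ permutes $\mathbb{F}_{p^{k}}$, observe that $a^{2}$ is a non-square of $\mathbb{F}_{p^{k}}$ because $a\notin\mathbb{F}_{p^{k}}$, and conclude by Lemma~\ref{clem}. You merely supply some details (the $\gcd$ computation and the derivation of the product formula for $h_{a}$) that the paper leaves implicit.
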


\begin{proof}
Since $\textup{gcd}(p^{rk}-1,d)=1$, the monomial $x^d$ is a PP over $\mathbb{F}_{p^{rk}}$.

Note that $a^{p^k-1}=-1$. Then $a^{p^k}=-a$ and $(a^2)^{p^k-1}=1$.
By Lemma \ref{hx}, to prove the conjecture we only need to show that
$h_{a}(x)=x(x^2-a^2)^{\frac{p-1}{2}}$ is a PP over
$\mathbb{F}_{p^{k}}$ for any $k$. Let $c=a^2\in\mathbb{F}_{p^k}.$
Then $c$ is a non-square element in $\mathbb{F}_{p^k}$ since
$a\not\in \mathbb{F}_{p^k}$. Hence the result follows from Lemma
\ref{clem}.
\end{proof}
\subsection{The second class of CPPs}
In this subsection, let $p=2$ and $n=6k$ for some integer $k$
satisfying $\textup{gcd}(k,3)=1$, and $\omega$ be a fixed primitive
element of $\mathbb{F}_{2^{6k}}$. We will show that
$d=2^{4k-1}+2^{2k-1}$ is a CPP exponent over $\mathbb{F}_{2^{6k}}$.
We define the following set
\begin{equation}\label{eqn_S}
S:=\{\omega^{t(2^{2k}-1)}|0< t\leq 2^{2k}+2^{4k},\ 3\nmid t\}.
\end{equation}

\begin{lemma}\label{4.1}
For each $a\in S$,  $\textup{Tr}_{2k}^{6k}(a)\neq1$.
\end{lemma}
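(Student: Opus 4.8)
The plan is to argue by contradiction: suppose there exists $a = \omega^{t(2^{2k}-1)} \in S$ with $\textup{Tr}_{2k}^{6k}(a) = 1$, and derive a constraint on $t$ that is incompatible with $3 \nmid t$. The key structural observation is that $a^{2^{2k}-1}$ has order dividing $\frac{2^{6k}-1}{2^{2k}-1} = 2^{4k}+2^{2k}+1$, so $a$ itself lies in the subgroup of order $(2^{2k}-1)(2^{4k}+2^{2k}+1)/\gcd(\cdots)$; more to the point, $a^{2^{2k}-1} \cdot (\text{something})$ — actually the cleaner route is to note $N_{2k}^{6k}(a) = a^{1+2^{2k}+2^{4k}} = \omega^{t(2^{2k}-1)(1+2^{2k}+2^{4k})} = \omega^{t(2^{6k}-1)} = 1$, so every $a \in S$ has norm $1$ over $\mathbb{F}_{2^{2k}}$. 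Thus $a$ ranges over (a subset of) the norm-one subgroup, which has order $\frac{2^{6k}-1}{2^{2k}-1} = 2^{4k}+2^{2k}+1$.

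First I would set $q = 2^{2k}$ and write $b = a$, $b^q = a^q$, $b^{q^2} = a^{q^2}$ for the three conjugates, so that $\textup{Tr}_{2k}^{6k}(a) = b + b^q + b^{q^2}$ and $N_{2k}^{6k}(a) = b^{1+q+q^2} = 1$. The element $b$ satisfies the cubic $X^3 - \textup{Tr}_{2k}^{6k}(a)X^2 + eX - 1 = 0$ over $\mathbb{F}_q$, where $e = b^{1+q} + b^{q+q^2} + b^{1+q^2} = b^q N_{2k}^{6k}(a) \cdot(\text{sum of inverses})$; since the norm is $1$, $e = b^{-1} + b^{-q} + b^{-q^2} = (b+b^q+b^{q^2})^q \cdot$ — wait, more simply $b^{-1} = b^{q+q^2}$ etc., so $e = b^{q+q^2} + b^{1+q^2} + b^{1+q} = (b+b^q+b^{q^2})^{?}$; in fact $e = \textup{Tr}_{2k}^{6k}(a^{-1}) = \textup{Tr}_{2k}^{6k}(a^{q}) $ using $a^{-1} = a^{q+q^2}$ and the Frobenius-invariance of the trace, hence $e = \textup{Tr}_{2k}^{6k}(a)$. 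So if $\textup{Tr}_{2k}^{6k}(a) = 1$ then $b$ is a root of $X^3 + X^2 + X + 1 = (X+1)(X^2+1) = (X+1)^3$ over $\mathbb{F}_{2^{2k}}$ (characteristic $2$), forcing $b = a = 1$.

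Next I would translate $a = 1$ back into a condition on $t$: $a = \omega^{t(2^{2k}-1)} = 1$ means $(2^{6k}-1) \mid t(2^{2k}-1)$, i.e. $(2^{4k}+2^{2k}+1) \mid t$. Since $0 < t \le 2^{2k}+2^{4k} < 2^{4k}+2^{2k}+1$, this is impossible, so no $a \in S$ (in fact no $a$ of the given form, cubic condition aside) can have trace $1$. This actually shows the conclusion without even invoking $3 \nmid t$, which is reassuring. The one place I need to be careful — and what I expect to be the main technical point — is the identity $\textup{Tr}_{2k}^{6k}(a^{-1}) = \textup{Tr}_{2k}^{6k}(a)$ when $N_{2k}^{6k}(a) = 1$: it follows because $a^{-1} = a^{q+q^2}$ (as $a^{1+q+q^2}=1$), and $\textup{Tr}_{2k}^{6k}$ is invariant under $x \mapsto x^q$, so $\textup{Tr}_{2k}^{6k}(a^{-1}) = \textup{Tr}_{2k}^{6k}(a^{q+q^2})$; but $a^{q+q^2} = a^{-1}$ is a single conjugate-sum term only if one is not careful — the correct statement is $\textup{Tr}(a^{q+q^2}) = a^{q+q^2} + a^{q^2+q^3} + a^{q^3+q^4} = a^{q+q^2}+a^{q^2+1}+a^{1+q}$ (using $q^3 = q^0$ on the norm-one element's exponents mod $q^3-1$ only after reducing, which needs $a^{q^3}=a$, true since $a \in \mathbb{F}_{2^{6k}}$), and this equals the second elementary symmetric function $e$ of $b, b^q, b^{q^2}$. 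I should verify that $e = \textup{Tr}_{2k}^{6k}(a)$ directly from $e = \sum_{i<j} b^{q^i+q^j}$ and $b^{1+q+q^2}=1$: indeed $b^{q+q^2} = b^{-1}$, $b^{1+q^2} = b^{-q}$, $b^{1+q} = b^{-q^2}$, so $e = b^{-1}+b^{-q}+b^{-q^2} = \textup{Tr}_{2k}^{6k}(a^{-1})$, and then I reduce this to $\textup{Tr}_{2k}^{6k}(a)$ by the Frobenius argument above. Once the cubic $(X+1)^3$ collapses $a$ to $1$, the rest is the immediate size estimate on $t$.
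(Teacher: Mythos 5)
Your reduction to the characteristic polynomial of $a$ over $\mathbb{F}_{2^{2k}}$ starts out fine: with $q=2^{2k}$, every $a\in S$ satisfies $\textup{N}_{2k}^{6k}(a)=1$, so $a$ is a root of $X^3+e_1X^2+e_2X+1$ with $e_1=\textup{Tr}_{2k}^{6k}(a)$ and $e_2=a^{1+q}+a^{q+q^2}+a^{1+q^2}=\textup{Tr}_{2k}^{6k}(a^{-1})$. The fatal step is the claimed identity $\textup{Tr}_{2k}^{6k}(a^{-1})=\textup{Tr}_{2k}^{6k}(a)$. Frobenius-invariance gives $\textup{Tr}(x^q)=\textup{Tr}(x)$, but $a^{-1}=a^{q+q^2}$ is a \emph{product} of two conjugates of $a$, not a single conjugate, so nothing forces $e_2=e_1$. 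Indeed, if $e_1=e_2$ held for every norm-one element, then every such element would have self-reciprocal characteristic polynomial $X^3+cX^2+cX+1=(X+1)\bigl(X^2+(c+1)X+1\bigr)$ and hence would lie in $\mathbb{F}_{q^3}\cap\mathbb{F}_{q^2}=\mathbb{F}_q$; but the norm-one subgroup has $q^2+q+1$ elements, of which only the three cube roots of unity lie in $\mathbb{F}_q$. Concretely, for $k=1$ the polynomial $X^3+X^2+1$ is irreducible over $\mathbb{F}_4$ (Lemma \ref{lempp1}), so its roots in $\mathbb{F}_{64}$ have norm $1$ and trace $1$ over $\mathbb{F}_4$, while their inverses are roots of $X^3+X+1$ and have trace $0$. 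So the collapse of the cubic to $(X+1)^3$, and with it the conclusion $a=1$, does not happen.

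The symptom you flagged yourself --- that your argument never uses $3\nmid t$ --- is actually the diagnosis: the lemma is false without that hypothesis, so no argument ignoring it can work. In the example above, the roots of $X^3+X^2+1$ are $7$-th roots of unity, i.e.\ cubes inside the norm-one subgroup of order $21$; in the parametrization $a=\omega^{3t}$ they correspond exactly to $3\mid t$, and they have $\textup{Tr}_{2}^{6}(a)=1$. The paper's proof uses the hypothesis at the very start: it writes $a=b^3$ and observes that $3\nmid t$ forces $\eta:=\textup{N}_{2k}^{6k}(b)\in\mathbb{F}_4^*\setminus\{1\}$; it then works with the minimal cubic $B(x)=x^3+B_1x^2+B_2x+\eta$ of $b$ (not of $a$) and shows that $\textup{Tr}_{2k}^{6k}(b^3)=1$ would force the factorization $B(x)=(\eta x+B_1)(\eta^2x^2+B_1x+\frac{\eta}{B_1})$, contradicting irreducibility. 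If you want to salvage your approach, you would need to bring the cube root $b$ and the condition $\eta\neq 1$ into the symmetric-function computation; the characteristic polynomial of $a$ alone does not see the difference between $3\mid t$ and $3\nmid t$.
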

\begin{spacing}{1.2}
\begin{proof}

If $a\in S\bigcap\mathbb{F}_{2^{2k}}$, then $\textup{Tr}_{2k}^{6k}(a)=a\neq1$. So we assume that $a\in S\backslash\mathbb{F}_{2^{2k}}$ below.

Since $3|(2^{2k}-1)$, there exists $b\in \mathbb{F}_{2^{6k}}\backslash\mathbb{F}_{2^{2k}}$ such that $b^{3}=a$.
We observe that $\textup{N}_{2k}^{6k}(a)=1$ by the definition of $S$, so  $\eta:=\textup{N}_{2k}^{6k}(b) \in\mathbb{F}_{4}^*$.
Here, $\eta\neq1$, again by the definition of $S$.

Let $B(x)=x^{3}+B_{1}x^{2}+B_{2}x+B_{3}\in\mathbb{F}_{2^{2k}}[x]$ be the
minimal polynomial of $b$ over $\mathbb{F}_{2^{2k}}$. Then $B(x)$ is
irreducible over $\mathbb{F}_{2^{2k}}$, $B_1=\textup{Tr}_{2k}^{6k}(b)$ and $B_{3}=\eta$.
We can directly verify that
\begin{equation}\label{eqn_tr}
\textup{Tr}_{2k}^{6k}(a)=\textup{Tr}_{2k}^{6k}(b^{3})=B_{1}^{3}+B_{1}B_{2}+B_{3}.
\end{equation}
If $B_{1}=0$, then $\textup{Tr}_{2k}^{6k}(a)=B_{3}=\eta\neq1$, and the claim follows.
We assume that $B_{1}\ne 0$ below. Assume to the contrary that
$\textup{Tr}_{2k}^{6k}(a)=1$. Then Eq. \eqref{eqn_tr} gives that
 $B_{2}=\frac{B_{1}^{3}+\eta^{2}}{B_{1}}$, and we have
 \begin{align*}
B(x)&=x^{3}+B_{1}x^{2}+B_{2}x+B_{3} \\
          &=x^{3}+B_{1}x^{2}+\frac{B_{1}^{3}+\eta^{2}}{B_{1}}x+B_{3} \\
          &=(\eta x+B_{1})(\eta^{2}x^{2}+B_{1}x+\frac{\eta}{B_{1}}),
\end{align*}
contradicting to the fact that $B(x)$ is irreducible over $\mathbb{F}_{2^{2k}}$. This completes the proof.
\end{proof}
\end{spacing}

\begin{lemma}\label{4.2}
Fix an integer $k$ with $\textup{gcd}(k,3)=1$. Suppose $n=6k$ and $d=2^{4k-1}+2^{2k-1}$. If $v\in S$, then
$$\sum_{x\in\mathbb{F}_{2^{6k}}}\chi_{6k}(x^{d}+vx)=0.$$
\end{lemma}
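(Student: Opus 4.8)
The plan is to rewrite the exponent $d$, collapse the character sum to one over $\mathbb{F}_{2^{2k}}$ attached to a quadratic form, and then invoke Lemma~\ref{4.1} to show that this form is nondegenerate. Put $q=2^{2k}$, so $\mathbb{F}_{2^{6k}}=\mathbb{F}_{q^{3}}$ is cubic over $\mathbb{F}_{q}$, and note $2d=2^{4k}+2^{2k}=q(q+1)$. Since the absolute trace is invariant under the Frobenius maps $y\mapsto y^{2}$ and $y\mapsto y^{q}$, for every $x\in\mathbb{F}_{q^{3}}$ we get $\textup{Tr}_{6k}(x^{d})=\textup{Tr}_{6k}(x^{2d})=\textup{Tr}_{6k}\bigl((x^{q+1})^{q}\bigr)=\textup{Tr}_{6k}(x^{q+1})$, hence $\chi_{6k}(x^{d}+vx)=\chi_{6k}(x^{q+1}+vx)$. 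Using $\textup{Tr}_{6k}=\textup{Tr}_{2k}\circ\textup{Tr}_{2k}^{6k}$, equivalently $\chi_{6k}(y)=\chi_{2k}(\textup{Tr}_{2k}^{6k}(y))$, the sum becomes $\sum_{x\in\mathbb{F}_{q^{3}}}\chi_{2k}\bigl(Q_{1}(x)+\ell(x)\bigr)$, where $Q_{1}(x)=\textup{Tr}_{2k}^{6k}(x^{q+1})$ is a quadratic form and $\ell(x)=\textup{Tr}_{2k}^{6k}(vx)$ an $\mathbb{F}_{q}$-linear form on the $3$-dimensional $\mathbb{F}_{q}$-space $\mathbb{F}_{q^{3}}$.

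Next I would peel off $x=0$ and write each $x\neq 0$ as $x=u\xi$ with $u\in\mathbb{F}_{q}^{*}$ and $\xi$ running over a fixed transversal of $\mathbb{F}_{q}^{*}$ in $\mathbb{F}_{q^{3}}^{*}$ (there are $q^{2}+q+1$ cosets). By homogeneity $Q_{1}(u\xi)=u^{2}Q_{1}(\xi)$, $\ell(u\xi)=u\ell(\xi)$, and writing $u^{2}Q_{1}(\xi)=\bigl(uQ_{1}(\xi)^{2^{2k-1}}\bigr)^{2}$, one finds $\sum_{u\in\mathbb{F}_{q}^{*}}\chi_{2k}\bigl(u^{2}Q_{1}(\xi)+u\ell(\xi)\bigr)=q-1$ when $Q_{1}(\xi)=\ell(\xi)^{2}$ and $=-1$ otherwise. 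Hence the whole sum equals $q(N-q-1)$, where $N$ is the number of cosets $\xi$ with $Q_{1}(\xi)=\ell(\xi)^{2}$; so it suffices to prove $N=q+1$. Since $\ell(x)^{2}$ is itself a quadratic form, $N$ is exactly the number of nonzero cosets of $\mathbb{F}_{q}^{*}$ contained in the zero set of
$$Q(x):=Q_{1}(x)+\ell(x)^{2}=\textup{Tr}_{2k}^{6k}(x^{q+1})+\textup{Tr}_{2k}^{6k}(vx)^{2},$$
so the claim $N=q+1$ is equivalent to $Q$ having exactly $q^{2}$ zeros in $\mathbb{F}_{q^{3}}$ (its $q^{2}-1$ nonzero zeros being a union of $N$ cosets of $\mathbb{F}_{q}^{*}$).

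The crux is therefore to show that $Q$ is a nondegenerate ternary quadratic form over $\mathbb{F}_{q}$, and this is exactly where Lemma~\ref{4.1} comes in. Its polar form $b(x,y)=Q(x+y)+Q(x)+Q(y)$ ignores the $\ell(x)^{2}$ part (since $\ell$ is additive and the characteristic is $2$), and a direct computation gives $b(x,y)=\textup{Tr}_{2k}^{6k}\bigl((x^{q}+x^{q^{2}})y\bigr)$; as the trace bilinear form of $\mathbb{F}_{q^{3}}/\mathbb{F}_{q}$ is nondegenerate, the radical of $b$ is $\{x:x^{q}=x^{q^{2}}\}=\mathbb{F}_{q}$, one-dimensional over $\mathbb{F}_{q}$ as it must be in odd dimension. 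Thus $Q$ is nondegenerate iff it does not vanish on $\mathbb{F}_{q}^{*}$; but for $u\in\mathbb{F}_{q}^{*}$ one computes $Q(u)=u^{2}+u^{2}\,\textup{Tr}_{2k}^{6k}(v)^{2}=\bigl(u(1+\textup{Tr}_{2k}^{6k}(v))\bigr)^{2}$, which is nonzero precisely when $\textup{Tr}_{2k}^{6k}(v)\neq 1$ — and this holds because $v\in S$, by Lemma~\ref{4.1}. A nondegenerate ternary quadratic form over $\mathbb{F}_{q}$ in even characteristic is equivalent to $x_{1}x_{2}+x_{3}^{2}$, hence has exactly $q^{2}$ zeros (equivalently $\sum_{x}\chi_{2k}(cQ(x))=0$ for all $c\in\mathbb{F}_{q}^{*}$, since the $x_{3}$-summation vanishes; see \cite{LN83}); therefore $N=q+1$ and the character sum is $0$. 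The only genuinely delicate point is this last step: determining the radical of the polar form and checking that $Q$ is anisotropic on it, which is precisely the content of Lemma~\ref{4.1} for $v\in S$.
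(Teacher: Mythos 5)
Your proof is correct, but it takes a genuinely different route from the paper's. The paper fixes a primitive element $a$ of $\mathbb{F}_{8}$ with $a^{3}+a+1=0$, uses $\gcd(k,3)=1$ to write $\mathbb{F}_{2^{6k}}=\mathbb{F}_{2^{2k}}(a)$, expands $x=x_{0}+x_{1}a+x_{2}a^{2}$, and computes directly (with a case split on $k\bmod 3$) that $\textup{Tr}_{6k}(x^{d})=\textup{Tr}_{2k}(x_{0}+x_{1}+x_{2}+x_{1}x_{2})$; the character sum then factors, and the $x_{0}$-factor vanishes because $v_{0}=\textup{Tr}_{2k}^{6k}(v)\neq 1$. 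You instead observe that $2d=q(q+1)$ with $q=2^{2k}$, so Frobenius-invariance of the trace replaces $x^{d}$ by $x^{q+1}$, and then you treat $\textup{Tr}_{2k}^{6k}(x^{q+1})+\textup{Tr}_{2k}^{6k}(vx)^{2}$ as a ternary quadratic form over $\mathbb{F}_{q}$, reducing the vanishing of the sum to counting its zeros; the radical of its polar form is $\mathbb{F}_{q}$, and anisotropy there is exactly the condition $\textup{Tr}_{2k}^{6k}(v)\neq 1$ of Lemma~\ref{4.1}. All the steps check out: the coset decomposition, the evaluation $q(N-q-1)$, the polar form $\textup{Tr}_{2k}^{6k}((x^{q}+x^{q^{2}})y)$, and the point count $q^{2}$ for a nondegenerate form equivalent to $x_{1}x_{2}+x_{3}^{2}$. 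What your approach buys is a coordinate-free argument with no case analysis on $k\bmod 3$, and it makes transparent that the only input is $\textup{Tr}_{2k}^{6k}(v)\neq 1$ (indeed your argument proves the vanishing for every such $v$, and uses $\gcd(k,3)=1$ only through the hypothesis of Lemma~\ref{4.1}); the cost is an appeal to the classification and zero-count of nondegenerate quadratic forms in characteristic $2$, whereas the paper's computation, though more laborious, is entirely elementary.
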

\begin{proof}
Let $a$ be a primitive element of $\mathbb{F}_{8}$ with
$a^{3}+a+1=0.$ Since $\textup{gcd}(k,3)=1$, we have
$\mathbb{F}_{2^{6k}}=\mathbb{F}_{2^{2k}}(a)$. For any
$x\in\mathbb{F}_{2^{6k}}$, it can be expressed as
$$x=x_{0}+x_{1}a+x_{2}a^{2},$$
where $x_{0},x_{1},x_{2}\in\mathbb{F}_{2^{2k}}$.

Since $\textup{gcd}(k,3)=1$, we first consider the case $k\equiv1\pmod{3}$, in which $a^{2^{2k}}=a^{4}$. The first step is to compute a direct representation of $\textup{Tr}_{n}(x^{d})$ as a function of $x_{0},x_{1}$ and $x_{2}$. Note that
$\textup{Tr}_{2k}^{6k}(a)=\textup{Tr}_{2k}^{6k}(a^{2})=0$ and $\textup{Tr}_{2k}^{6k}(1)=1$. A routine computation shows that
$$\textup{Tr}_{6k}(x^{d})=\textup{Tr}_{2k}(x_{0}+x_{1}+x_{2}+x_{1}x_{2}).$$
Next, putting
$$v=v_{0}+v_{1}a+v_{2}a^{2}$$
with $v_{0},v_{1},v_{2} \in \mathbb{F}_{2^{2k}}$, we find that
$$\textup{Tr}_{6k}(vx)=\textup{Tr}_{2k}(v_{0}x_{0}+v_{1}x_{2}+v_{2}x_{1}).$$
Consequently,
$$\sum_{x\in\mathbb{F}_{2^{6k}}}\chi_{6k}(x^{d}+vx)=\sum_{x_{1},x_{2}\in\mathbb{F}_{2^{2k}}}\chi_{2k}(x_{1}+x_{2}+x_{1}x_{2}+v_{1}x_{2}+v_{2}x_{1})
\sum_{x_{0}\in\mathbb{F}_{2^{2k}}}\chi_{2k}(x_{0}+v_{0}x_{0}).$$

From Lemma \ref{4.1} we get $v_{0}\neq1$. Therefore, $\sum_{x\in\mathbb{F}_{2^{6k}}}\chi_{6k}(x^{d}+vx)=0.$

For the remaining case $k\equiv2\pmod{3}$, a similar discussion leads to $\sum_{x\in\mathbb{F}_{2^{6k}}}\chi_{6k}(x^{d}+vx)=0.$
\end{proof}

\begin{theorem}
Let $n=6k$, where $k$ is a positive integer with $\textup{gcd}(k,3)=1$.
Then $d=2^{4k-1}+2^{2k-1}$ is a CPP exponent over $\mathbb{F}_{2^{n}}$. To be specific,
 if $a\in S$ with $S$ defined as in \eqref{eqn_S}, then $a^{-1}x^{d}$ is a CPP over $\mathbb{F}_{2^{n}}$.
\end{theorem}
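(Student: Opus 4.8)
The plan is to invoke the character-sum criterion for permutation polynomials together with the two lemmas just established. First I would show that $a^{-1}x^d$ is a permutation polynomial, which amounts to checking $\gcd(d, 2^n - 1) = 1$ for $d = 2^{4k-1} + 2^{2k-1} = 2^{2k-1}(2^{2k}+1)$; since $\gcd(2^{2k-1}, 2^n-1) = 1$ trivially, it remains to observe that $2^{2k}+1$ divides $2^n - 1 = 2^{6k}-1$... which it does, so this line of reasoning fails and instead one must argue differently: in fact $x \mapsto x^d$ need not itself be a permutation, but $a^{-1}x^d$ being a CPP only requires $a^{-1}x^d$ and $a^{-1}x^d + x$ to be PPs. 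I would therefore handle the PP property of $a^{-1}x^d$ separately, probably by noting $d \equiv 2^{2k-1} \cdot (2^{2k}+1)$ and using that raising to a power coprime to the group order is a bijection after reducing $d$ modulo $2^n-1$ appropriately, or by a direct character-sum argument analogous to Lemma~\ref{4.2} with $v = 0$ replaced by the relevant shift.

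Next, the core of the argument: by the character criterion (the first lemma of Section~\ref{preliminaries}), $a^{-1}x^d + x$ is a PP over $\mathbb{F}_{2^n}$ if and only if for every $\alpha \in \mathbb{F}_{2^n}^*$ we have $\sum_{x} \chi_{6k}(\alpha(a^{-1}x^d + x)) = 0$. Substituting $x \mapsto a^{1/d} x$ (legitimate since $x \mapsto x^d$ is invertible on the field after the gcd check, or by absorbing constants) reduces the sum to the form $\sum_x \chi_{6k}(x^d + v x)$ for a suitable $v$ depending on $\alpha$ and $a$; one must verify that as $\alpha$ ranges over $\mathbb{F}_{2^n}^*$, the resulting $v$ ranges over a set contained in $S$, or more precisely that $v$ always lies in $S$ up to the scaling built into the definition of $S$. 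The key observation is that $S = \{\omega^{t(2^{2k}-1)} : 0 < t \le 2^{2k}+2^{4k},\ 3 \nmid t\}$ is exactly designed so that $a^{-1}\alpha$, after appropriate normalization, lands in $S$; I would make this bookkeeping precise and then invoke Lemma~\ref{4.2} to conclude each such sum vanishes.

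I expect the main obstacle to be the change-of-variables bookkeeping that reduces the general character sum $\sum_x \chi_{6k}(\alpha(a^{-1}x^d + x))$ to the normalized form $\sum_x \chi_{6k}(y^d + vy)$ and then verifying that the parameter $v$ always lands in the set $S$. Concretely, writing $a = \omega^{s(2^{2k}-1)}$ for some $s$ with $3 \nmid s$ (as forced by membership in $S$) and $\alpha = \omega^j$, one substitutes $x = a^e y$ for the exponent $e$ with $de \equiv 1$ or a related congruence, collects the coset information modulo $2^{2k}-1$, and must show $3 \nmid$ the resulting exponent so that the normalized multiplier lies in $S$. This step hinges on the arithmetic relation $\gcd(d, 2^n-1)$ controlling how $d$ acts on the cyclic group $\mathbb{F}_{2^n}^*$, and on the fact that multiplication by $2^{2k}-1$ and the divisibility-by-$3$ condition interact compatibly with the action of raising to the $d$-th power. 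Once that is in place, Lemma~\ref{4.2} finishes the proof immediately, and Lemma~\ref{4.1} is what makes Lemma~\ref{4.2} work, so no further input is needed.
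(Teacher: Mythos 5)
Your strategy for the core of the proof---normalizing the character sum $\sum_{x}\chi_{6k}(\alpha(a^{-1}x^{d}+x))$ by writing $\alpha=\beta^{d}$, reducing it to $\sum_{x}\chi_{6k}(x^{d}+vx)$, and checking that $v$ lands in $S$ so that Lemma~\ref{4.2} applies---is exactly the paper's argument. But your treatment of the first half contains an arithmetic error that derails it. You claim $2^{2k}+1$ divides $2^{6k}-1$; it does not: since $2^{2k}\equiv-1\pmod{2^{2k}+1}$, we get $2^{6k}-1\equiv(-1)^{3}-1=-2\pmod{2^{2k}+1}$, and $2^{2k}+1$ is odd, so $\gcd(2^{2k}+1,2^{6k}-1)=1$ and hence $\gcd(d,2^{6k}-1)=1$. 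The ``line of reasoning'' you abandoned is therefore the correct one (and the paper's): $x^{d}$ is a permutation, so $a^{-1}x^{d}$ is a PP with no further work. The replacement you sketch is both unnecessary and circular, since your later substitution $\alpha=\beta^{d}$ silently relies on the very fact $\gcd(d,2^{6k}-1)=1$ that you disclaimed.

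The second issue is that the ``bookkeeping'' you defer is the only nontrivial content of the step $v\in S$, and it must actually be done. Concretely, $d-1=2^{4k-1}+2^{2k-1}-1=(2^{2k}-1)(2^{2k-1}+1)$, so for $\beta=\omega^{j}$ the multiplier $\beta^{d-1}a$ has the form $\omega^{t'(2^{2k}-1)}$ with $t'=j(2^{2k-1}+1)+t$, where $a=\omega^{t(2^{2k}-1)}$; since $2^{2k-1}+1\equiv2+1\equiv0\pmod{3}$, we get $t'\equiv t\pmod 3$, so $3\nmid t'$ and $\beta^{d-1}a\in S$ (note that $3$ also divides the subgroup order $2^{4k}+2^{2k}+1$, so the condition $3\nmid t$ is well defined on residues and automatically excludes $t'\equiv 0$). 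It is precisely here that the specific shape of $d$ interacts with the definition of $S$ in \eqref{eqn_S}, and a proof that omits this check is incomplete. With the gcd corrected and this verification supplied, your plan coincides with the paper's proof.
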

\begin{proof}
Since $\textup{gcd}(d,2^{6k}-1)=1$, we have $x^{d}$ is a PP over
$\mathbb{F}_{2^{6k}}$. In what follows we prove that $x^{d}+ax$ is
also a PP over $\mathbb{F}_{2^{6k}}$. We only need to prove that for
each $\alpha\in\mathbb{F}_{2^{6k}}^{*}$,
 $$\sum_{x\in\mathbb{F}_{2^{6k}}}\chi_{6k}(\alpha(x^{d}+ax))=0,$$
 where $a\in S$. Since $\textup{gcd}(d,2^{6k}-1)=1$, each nonzero $\alpha\in\mathbb{F}_{p^{6k}}$ can be written as $\alpha=\beta^{d}$ for a unique $\beta\in\mathbb{F}_{2^{6k}}^{*}$, and we have
 \begin{align*}
\sum_{x\in\mathbb{F}_{2^{6k}}}\chi_{6k}(\alpha(x^{d}+ax))&=\sum_{x\in\mathbb{F}_{2^{6k}}}\chi_{6k}((\beta x)^{d}+\beta^{d-1}a\beta x) \\
          &=\sum_{x\in\mathbb{F}_{2^{6k}}}\chi_{6k}(x^{d}+\beta^{d-1}ax) \\
          &=\sum_{x\in\mathbb{F}_{2^{6k}}}\chi_{6k}(x^{d}+\beta^{2^{4k-1}+2^{2k-1}-1}ax).
\end{align*}
Since $\beta^{2^{4k-1}+2^{2k-1}-1}a\in S$, it follows from Lemma
\ref{4.2} that for each $\alpha\in\mathbb{F}_{2^{6k}}^{*}$, we have
 $$\sum_{x\in\mathbb{F}_{2^{6k}}}\chi_{6k}(\alpha(x^{d}+ax))=0.$$
  This completes the proof.
\end{proof}
\subsection{The third class of monomial CPPs}
In this subsection, let $p=2$ and $n=4k$. We will use an analysis similar to that of the previous subsection to show that $d=(1+2^{2k-1})(1+2^{2k})+1$ is a CPP exponent over $\mathbb{F}_{2^{4k}}$.

\begin{lemma}\label{3.1}
If $v$ is a non-cubic element of $\mathbb{F}_{2^{2k}}^{*}$, then
$$\sum_{x\in\mathbb{F}_{2^{4k}}}\chi_{4k}(x^{(1+2^{2k-1})(1+2^{2k})+1}+vx)=0.$$
\end{lemma}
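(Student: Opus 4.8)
The plan is to mimic the character-sum computation carried out in Lemma~\ref{4.2}, now over the extension $\mathbb{F}_{2^{4k}}=\mathbb{F}_{2^{2k}}(a)$ where $a$ satisfies a fixed irreducible quadratic over $\mathbb{F}_{2^{2k}}$ (for instance $a^2+a+\gamma=0$ for a suitable $\gamma\in\mathbb{F}_{2^{2k}}$, chosen so that $a$ has the right norm/trace behaviour). Write a generic element as $x=x_0+x_1a$ with $x_0,x_1\in\mathbb{F}_{2^{2k}}$, and similarly $v=v_0+v_1a$. The exponent $d=(1+2^{2k-1})(1+2^{2k})+1$ should be expanded modulo $2^{4k}-1$: note $2^{2k}\equiv 2^{2k}$ and $2^{2k-1}(1+2^{2k})=2^{2k-1}+2^{4k-1}$, so $d=1+2^{2k}+2^{2k-1}+2^{4k-1}+1=2^{4k-1}+2^{2k}+2^{2k-1}+2$. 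The first main step is therefore to compute $x^d$ in the basis $\{1,a\}$ and apply the absolute trace, using the relations $\mathrm{Tr}_{2k}^{4k}(1)=0$, together with the values of $\mathrm{Tr}_{2k}^{4k}(a)$ and $\mathrm{N}_{2k}^{4k}(a)=a^{1+2^{2k}}$ coming from the chosen defining polynomial. The Frobenius $x^{2^{2k}}=x_0+x_1a^{2^{2k}}$ and $a^{2^{2k}}=a+1$ (if $a^2+a+\gamma=0$) make these computations routine but lengthy; the outcome should be a polynomial expression $\mathrm{Tr}_{4k}(x^d)=\mathrm{Tr}_{2k}\big(Q(x_0,x_1)\big)$ in which the dependence on $x_0$ is \emph{linear}, of the form $\mathrm{Tr}_{2k}(c\,x_0+\cdots)$ for some constant $c\in\mathbb{F}_{2^{2k}}$ depending only on $\gamma$ and the structure of $d$.

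The second step is to add the linear term: $\mathrm{Tr}_{4k}(vx)=\mathrm{Tr}_{2k}(v_0x_0+v_1x_0'+\cdots)$ — here one uses $\mathrm{Tr}_{2k}^{4k}(a)$ and $\mathrm{Tr}_{2k}^{4k}(a^2)$ to push the trace down, so that the $x_0$-part of $\mathrm{Tr}_{4k}(vx)$ is again linear in $x_0$, contributing something like $\mathrm{Tr}_{2k}((v_0+\epsilon v_1)x_0)$. Combining, the full sum factors as
\[
\sum_{x\in\mathbb{F}_{2^{4k}}}\chi_{4k}(x^{d}+vx)=\Bigg(\sum_{x_1\in\mathbb{F}_{2^{2k}}}\chi_{2k}\big(R(x_1)\big)\Bigg)\cdot\Bigg(\sum_{x_0\in\mathbb{F}_{2^{2k}}}\chi_{2k}\big((c+v_0+\epsilon v_1)x_0\big)\Bigg),
\]
where $R(x_1)$ collects all the $x_1$-only terms. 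The inner sum over $x_0$ vanishes as soon as the linear coefficient $c+v_0+\epsilon v_1$ is nonzero, so the whole thing is $0$ provided we can rule out $c+v_0+\epsilon v_1=0$. (If the computation instead leaves a quadratic dependence on $x_1$ with the $x_0$-sum already forced to vanish, the argument is even simpler; I would be prepared for either shape depending on how $d$ unwinds.)

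The third and decisive step is to translate the algebraic obstruction "$c+v_0+\epsilon v_1=0$" into the hypothesis that $v$ is a non-cubic element of $\mathbb{F}_{2^{2k}}^*$. Since $v\in\mathbb{F}_{2^{2k}}^*$ we have $v_1=0$, so the condition collapses to $v_0=c$, i.e. $v$ equals a single explicit constant; but then the hypothesis "$v$ non-cubic" should be exactly what excludes that value (the constant $c$ arising from the cross-terms in $x^d$ is a cube, or more precisely a norm from a cubic-related subfield). Concretely, I expect the identity $3 \mid (2^{2k}-1)$ to enter here: the cubes in $\mathbb{F}_{2^{2k}}^*$ form the index-$3$ subgroup, and the bad value $c$ is forced to lie in it, so a non-cubic $v$ can never hit it. This matching is the main obstacle: the rest is bookkeeping, but getting the constant $c$ right and recognizing it as a cube requires care with the defining polynomial of $a$ and with the particular digits of $d$. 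Once that is done, Lemma~\ref{3.1} follows immediately, and it will feed into the analogue of the preceding theorem via the substitution $\alpha=\beta^d$ exactly as in the second class.
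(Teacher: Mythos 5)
There is a genuine gap: the structural assumption that drives your whole computation is false for this exponent. Reducing $d=2^{4k-1}+2^{2k}+2^{2k-1}+2$ and writing $\overline{x}=x^{2^{2k}}$, one finds $x^{2d}=x^{2^{4k}}\,x^{2^{2k+1}}\,x^{2^{2k}}\,x^{4}=(x\overline{x})^{3}x^{2}=\textup{N}_{2k}^{4k}(x)^{3}x^{2}$, so $\textup{Tr}_{4k}(x^{d})=\textup{Tr}_{2k}\bigl(\textup{N}_{2k}^{4k}(x)^{3}(x+\overline{x})^{2}\bigr)$. In the basis $\{1,a\}$ the norm is a full quadratic form $x_{0}^{2}+\tau x_{0}x_{1}+\nu x_{1}^{2}$ (with $\tau=\textup{Tr}_{2k}^{4k}(a)$, $\nu=\textup{N}_{2k}^{4k}(a)$), so $\textup{Tr}_{4k}(x^{d})$ is a degree-$6$ polynomial in $x_{0}$ for fixed $x_{1}$, not linear. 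The sum therefore does not factor into an $x_{1}$-sum times a linear character sum in $x_{0}$, and neither of the two ``shapes'' you prepared for occurs. This approach worked for Lemma~\ref{4.2} only because there $x^{2d}=x^{2^{2k}}x^{2^{4k}}$ is a quadratic form (algebraic degree $2$), which is linear in each coordinate; here the function has algebraic degree $4$. A second symptom that the plan cannot be right: your endgame reduces the hypothesis to excluding a single bad value $v_{0}=c$, whereas ``$v$ non-cubic'' excludes two thirds of $\mathbb{F}_{2^{2k}}^{*}$; no factorization of the proposed kind can produce a condition of that multiplicative nature.

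The paper instead uses the polar decomposition $x=yz$ with $y\in U=\{\lambda:\lambda^{2^{2k}+1}=1\}$ and $z\in\mathbb{F}_{2^{2k}}^{*}$, which is the change of variables that tames the norm: $\textup{N}_{2k}^{4k}(yz)=z^{2}$ and $y^{d}=y$, $z^{d}=z^{4}$, so $x^{d}=yz^{4}$. The inner sum over $z$ is then $\chi_{2k}$ of a linearized polynomial in $z$ (after pushing the trace down via $\textup{Tr}_{2k}^{4k}(yw)=(y+y^{-1})w$ for $w\in\mathbb{F}_{2^{2k}}$), which vanishes unless that linearized polynomial is identically zero; the whole sum equals $(N(v)-1)2^{2k}$ where $N(v)$ counts the $y\in U$ with $(y+y^{-1})\bigl(1+v^{4}(y+y^{-1})^{3}\bigr)=0$. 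The non-cubic hypothesis enters exactly here: since $y+y^{-1}\in\mathbb{F}_{2^{2k}}$, the second factor vanishing would exhibit $v^{4}$ (hence $v$) as a cube, so only $y=1$ survives and $N(v)=1$. If you want to salvage your write-up, replace the additive coordinates $x=x_{0}+x_{1}a$ by this multiplicative decomposition; the rest of your outline (reduction of $\alpha$ to $\beta^{d}$ in the theorem) is fine.
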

\begin{proof}
Using polar coordinate representation, every nonzero element $x$ of $\mathbb{F}_{2^{4k}}$ can be uniquely represented as $x=yz$ with $y\in U$ and $z\in \mathbb{F}_{2^{2k}}^{*}$, where $U=\{\lambda\in\mathbb{F}_{2^{4k}}| \lambda^{2^{2k}+1}=1\}.$ Then
 \begin{align*}
\sum_{x\in\mathbb{F}_{2^{4k}}}\chi_{4k}(x^{(1+2^{2k-1})(1+2^{2k})+1}+vx)&=1+\sum_{x\in\mathbb{F}_{2^{4k}}^{*}}\chi_{4k}(x^{(1+2^{2k-1})(1+2^{2k})+1}+vx) \\
          &=1+\sum_{y\in U}\sum_{z\in\mathbb{F}_{2^{2k}}^{*}}\chi_{4k}((yz)^{(1+2^{2k-1})(1+2^{2k})+1}+vyz)\\
          &=-2^{2k}+\sum_{y\in U}\sum_{z\in\mathbb{F}_{2^{2k}}}\chi_{4k}(yz^{4}+vyz)\\
          &=-2^{2k}+\sum_{y\in U}\sum_{z\in\mathbb{F}_{2^{2k}}}\chi_{2k}(\textup{Tr}_{2k}^{4k}(yz^{4}+vyz))\\
          &=-2^{2k}+\sum_{y\in U}\sum_{z\in\mathbb{F}_{2^{2k}}}\chi_{2k}((y+y^{2^{2k}})z^{4}+(y+y^{2^{2k}})vz)\\
          &=-2^{2k}+\sum_{y\in U}\sum_{z\in\mathbb{F}_{2^{2k}}}\chi_{2k}(z^{4}(y+y^{2^{2k}}+y^{4}v^{4}+y^{2^{2k+2}}v^{4}))\\
          &=(N(v)-1)2^{2k},
\end{align*}
where $N(v)$ denotes the number of $y$'s in $U$ such that $y+y^{2^{2k}}+y^{4}v^{4}+y^{2^{2k+2}}v^{4}=0$, which is equivalent to
$$y+y^{-1}+y^{4}v^{4}+y^{-4}v^{4}=0,$$
that is,
$$(y+y^{-1})[1+v^{4}(y+y^{-1})^{3}]=0.$$
Since $v$ is a non-cubic element of $\mathbb{F}_{2^{2k}}^{*}$, we
get $1+v^{4}(y+y^{-1})^{3}\neq0$. Hence $y=1$ is the unique root.
Thus $N(v)=1$, and this completes the proof.
\end{proof}

\begin{theorem}
Let $n=4k$. Then $d=(1+2^{2k-1})(1+2^{2k})+1$ is a CPP exponent over $\mathbb{F}_{2^{n}}$.
To be specific, if $a$ is a non-cubic element of $\mathbb{F}_{2^{2k}}^{*}$, then $a^{-1}x^{d}$ is a CPP over $\mathbb{F}_{2^{n}}$.
\end{theorem}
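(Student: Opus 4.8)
The plan is to follow the template of the proof of the second class of monomial CPPs, reducing the whole statement to Lemma~\ref{3.1} via the additive-character criterion for permutation polynomials recalled in Section~\ref{preliminaries}. First I would dispose of the easy half: since $d-1=(1+2^{2k-1})(1+2^{2k})$ is divisible by $2^{2k}+1$ we have $d\equiv 1\pmod{2^{2k}+1}$, while modulo the odd number $2^{2k}-1$ (where $2^{2k}\equiv 1$ and $2^{2k-1}\equiv 2^{-1}$) one gets $d\equiv (1+2^{-1})\cdot 2+1\equiv 4$; as $2^{4k}-1=(2^{2k}-1)(2^{2k}+1)$ with the two factors coprime, $\textup{gcd}(d,2^{4k}-1)=1$, so $x^{d}$ and hence $a^{-1}x^{d}$ is a PP over $\mathbb{F}_{2^{4k}}$.

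It remains to show that $a^{-1}x^{d}+x$ is a PP, which is equivalent to $x^{d}+ax$ being a PP (apply multiplication by $a$); by the character criterion this is the assertion that $\sum_{x\in\mathbb{F}_{2^{4k}}}\chi_{4k}(\alpha(x^{d}+ax))=0$ for every $\alpha\in\mathbb{F}_{2^{4k}}^{*}$. Using $\textup{gcd}(d,2^{4k}-1)=1$ I would write $\alpha=\beta^{d}$ for the unique $\beta\in\mathbb{F}_{2^{4k}}^{*}$; then, since $(\beta x)^{d}=\alpha x^{d}$ and $\beta^{d-1}a\cdot\beta x=\alpha a x$, reindexing the sum by $x\mapsto\beta x$ gives $\sum_{x}\chi_{4k}(\alpha(x^{d}+ax))=\sum_{x}\chi_{4k}(x^{d}+\beta^{d-1}ax)$. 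By Lemma~\ref{3.1} the right-hand side vanishes as soon as $\beta^{d-1}a$ is a non-cubic element of $\mathbb{F}_{2^{2k}}^{*}$, so the theorem follows once that is checked for every $\beta$.

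This last point is the only real obstacle, and it rests on the identity $\beta^{d-1}=\textup{N}_{2k}^{4k}(\beta)^{1+2^{2k-1}}$, valid because $d-1=(1+2^{2k-1})(1+2^{2k})$ and $\textup{N}_{2k}^{4k}(\beta)=\beta^{1+2^{2k}}$; in particular $\beta^{d-1}\in\mathbb{F}_{2^{2k}}^{*}$. Next, $2\equiv -1\pmod 3$ gives both $3\mid 2^{2k}-1$ (so cubing is $3$-to-$1$ on $\mathbb{F}_{2^{2k}}^{*}$ and non-cubic elements exist) and $3\mid 1+2^{2k-1}$, whence $\beta^{d-1}=\bigl(\textup{N}_{2k}^{4k}(\beta)^{(1+2^{2k-1})/3}\bigr)^{3}$ is a cube in $\mathbb{F}_{2^{2k}}^{*}$. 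Since $a$ is a non-cubic element of $\mathbb{F}_{2^{2k}}^{*}$, the product $\beta^{d-1}a$ is again non-cubic there, for every $\beta\in\mathbb{F}_{2^{4k}}^{*}$. Feeding this into the reduction of the previous paragraph and invoking Lemma~\ref{3.1} completes the proof; everything past spotting the divisibilities $2^{2k}+1\mid d-1$ and $3\mid 1+2^{2k-1}$ is routine, as the genuinely delicate character sum has already been evaluated in Lemma~\ref{3.1}.
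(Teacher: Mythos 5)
Your proof is correct and follows essentially the same route as the paper: verify $\gcd(d,2^{4k}-1)=1$, substitute $\alpha=\beta^{d}$ in the character sum, and invoke Lemma~\ref{3.1}. In fact you supply the details the paper leaves implicit, namely the explicit $\gcd$ computation and the verification (via $\beta^{d-1}=\textup{N}_{2k}^{4k}(\beta)^{1+2^{2k-1}}$ with $3\mid 1+2^{2k-1}$) that $\beta^{d-1}a$ remains a non-cubic element of $\mathbb{F}_{2^{2k}}^{*}$.
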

\begin{proof}
It can be verified that $\textup{gcd}(d,2^{4k}-1)=1$. Thus, it suffices to prove that $x^{d}+ax$ is a PP for a non-cubic $a\in\mathbb{F}_{2^{2k}}^{*}$. Since each nonzero $\alpha\in\mathbb{F}_{2^{4k}}$ can be written as $\alpha=\beta^{d}$ for a unique $\beta\in\mathbb{F}_{2^{4k}}^{*}$, we have
 \begin{align*}
\sum_{x\in\mathbb{F}_{2^{4k}}}\chi_{4k}(\alpha(x^{d}+ax))&=\sum_{x\in\mathbb{F}_{2^{4k}}}\chi_{4k}((\beta x)^{d}+\beta^{d-1}a\beta x) \\
          &=\sum_{x\in\mathbb{F}_{2^{4k}}}\chi_{4k}(x^{d}+\beta^{d-1}ax) \\
          &=\sum_{x\in\mathbb{F}_{2^{4k}}}\chi_{4k}(x^{d}+\beta^{(1+2^{2k-1})(1+2^{2k})}ax).
\end{align*}
Note that $\beta^{(1+2^{2k-1})(1+2^{2k})}a$ is also a non-cubic element of
$\mathbb{F}_{2^{2k}}^{*}$. For each $\alpha\in\mathbb{F}_{2^{4k}}^{*}$, we have
 $$\sum_{x\in\mathbb{F}_{2^{4k}}}\chi_{4k}(\alpha(x^{d}+ax))=0,$$
by Lemma \ref{3.1}. This completes the proof.
\end{proof}

\subsection{The fourth class of monomial CPPs}

In this subsection, we study the fourth class of monomial CPPs,  where $p$ is an odd prime, $n=4k$ and $d=\frac{p^{4k}-1}{2}+p^{2k}$. Let $\omega$ be a fixed primitive element of $\mathbb{F}_{p^{n}}$. Denote the conjugate of $x$ over $\mathbb{F}_{p^{n}}$ by $\overline{x}$, i.e. $\overline{x}=x^{p^{2k}}$. We also define the set
$S$ as follows:
\begin{equation}\label{eqn_S34}
  S:=\{\omega^{t(p^{2k}-1)+\frac{p^{2k}-1}{2}}:\,0\leq t\leq p^{2k}\}.
\end{equation}

We first recall two lemmas.
\begin{lemma}\rm{\cite{H76}\label{2.1}}
Let $p$ be an odd prime and $d|p^{n}-1$. Let $s$ be the least positive integer
such that $d|p^{s}+1$. For each $0\leq j<d$, define the set
\[
 C_{j}:=\{\omega^{di+j}\in\mathbb{F}_{p^{n}}^{*}|0\leq i<\frac{p^{n}-1}{d}\}.
\]
\begin{enumerate}
  \item In the case $d$ is an even integer, and both $(p^{s}+1)/d$ and $d/2s$ are odd integers,
  we have
   \[\sum_{x\in\mathbb{F}_{p^{n}}}\chi_{n}(ax^{d})=\begin{cases}p^{n};&\textup{ if } a=0,\\
(-1)^{\frac{n}{2s}+1}(d-1)p^{\frac{n}{2}};&\textup{ if }a\in C_{\frac{d}{2}},\\
(-1)^{\frac{n}{2s}}p^{\frac{n}{2}};&\textup{ if }a\not\in C_{\frac{d}{2}}.\end{cases}\]
  \item In all the other cases, we get
     \[\sum_{x\in\mathbb{F}_{p^{n}}}\chi_{n}(ax^{d})=\begin{cases}p^{n};&\textup{ if } a=0,\\
(-1)^{\frac{n}{2s}+1}(d-1)p^{\frac{n}{2}};&\textup{ if }a\in C_{0},\\
(-1)^{\frac{n}{2s}}p^{\frac{n}{2}};&\textup{ if }a\not\in C_{0}.\end{cases}\]
\end{enumerate}

\end{lemma}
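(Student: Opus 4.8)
The plan is to expand the Weil sum into a combination of Gauss sums and then evaluate those via the \emph{semiprimitive} case of Gauss-sum evaluation, which is exactly the regime that the hypothesis $d\mid p^{s}+1$ puts us in. First I would use that $d\mid p^{n}-1$: for every $u\in\mathbb{F}_{p^{n}}^{*}$ the fibre count $\#\{x\in\mathbb{F}_{p^{n}}^{*}:x^{d}=u\}$ equals $\sum_{\psi}\psi(u)$, where $\psi$ runs over the $d$ multiplicative characters of $\mathbb{F}_{p^{n}}$ whose order divides $d$ (equivalently, the characters that are trivial on the group $H$ of $d$-th powers). Substituting this into $\sum_{x\ne 0}\chi_{n}(ax^{d})=\sum_{u\ne 0}\#\{x:x^{d}=u\}\,\chi_{n}(au)$, interchanging the two summations, and cancelling the contribution of the trivial character against the term $x=0$, one is left with
\[
\sum_{x\in\mathbb{F}_{p^{n}}}\chi_{n}(ax^{d})=\sum_{\substack{\psi^{d}=\psi_{0}\\ \psi\ne\psi_{0}}}\overline{\psi}(a)\,G(\psi),\qquad G(\psi):=\sum_{x\in\mathbb{F}_{p^{n}}^{*}}\psi(x)\chi_{n}(x).
\]

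Next I would evaluate each Gauss sum in this sum. Such a $\psi$ has order $e>1$ with $e\mid d$, hence $e\mid p^{s}+1$; taking the least $s_{e}$ with $e\mid p^{s_{e}}+1$, we have $p^{s_{e}}\equiv-1\pmod{e}$ and $2s_{e}\mid n$. Two standard facts drive the evaluation: the norm relation $G(\psi)\,\overline{G(\psi)}=p^{n}$ together with $\overline{G(\psi)}=\psi(-1)\,G(\overline{\psi})$; and the Frobenius invariance $G(\psi^{p})=G(\psi)$, which iterated $s_{e}$ times gives $G(\psi^{p^{s_{e}}})=G(\psi)$, hence $G(\overline{\psi})=G(\psi)$ because $\psi^{p^{s_{e}}}=\overline{\psi}$. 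Combining the two yields $G(\psi)^{2}=\psi(-1)\,p^{n}$, so $G(\psi)=\pm\sqrt{\psi(-1)}\,p^{n/2}$; the remaining sign is pinned down by Stickelberger's congruence for Gauss sums (equivalently, by a Davenport--Hasse descent of $G(\psi)$ to the subfield $\mathbb{F}_{p^{2s_{e}}}$ over which $\psi$ is primitive), and it depends only on the parities of $n/(2s_{e})$ and of $(p^{s_{e}}+1)/e$.

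The final step is to reassemble. Grouping the admissible $\psi$ by order and using orthogonality for the subgroup $H$ of $d$-th powers, the character sum $\sum_{\psi\ne\psi_{0}}\overline{\psi}(a)\,G(\psi)$ collapses. In the generic regime, part~(2), the semiprimitivity of $d$ makes the cyclotomy of order $d$ \emph{uniform}: every relevant Gauss sum equals the same value $(-1)^{n/(2s)+1}p^{n/2}$, and the collapse gives
\[
\sum_{\psi^{d}=\psi_{0},\ \psi\ne\psi_{0}}\overline{\psi}(a)\,G(\psi)=(-1)^{n/(2s)+1}p^{n/2}\bigl(N_{d}(a)-1\bigr),
\]
where $N_{d}(a)=\#\{x:x^{d}=a\}$ equals $d$ when $a$ is a $d$-th power (equivalently $a\in C_{0}$) and $0$ otherwise; this reproduces the three listed values of part~(2). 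In part~(1), where $d$ is even and both $(p^{s}+1)/d$ and $d/2s$ are odd, the characters of even order acquire a Gauss sum of the opposite sign, so the clean cancellation fails for them; separating the character sum according to the $2$-adic valuation of the exponent (relative to a fixed character of order exactly $d$) and recombining, one finds the distinguished coset shifted from $C_{0}$ to $C_{d/2}$, which yields the first formula. A short parity check on $v_{2}(p^{s}+1)$, $v_{2}(d)$ and $v_{2}(n/s)$ confirms that the two parts are exhaustive.

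The hard part is the sign determination in the middle step: the elementary identities only fix $G(\psi)$ up to sign, and pinning it down genuinely requires Stickelberger's congruence modulo a prime above $p$ in $\mathbb{Z}[\zeta_{e},\zeta_{p}]$ (or, alternatively, the Davenport--Hasse lifting relation). Almost equally delicate is the reassembly in part~(1): one has to track how the sign of $G(\psi)$ varies with the order of $\psi$ over all divisors $e$ of $d$, and verify that the precise hypotheses ``$(p^{s}+1)/d$ odd'' and ``$d/2s$ odd'' are exactly what moves the distinguished coset to $C_{d/2}$ while still confining the Weil sum to the two nonzero values $(-1)^{n/(2s)+1}(d-1)p^{n/2}$ and $(-1)^{n/(2s)}p^{n/2}$. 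Lining those conditions up is the technical heart of the argument.
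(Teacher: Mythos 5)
The paper offers no proof of this lemma: it is quoted directly from Helleseth \cite{H76}, so there is no internal argument to compare yours against, and any complete proof would be acceptable. Your outline follows the standard route to such statements --- write the Weil sum as $\sum_{\psi^{d}=\psi_{0},\,\psi\ne\psi_{0}}\overline{\psi}(a)G(\psi)$, note that every character occurring is semiprimitive because its order divides $p^{s}+1$, evaluate the Gauss sums, and collapse by orthogonality --- and the pieces you actually write down are correct: the fibre-counting identity, $G(\psi^{p})=G(\psi)$, $\psi^{p^{s_{e}}}=\overline{\psi}$, hence $G(\psi)^{2}=\psi(-1)p^{n}$, and the final collapse $\sum_{\psi^{d}=\psi_{0}}\overline{\psi}(a)=d\cdot[a\in C_{0}]$.

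The difficulty is that the two steps you explicitly defer are not side issues; they are the lemma. First, ``the sign is pinned down by Stickelberger'' has to be made quantitative: for $\psi$ of order $e$, with $s_{e}$ minimal such that $e\mid p^{s_{e}}+1$ and $\gamma_{e}=n/(2s_{e})$, the semiprimitive evaluation gives $G(\psi)=(-1)^{\gamma_{e}-1}(-1)^{\gamma_{e}(p^{s_{e}}+1)/e}\,p^{n/2}$, and one must then check that $\gamma_{e}\equiv n/(2s)\pmod 2$ and $v_{2}(p^{s_{e}}+1)=v_{2}(p^{s}+1)$ for every divisor $e>1$ of $d$ (both follow from $s/s_{e}$ being odd) before the character sum can be reassembled. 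Second, your intermediate claim that ``the characters of even order acquire a Gauss sum of the opposite sign'' is wrong as stated: the sign flips only for those $\psi$ whose order $e$ satisfies $v_{2}(e)=v_{2}(d)$ --- equivalently, writing $\psi=\psi_{1}^{j}$ for a fixed $\psi_{1}$ of order $d$, only for odd $j$ --- and only when $n/(2s)$ is odd; an even-order character with $v_{2}(e)<v_{2}(d)$ has $(p^{s_{e}}+1)/e$ even and keeps the sign $(-1)^{n/(2s)-1}$. It is precisely the identity $\overline{\psi_{1}^{j}}(a)=(-1)^{j}$ for $a\in C_{d/2}$, combined with ``sign flip iff $j$ odd,'' that moves the distinguished coset from $C_{0}$ to $C_{d/2}$ in part (1), so this is the one computation that cannot be waved through. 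With those two calculations actually carried out your argument closes and reproduces the stated values; as written, it is a correct plan rather than a proof.
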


\begin{lemma}\rm{\cite{H76}\label{2.2}}
Let $d$ be an integer with $\textup{gcd}(d,p^{n}-1)=1$. Suppose that there exists an integer $i$
 such that $0\leq i<n$ and $(d-p^{i})|(p^{n}-1)$. Choose an integer $N$ such that $(d-p^{i})N\equiv0\pmod{p^{n}-1}$. Then
$$\sum_{x\in\mathbb{F}_{p^{n}}}\chi_{n}(x^{d}+ax)=\frac{1}{N}\sum_{j=0}^{N-1}\sum_{y\in\mathbb{F}_{p^{n}}}\chi_{n}(y^{N}(a\omega^{j}+\omega^{djp^{-i}})).$$
\end{lemma}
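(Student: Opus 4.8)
The plan is to evaluate the left-hand side directly: I will coordinatize $\mathbb{F}_{p^{n}}^{*}$ according to the subgroup of $N$-th powers, then use the Frobenius-invariance of the trace to recognize $\chi_{n}(x^{d})$ as a value of the form occurring on the right-hand side. To set up, since $(d-p^{i})\mid(p^{n}-1)$ we have $d\neq p^{i}$, so I take $N=\frac{p^{n}-1}{|d-p^{i}|}$, the least positive integer with $(d-p^{i})N\equiv 0\pmod{p^{n}-1}$; the basic congruence to keep in mind is $dN=(d-p^{i})N+p^{i}N\equiv p^{i}N\pmod{p^{n}-1}$. Because $\textup{gcd}(p,p^{n}-1)=1$, the exponent $p^{-i}$ is a well-defined residue modulo $p^{n}-1$, so $\omega^{djp^{-i}}$ is meaningful.

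The key observation is that, as $N\mid p^{n}-1$, the set $T:=\{y^{N}:y\in\mathbb{F}_{p^{n}}^{*}\}=\{\omega^{Nm}:0\le m<|d-p^{i}|\}$ is a subgroup of index $N$, the map $y\mapsto y^{N}$ is exactly $N$-to-one from $\mathbb{F}_{p^{n}}^{*}$ onto $T$, and each $x\in\mathbb{F}_{p^{n}}^{*}$ is uniquely $x=\omega^{j}z$ with $0\le j<N$ and $z\in T$. The first of these gives, for any $c\in\mathbb{F}_{p^{n}}$,
$$\sum_{y\in\mathbb{F}_{p^{n}}}\chi_{n}(y^{N}c)=1+N\sum_{z\in T}\chi_{n}(zc).$$
For the main computation, write $x=\omega^{j}z$ with $z=\omega^{Nm}\in T$; then $z^{d}=\omega^{dNm}=(\omega^{Nm})^{p^{i}}=z^{p^{i}}$, so $x^{d}=\omega^{dj}z^{p^{i}}=\bigl(\omega^{djp^{-i}}z\bigr)^{p^{i}}$, and since $\textup{Tr}_{n}(t^{p^{i}})=\textup{Tr}_{n}(t)$ for all $t$ we get $\chi_{n}(x^{d})=\chi_{n}\bigl(\omega^{djp^{-i}}z\bigr)$. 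Additivity of $\chi_{n}$ then yields
$$\chi_{n}(x^{d}+ax)=\chi_{n}\bigl(\omega^{djp^{-i}}z\bigr)\,\chi_{n}\bigl(a\omega^{j}z\bigr)=\chi_{n}\!\bigl(z(a\omega^{j}+\omega^{djp^{-i}})\bigr).$$
Summing over $x\in\mathbb{F}_{p^{n}}^{*}$ through the bijection $x\leftrightarrow(j,z)$, substituting the displayed $N$-to-one identity with $c=a\omega^{j}+\omega^{djp^{-i}}$, and adding back the $x=0$ term $\chi_{n}(0)=1$, produces precisely
$$\sum_{x\in\mathbb{F}_{p^{n}}}\chi_{n}(x^{d}+ax)=\frac{1}{N}\sum_{j=0}^{N-1}\sum_{y\in\mathbb{F}_{p^{n}}}\chi_{n}\!\bigl(y^{N}(a\omega^{j}+\omega^{djp^{-i}})\bigr).$$

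There is no hard estimate here; the difficulty is purely bookkeeping. The points I expect to need care are: the well-definedness of $p^{-i}$ and the congruence $dN\equiv p^{i}N$ modulo $p^{n}-1$; the verification that $(j,z)\mapsto\omega^{j}z$ really is a bijection onto $\mathbb{F}_{p^{n}}^{*}$, which uses $N\mid p^{n}-1$ and the fact that $T$ has index $N$; and the bookkeeping of the single stray $+1$ from $y=0$ in each inner sum so that the final normalization by $1/N$ is exactly correct. If one wants the identity for an arbitrary $N$ with $(d-p^{i})N\equiv 0\pmod{p^{n}-1}$ rather than the minimal one, a short divisibility count shows each element of $\mathbb{F}_{p^{n}}^{*}$ is represented equally often in the modified indexing, and the identity still holds.
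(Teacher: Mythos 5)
The paper does not prove this lemma at all --- it is quoted from Helleseth \cite{H76} --- so there is no internal proof to compare against; your blind argument is correct and is essentially the standard one. With $N$ minimal, the congruence $dN\equiv p^{i}N\pmod{p^{n}-1}$ makes $z\mapsto z^{d}$ coincide with the Frobenius $z\mapsto z^{p^{i}}$ on the index-$N$ subgroup $T$ of $N$-th powers, whence $\chi_{n}(x^{d}+ax)=\chi_{n}\bigl(z(a\omega^{j}+\omega^{djp^{-i}})\bigr)$ on the coset $\omega^{j}T$, and the $1/N$ normalization exactly absorbs the $N$-to-one fibering of $y\mapsto y^{N}$ together with the $N$ stray $y=0$ terms; your closing remark about non-minimal $N$ is also sound, since then $N_{0}\mid g:=\gcd(N,p^{n}-1)$, each coset of the image subgroup is hit $N/g$ times, and the factor $g/N$ restores the count.
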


As a preparation, we have the following lemmas.
\begin{lemma}\label{2.3}
If $a\in S$ with $S$ defined as in \eqref{eqn_S34}, then $\frac{a+1}{a-1}=\omega^{2s}$ for some integer $s$.
\end{lemma}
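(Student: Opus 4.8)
The plan is to prove that $w:=\frac{a+1}{a-1}$ is a square in $\mathbb{F}_{p^{4k}}^{*}$; since $\omega$ is primitive, being a square is exactly being an even power of $\omega$, which is the assertion. The route is to show $\textup{N}_{2k}^{4k}(w)=-1$ and then to use the fact that an element of $\mathbb{F}_{p^{4k}}^{*}$ is a square precisely when its norm down to $\mathbb{F}_{p^{2k}}$ is a square there.

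First I would record the defining property of $S$. Writing $q=p^{2k}$ and $a=\omega^{t(q-1)+\frac{q-1}{2}}$ with $0\le t\le q$, a direct computation gives
$$a^{q+1}=\omega^{(q+1)\left(t(q-1)+\frac{q-1}{2}\right)}=\omega^{(q^{2}-1)t+\frac{q^{2}-1}{2}}=\omega^{\frac{q^{2}-1}{2}}=-1,$$
using $\omega^{q^{2}-1}=1$ and that $\omega^{(q^{2}-1)/2}$ is the unique element of order $2$. Hence $\textup{N}_{2k}^{4k}(a)=a\cdot a^{p^{2k}}=a^{q+1}=-1$, so $a^{p^{2k}}=-a^{-1}$. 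Moreover $a\ne\pm1$: since $0<\frac{q-1}{2}<q-1$, the exponent $t(q-1)+\frac{q-1}{2}$ is not congruent to $0$ or to $\frac{q^{2}-1}{2}$ modulo $q-1$, both of which are multiples of $q-1$. Thus $w=\frac{a+1}{a-1}$ is a well-defined element of $\mathbb{F}_{p^{4k}}^{*}$, and applying the Frobenius $x\mapsto x^{p^{2k}}$ together with $a^{p^{2k}}=-a^{-1}$ yields
$$w^{p^{2k}}=\frac{a^{p^{2k}}+1}{a^{p^{2k}}-1}=\frac{-a^{-1}+1}{-a^{-1}-1}=\frac{a-1}{-(a+1)}=-w^{-1},$$
so that $\textup{N}_{2k}^{4k}(w)=w\cdot w^{p^{2k}}=-1$.

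Finally I would invoke the square criterion: for $z\in\mathbb{F}_{p^{4k}}^{*}$ one has $z^{(p^{4k}-1)/2}=\left(z^{p^{2k}+1}\right)^{(p^{2k}-1)/2}=\textup{N}_{2k}^{4k}(z)^{(p^{2k}-1)/2}$, so $z$ is a square in $\mathbb{F}_{p^{4k}}^{*}$ if and only if $\textup{N}_{2k}^{4k}(z)$ is a square in $\mathbb{F}_{p^{2k}}^{*}$. Since $p$ is odd, $p^{2k}\equiv1\pmod 8$, hence $(p^{2k}-1)/2$ is even and $-1$ is a square in $\mathbb{F}_{p^{2k}}^{*}$. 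Applying this with $z=w$ shows $w$ is a square in $\mathbb{F}_{p^{4k}}^{*}$, i.e. $\frac{a+1}{a-1}=\omega^{2s}$ for some integer $s$. The computation is entirely elementary; the only place demanding care is bookkeeping with the half-integer exponents $\frac{q-1}{2}$ and $\frac{q^{2}-1}{2}$ and confirming that $-1$ is a square in $\mathbb{F}_{p^{2k}}^{*}$, which is exactly where the hypothesis that $p$ is odd enters.
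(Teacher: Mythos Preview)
Your proof is correct and at its core coincides with the paper's: both establish $a\overline a=-1$ (i.e.\ $\textup{N}_{2k}^{4k}(a)=-1$) and deduce the Frobenius relation $w^{p^{2k}}=-w^{-1}$, hence $\textup{N}_{2k}^{4k}(w)=-1$; the remaining step is the parity/square observation that $-1$ is a square in $\mathbb{F}_{p^{2k}}^{*}$. The paper packages this as a proof by contradiction (assuming $w=\omega^{2s+1}$ leads to $\omega^{(2s+1)(p^{2k}+1)}=-1$, an impossible parity), while you proceed directly via the norm criterion for squares; your version has the minor virtue of making explicit exactly where $p^{2k}\equiv1\pmod 4$ enters, which the paper leaves tacit.
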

\begin{proof}
Assume to the contrary that $\frac{a+1}{a-1}=\omega^{2s+1}$ for some integer $s$. Since $a\overline{a}=-1$, we have
$$\frac{a+1}{a-1}=\frac{a-a\overline{a}}{a+a\overline{a}}=\frac{1-\overline{a}}{1+\overline{a}}=(\frac{1-a}{1+a})^{p^{2k}}=-\omega^{-(2s+1)p^{2k}}.$$
It follows that
$$\omega^{(2s+1)(p^{2k}+1)}=-1=\omega^{(p^{2k}+1)\frac{p^{2k}-1}{2}},$$
which is a contradiction. So we have $\frac{a+1}{a-1}=\omega^{2s}$
for some integer $s$.
\end{proof}

\begin{lemma}\label{2.4}
Let $p$ be an odd prime, $n=4k$ and $d=\frac{p^{4k}-1}{2}+p^{2k}$. If $a\in S$ with $S$ defined as
in \eqref{eqn_S34}, then $\sum_{x\in\mathbb{F}_{p^{n}}}\chi_{n}(x^{d}+ax)=0$.
\end{lemma}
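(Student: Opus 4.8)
The plan is to evaluate $\sum_{x\in\mathbb{F}_{p^n}}\chi_n(x^d+ax)$ directly by applying Lemma~\ref{2.2}. A short computation shows $\gcd(d,p^n-1)=1$ (e.g. $d$ is odd and any common prime factor of $d$ and $p^n-1$ would have to divide $2p^{2k}$, hence equal $2$), and the crucial observation is that $d-p^{2k}=\frac{p^{4k}-1}{2}$, which divides $p^n-1$. Hence in Lemma~\ref{2.2} one may take $i=2k$ and $N=2$ (since $(d-p^{2k})\cdot 2\equiv 0\pmod{p^n-1}$), obtaining
$$\sum_{x\in\mathbb{F}_{p^n}}\chi_n(x^d+ax)=\frac12\sum_{j=0}^{1}\sum_{y\in\mathbb{F}_{p^n}}\chi_n\!\big(y^{2}(a\omega^{j}+\omega^{djp^{-2k}})\big),$$
where $p^{-2k}$ denotes the inverse of $p^{2k}$ modulo $p^n-1$.

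The next step is to identify the two twists $a\omega^{j}+\omega^{djp^{-2k}}$. Since $p^{2k}p^{2k}=p^{4k}\equiv1\pmod{p^{4k}-1}$ we have $p^{-2k}\equiv p^{2k}$, and because $p^{2k}$ is odd one checks that $dp^{2k}\equiv\frac{p^{4k}-1}{2}+1\pmod{p^{4k}-1}$. Using $\omega^{(p^{4k}-1)/2}=-1$, the $j=0$ term of the inner sum becomes $\sum_{y}\chi_n((a+1)y^{2})$ and the $j=1$ term becomes $\sum_{y}\chi_n(\omega(a-1)y^{2})$, so it remains to show these two quadratic character sums cancel.

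For this I would invoke the standard evaluation $\sum_{y\in\mathbb{F}_{p^n}}\chi_n(cy^{2})=\eta(c)\,T$ valid for every $c\neq0$, where $\eta$ is the quadratic character of $\mathbb{F}_{p^n}^{*}$ and $T=\sum_y\chi_n(y^{2})$ is a fixed nonzero quadratic Gauss sum (Lemma~\ref{2.1} applied with exponent $2$ serves the same purpose). A quick check gives $a\pm1\neq0$ for $a\in S$, since $a=\pm1$ would force $(p^{2k}-1)(2t+1)$ to be divisible by $2(p^{4k}-1)$, which is impossible as $2t+1$ is odd; thus both sums have the above form. Finally, Lemma~\ref{2.3} gives $\frac{a+1}{a-1}=\omega^{2s}$, a square, and since $\omega$ itself is a non-square, $\frac{a+1}{\omega(a-1)}=\omega^{2s-1}$ is a non-square; hence $\eta(a+1)$ and $\eta(\omega(a-1))$ have opposite signs, the two terms are $T$ and $-T$, and their sum is $0$, yielding $\sum_{x}\chi_n(x^d+ax)=0$. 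Apart from the routine arithmetic, the step needing the most care is the reduction $dp^{-2k}\equiv\frac{p^{4k}-1}{2}+1\pmod{p^{4k}-1}$: everything downstream — in particular the fact that the $j=1$ twist is exactly $\omega(a-1)$, which is what places it in the square class opposite to $a+1$ — hinges on getting this constant right.
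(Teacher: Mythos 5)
Your proposal is correct and follows essentially the same route as the paper: apply Lemma~\ref{2.2} with $i=2k$ and $N=2$, reduce $\omega^{dp^{-2k}}$ to $-\omega$ so the two twists are $a+1$ and $(a-1)\omega$, and then use Lemma~\ref{2.3} (that $\frac{a+1}{a-1}$ is a square) together with the quadratic Gauss sum evaluation (the $d=2$ case of Lemma~\ref{2.1}) to see that the two terms lie in opposite square classes and cancel. The only difference is that you spell out the routine checks ($\gcd(d,p^n-1)=1$, $a\neq\pm1$, the reduction of $dp^{2k}$ modulo $p^{4k}-1$) that the paper leaves implicit.
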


\begin{proof}
By Lemma \ref{2.2}, we have
 \begin{align*}
2\sum_{x\in\mathbb{F}_{p^{n}}}\chi_{n}(x^{d}+ax)&=\sum_{y\in\mathbb{F}_{p^{n}}}\chi_{n}(y^{2}(a+1))+\sum_{y\in\mathbb{F}_{p^{n}}}\chi_{n}(y^{2}(a\omega+\omega^{dp^{2k}})) \\
          &=\sum_{y\in\mathbb{F}_{p^{n}}}\chi_{n}(y^{2}(a+1))+\sum_{y\in\mathbb{F}_{p^{n}}}\chi_{n}(y^{2}((a-1)\omega)).
\end{align*}
From Lemma \ref{2.3}, $a-1,\ a+1\in C_{0}$ or $a-1,\ a+1\in C_{1}$. Then a direct application of Lemma \ref{2.1} shows $\sum_{x\in\mathbb{F}_{p^{n}}}\chi_{n}(x^{d}+ax)=0$.
\end{proof}

We now have the following theorem.

\begin{theorem}
Let $p$ be an odd prime and $n=4k$. Then $d=\frac{p^{4k}-1}{2}+p^{2k}$ is a CPP exponent over
 $\mathbb{F}_{p^{n}}$. To be specific, if $a\in S$ with $S$ defined as in \eqref{eqn_S34}, then $a^{-1}x^{d}$ is a CPP over $\mathbb{F}_{p^{n}}$.
\end{theorem}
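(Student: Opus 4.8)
\emph{Plan.} The argument will mirror the proofs of the second and third classes of monomial CPPs: reduce the ``complete'' part to the single exponential-sum vanishing already isolated in Lemma~\ref{2.4}. First I would dispose of the ``permutation'' part by checking $\gcd(d,p^{4k}-1)=1$, so that $x^{d}$, and hence $a^{-1}x^{d}$, permutes $\mathbb{F}_{p^{n}}$. Writing $q=p^{2k}$, one has $2d=(q+1)^{2}-2$, so $2d\equiv 2\pmod{q-1}$ and $2d\equiv -2\pmod{q+1}$; since $q$ is odd, $\tfrac{q^{2}-1}{2}$ is even and therefore $d=\tfrac{q^{2}-1}{2}+q$ is odd. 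Consequently any common divisor of $d$ and $q-1$ (resp.\ $q+1$) divides $2$ and is odd, hence equals $1$, so $\gcd(d,q-1)=\gcd(d,q+1)=1$ and $\gcd(d,q^{2}-1)=1$.

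Once $a^{-1}x^{d}$ is known to be a PP, it is a CPP precisely when $a^{-1}x^{d}+x$ is a PP, equivalently (composing with the bijection $y\mapsto ay$) when $x^{d}+ax$ is a PP. By the additive-character criterion for permutation polynomials recalled at the start of Section~\ref{preliminaries}, this means $\sum_{x\in\mathbb{F}_{p^{n}}}\chi_{n}(\alpha(x^{d}+ax))=0$ for every $\alpha\in\mathbb{F}_{p^{n}}^{*}$. Using $\gcd(d,p^{n}-1)=1$ I would write $\alpha=\beta^{d}$ for the unique $\beta\in\mathbb{F}_{p^{n}}^{*}$ and substitute $u=\beta x$, which turns the sum into
\[
\sum_{x\in\mathbb{F}_{p^{n}}}\chi_{n}\bigl((\beta x)^{d}+\beta^{d-1}a(\beta x)\bigr)=\sum_{u\in\mathbb{F}_{p^{n}}}\chi_{n}\bigl(u^{d}+\beta^{d-1}a\,u\bigr).
\]
By Lemma~\ref{2.4} this vanishes as soon as $\beta^{d-1}a\in S$, so everything reduces to showing that $S$ is stable under the map $a\mapsto\beta^{d-1}a$.

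For that I would first observe that a typical element $\omega^{t(q-1)+(q-1)/2}$ of $S$ satisfies $\bigl(\omega^{t(q-1)+(q-1)/2}\bigr)^{q+1}=\omega^{(q^{2}-1)/2}=-1$, and since both $S$ and $\{c\in\mathbb{F}_{p^{n}}^{*}:c^{q+1}=-1\}$ have exactly $q+1$ elements they coincide; equivalently $S=\{c:c\overline{c}=-1\}$, the description already used in Lemma~\ref{2.3}. Hence $\beta^{d-1}a\in S$ iff $(\beta^{d-1})^{q+1}=1$ for all $\beta$, i.e.\ iff $(q^{2}-1)\mid(d-1)(q+1)$. Since $d-1=\tfrac{q^{2}-1}{2}+q-1=\tfrac{(q-1)(q+3)}{2}$ and $q+3$ is even, $(d-1)(q+1)=(q^{2}-1)\cdot\tfrac{q+3}{2}$ is indeed a multiple of $q^{2}-1$. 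This completes the reduction, and together with Lemma~\ref{2.4} the theorem follows.

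The genuinely hard analytic work---evaluating the Weil-type sum $\sum_{x}\chi_{n}(x^{d}+ax)$ by means of Lemmas~\ref{2.1}--\ref{2.3}---is already packaged inside Lemma~\ref{2.4}, so the only remaining obstacle is the arithmetic bookkeeping in the last step: correctly identifying $S$ with the coset $\{c:c^{q+1}=-1\}$ and verifying the divisibility $(q^{2}-1)\mid(d-1)(q+1)$ that makes $S$ invariant under multiplication by $\beta^{d-1}$. If one prefers to avoid identifying $S$ altogether, the same divisibility can be exploited directly on exponents: writing $a=\omega^{t(q-1)+(q-1)/2}$ and $\beta=\omega^{j}$, the identity $d-1=\tfrac{(q-1)(q+3)}{2}$ shows $j(d-1)\equiv 0\pmod{q-1}$, so $\beta^{d-1}a=\omega^{t'(q-1)+(q-1)/2}$ for a suitable integer $t'$, i.e.\ $\beta^{d-1}a\in S$.
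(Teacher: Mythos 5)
Your proposal is correct and follows essentially the same route as the paper: reduce to $x^{d}+ax$ being a PP, substitute $\alpha=\beta^{d}$ to shift the coefficient to $\beta^{d-1}a$, and invoke Lemma~\ref{2.4} after checking that $S$ is stable under multiplication by $\beta^{d-1}$. You additionally spell out the verification of $\gcd(d,p^{n}-1)=1$ and the identification $S=\{c:c^{p^{2k}+1}=-1\}$ together with the divisibility $(q^{2}-1)\mid(d-1)(q+1)$, details which the paper asserts without proof; these are all correct.
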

\begin{proof}
Since $\textup{gcd}(d,p^{n}-1)=1$, for each $a\in S$, the monomial $a^{-1}x^{d}$ is a PP over $\mathbb{F}_{p^{n}}$. To finish the proof, it suffices to prove that $x^{d}+ax$ permutes $\mathbb{F}_{p^{n}}$.

The fact $\textup{gcd}(d,p^{n}-1)=1$ shows that each nonzero element $\alpha\in\mathbb{F}_{p^{n}}$ can be represented as $\alpha=\beta^{d}$ for a unique $\beta\in\mathbb{F}_{p^{n}}^{*}$. Then
 \begin{align*}
\sum_{x\in\mathbb{F}_{p^{n}}}\chi_{n}(\alpha(x^{d}+ax))&=\sum_{x\in\mathbb{F}_{p^{n}}}\chi_{n}((\beta x)^{d}+\beta^{d-1}a\beta x) \\
          &=\sum_{x\in\mathbb{F}_{p^{n}}}\chi_{n}(x^{d}+\beta^{d-1}ax)\\
          &=\sum_{x\in\mathbb{F}_{p^{n}}}\chi_{n}(x^{d}+\beta^{(\frac{p^{2k}+1}{2}+1)(p^{2k}-1)}ax).
\end{align*}
Since $\beta^{(\frac{p^{2k}+1}{2}+1)(p^{2k}-1)}a\in S$, it follows
from Lemma \ref{2.4} that for each
$\alpha\in\mathbb{F}_{p^{n}}^{*}$, we have
 $$\sum_{x\in\mathbb{F}_{p^{n}}}\chi_{n}(\alpha(x^{d}+ax))=0.$$
  This completes the proof.
\end{proof}

\section{A class of trinomial CPPs}\label{trinomial cpp}
In this section, we consider a class of trinomial polynomials and show that they are CPPs.

\begin{theorem}
For any odd prime $p$,
$f(x)=-x+x^{\frac{p^{2m}+1}{2}}+x^{\frac{p^{2m}+1}{2}p^m}$ is a CPP
over $\mathbb{F}_{p^{3m}}$.
\end{theorem}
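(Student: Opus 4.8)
The plan is to show both that $f$ is a permutation of $\mathbb{F}_{p^{3m}}$ and that $f(x)+x = x^{\frac{p^{2m}+1}{2}} + x^{\frac{p^{2m}+1}{2}p^m}$ is also a permutation. The second part is a monomial-type sum: writing $e = \frac{p^{2m}+1}{2}$, we have $f(x)+x = x^e + x^{ep^m} = x^e + (x^e)^{p^m} = \mathrm{Tr}^{2m}_m$-like expression, but more precisely $x^e + (x^e)^{p^m}$. The key observation is that $\gcd(e, p^{3m}-1)$ should be computed; since $p^{2m}+1 \equiv 2 \pmod{p^m-1}$ and one checks $e$ against $p^m+1$ and $p^{2m}+p^m+1$, and then $x \mapsto x^e$ followed by $y \mapsto y + y^{p^m}$, the latter being essentially a trace-type map from $\mathbb{F}_{p^{3m}}$ related to $\mathrm{Tr}^{3m}_m$ — actually $y + y^{p^m} + y^{p^{2m}} = \mathrm{Tr}^{3m}_m(y)$, so $y+y^{p^m} = \mathrm{Tr}^{3m}_m(y) - y^{p^{2m}}$, which complicates things. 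I would instead directly analyze $g(y) = y + y^{p^m}$ on $\mathbb{F}_{p^{3m}}$: its kernel consists of $y$ with $y^{p^m} = -y$, i.e. $y^{p^m - 1} = -1$ when $y \neq 0$; since $p^m-1 \mid p^{3m}-1$ with quotient $p^{2m}+p^m+1$ which is odd, $-1$ is a $(p^m-1)$-th power iff $-1 \in (\mathbb{F}_{p^{3m}}^*)^{p^m-1}$, and one checks this fails, so $g$ is injective, hence bijective. Combined with $x^e$ being a permutation (after verifying the gcd condition), $f(x)+x$ permutes $\mathbb{F}_{p^{3m}}$.

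For the harder part, that $f$ itself is a PP, I would use a substitution exploiting the structure $\frac{p^{2m}+1}{2}$ and $\frac{p^{2m}+1}{2}p^m$. Note that modulo $p^{3m}-1$ we have $p^{3m} \equiv 1$, so $p^{2m} \equiv p^{-m}$, giving $\frac{p^{2m}+1}{2} \equiv \frac{p^{-m}+1}{2}$ in the exponent group after clearing, and $\frac{p^{2m}+1}{2}p^m \equiv \frac{1 + p^m}{2}$ times a unit. This suggests writing $f(x) = -x + x^{(p^m+1)/2 \cdot p^{-m}(\text{something})}$; more cleanly, I would substitute $x = z^2$ is not available since squaring is not a bijection, so instead I would split $\mathbb{F}_{p^{3m}}^*$ according to whether $x$ is a square. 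On squares $x = u^2$, $x^{\frac{p^{2m}+1}{2}} = u^{p^{2m}+1} = u \cdot u^{p^{2m}} = \mathrm{N}$-type, and $x^{\frac{p^{2m}+1}{2}p^m} = (u^{p^{2m}+1})^{p^m} = u^{p^m}u^{p^{3m}} = u^{p^m} u = u^{p^m+1}$. So on squares, $f(u^2) = -u^2 + u^{p^{2m}+1} + u^{p^m+1}$. On non-squares write $x = \gamma u^2$ for a fixed non-square $\gamma$; then $x^{\frac{p^{2m}+1}{2}} = \gamma^{\frac{p^{2m}+1}{2}} u^{p^{2m}+1}$, and since $\frac{p^{2m}+1}{2}$ is... one must track the sign $\gamma^{(p^{2m}+1)/2}$, which is $\pm\sqrt{\gamma^{p^{2m}+1}} = \pm \mathrm{N}^{3m}_m(\gamma)^{?}$. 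This reduces the problem to understanding the map $u \mapsto -u^2 + u^{p^{2m}+1} + u^{p^m+1}$ (and a signed variant) on appropriate cosets.

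The key step then is to recognize $u^{p^{2m}+1} + u^{p^m+1} = u^{p^m+1}(u^{p^{2m}-p^m} + 1) = u^{p^m+1}(u^{p^m(p^m-1)}+1)$, and relate this to the subfield structure $\mathbb{F}_{p^m} \subset \mathbb{F}_{p^{3m}}$. I would introduce the norm $N = \mathrm{N}^{3m}_m(u) = u^{1+p^m+p^{2m}} \in \mathbb{F}_{p^m}$ and try to express $f$ in terms of $u$ and $N$, aiming to show that $f(u^2)$ determines $u^2$ uniquely by first recovering $N$ (or $u^{p^m+1}$, etc.) via a subfield argument, then solving a low-degree equation for $u$. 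Alternatively — and this is probably the cleaner route — I would use the AGW criterion (Lemma \ref{awg}) with $\lambda(x) = x^{p^m+1}$ or $\lambda(x) = x^{(p^{2m}+1)/2 \cdot (p^m-1)}$ mapping into a subgroup, $h$ the induced map, reducing PP-ness of $f$ to bijectivity of an induced map on a smaller set plus injectivity on fibers; the fibers of $x \mapsto x^{p^m+1}$ have size $p^m+1$ (the norm-one subgroup from $\mathbb{F}_{p^{3m}}$ to $\mathbb{F}_{p^{2m}}$... careful, $\gcd(p^m+1, p^{3m}-1) = p^m+1$ since $p^{3m}-1 = (p^m-1)(p^{2m}+p^m+1)$ and $p^m+1 \mid p^{2m}-1 \mid$? no — $p^m+1 \nmid p^{3m}-1$ in general; rather $\gcd(p^m+1,p^{3m}-1)=\gcd(p^m+1,2)$ when $3$ is odd, wait $p^{3m}-1 = (p^m)^3 - 1$ and $\gcd((p^m)^3-1, p^m+1) = \gcd(-2, p^m+1) = 2$). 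So $x \mapsto x^{p^m+1}$ has fibers of size $2$ on the image, which matches the square/non-square split above.

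\textbf{Main obstacle.} The crux will be the injectivity-on-fibers and surjectivity-of-induced-map analysis: after the square/non-square reduction, showing that $u \mapsto -u^2 + u^{p^{2m}+1} + u^{p^m+1}$ is injective on $\mathbb{F}_{p^{3m}}$ (or the appropriate coset) up to the sign ambiguity $u \leftrightarrow -u$. Concretely, if $f(u^2) = f(w^2)$ one gets $u^2 - w^2 = (u^{p^{2m}+1}-w^{p^{2m}+1}) + (u^{p^m+1} - w^{p^m+1})$; I would apply $\mathrm{Tr}^{3m}_m$, use that $\mathrm{Tr}^{3m}_m(u^{p^{2m}+1})$ and similar terms are symmetric functions landing in $\mathbb{F}_{p^m}$, and try to force $u = \pm w$. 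Handling the characteristic-independent sign bookkeeping from $\gamma^{(p^{2m}+1)/2}$ on the non-square coset, and verifying all the relevant gcd conditions ($\gcd(\frac{p^{2m}+1}{2}, p^{3m}-1)$, which requires $p^{2m}+1 \equiv 2 \pmod{p^m-1}$ and coprimality with $p^{2m}+p^m+1$), are the routine-but-delicate pieces; the genuinely hard part is the final uniqueness argument for the quadratic-plus-norm-type equation, where I expect to need a careful case analysis on whether $u \in \mathbb{F}_{p^m}$, $u \in \mathbb{F}_{p^{2m}}$, or generates $\mathbb{F}_{p^{3m}}$, together with the irreducibility facts in Lemma \ref{lempp1}.
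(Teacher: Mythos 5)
The easy half of your argument (that $f(x)+x$ permutes $\mathbb{F}_{p^{3m}}$) is sound and matches the paper: $f(x)+x=g\bigl(x^{(p^{2m}+1)/2}\bigr)$ with $g(y)=y+y^{p^m}$, the exponent $\tfrac{p^{2m}+1}{2}$ is coprime to $p^{3m}-1$ (since $\gcd(p^{2m}+1,p^{3m}-1)=2$ and $\tfrac{p^{2m}+1}{2}$ is odd), and $g$ is bijective because $-1$ is not a $(p^m-1)$-th power, $\tfrac{p^{3m}-1}{p^m-1}=p^{2m}+p^m+1$ being odd. But the hard half --- that $f$ itself is a permutation --- is not proved: you outline a square/non-square decomposition, reduce to injectivity of $u\mapsto -u^2+u^{p^{2m}+1}+u^{p^m+1}$ ``up to sign,'' and then explicitly leave the uniqueness argument as an obstacle you ``would try'' to resolve. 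Beyond being incomplete, this plan has a structural gap you never mention: splitting $\mathbb{F}_{p^{3m}}^*$ into squares and non-squares and proving injectivity on each coset separately would still not give injectivity of $f$, because you must also show the two image sets are disjoint, and nothing in your sketch addresses that.

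The missing idea is that you do not need a $2$-to-$1$ parametrization at all: since $\tfrac{p^{2m}+1}{2}$ is already coprime to $p^{3m}-1$, substitute $x\mapsto x^{(p^{2m}+1)/2}$ as a \emph{bijection}. One checks $\tfrac{p^{2m}+1}{2}\,(1+p^m-p^{2m})\equiv 1\pmod{p^{3m}-1}$, so $f(x)=h\bigl(x^{(p^{2m}+1)/2}\bigr)$ with $h(x)=x+x^{p^m}-x^{1+p^m-p^{2m}}$, and it suffices to show $h$ is a PP. For $x\neq 0$ the equation $h(x)=a$ reads $x^{1+p^{2m}}+x^{p^m+p^{2m}}-x^{1+p^m}=ax^{p^{2m}}$; applying Frobenius $(\cdot)^{p^m}$ and $(\cdot)^{p^{2m}}$ and adding the two resulting equations collapses the cubic-looking terms to $2x^{1+p^m}=a^{p^m}x+a^{p^{2m}}x^{p^m}$, and the substitution $y=1/x$ turns this into the affine $p^m$-linearized equation $y^{p^m}+a^{p^{2m}-p^m}y-2a^{-p^m}=0$, which has at most one solution precisely because $y^{p^m-1}=-a^{p^{2m}-p^m}$ is insoluble --- the very same ``$-1$ is not a $(p^m-1)$-th power'' fact you already used for $g$. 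Together with $h(x)=0\Rightarrow x=0$ (obtained the same way), this gives injectivity of $h$. So your toolkit contains all the needed ingredients, but the reduction that makes them applicable --- the exponent substitution producing $h$ and the Frobenius-summation linearization --- is absent, and without it the proposal does not constitute a proof.
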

\begin{proof}
Note that
$$f(x)+x=x^{\frac{p^{2m}+1}{2}}+x^{\frac{p^{3m}+p^m}{2}}=g(x^{\frac{p^{2m}+1}{2}})$$
with $g(x)=x+x^{p^m}$. It follows that $f(x)+x$ is a PP over
$\mathbb{F}_{p^{3m}}$ if and only if $g(x)$ is so, since
$\textup{gcd}(\frac{p^{2m}+1}{2},p^{3m}-1)=1.$ By Lemma
\ref{cpplem1},   $g(x)$ is a PP over $\mathbb{F}_{p^{3m}}$, so $f(x)+x$ is a PP.

In the following, we show that $f(x)$ is a PP over $\mathbb{F}_{p^{3m}}$.
Write $h(x):=x+x^{p^m}-x^{1+p^m-p^{2m}}$, so that $f(x)=h(x^{\frac{p^{2m}+1}{2}})$. Since
$\textup{gcd}(\frac{p^{2m}+1}{2},p^{3m}-1)=1$, $f(x)$ is a PP over
$\mathbb{F}_{p^{3m}}$ if and only if $h(x)$ is a PP over
$\mathbb{F}_{p^{3m}}.$ Note that $h(0)=0$ and for $x\ne 0$,
$$h(x)=\frac{x^{1+p^{2m}}+x^{p^m+p^{2m}}-x^{1+p^m}}{x^{p^{2m}}}.$$

We first prove that $h(x)=0$ if and only if $x=0$. Suppose to the contrary that
$h(x)=0$ for some $x\ne 0$, that is,
$$x^{1+p^{2m}}+x^{p^m+p^{2m}}-x^{1+p^m}=0.$$
Raising the above equation to the $p^m$-th power and the
$p^{2m}$-th power respectively, we obtain
\begin{align*}
x^{1+p^m}+x^{1+p^{2m}}-x^{p^m+p^{2m}}&=0,\\
x^{p^m+p^{2m}}+x^{1+p^m}-x^{1+p^{2m}}&=0.
\end{align*}
Adding the above two equations together, we deduce that $2x^{1+p^m}=0$. So $x=0$,
contradicting to the assumption that $x\in\mathbb{F}_{p^{3m}}^{*}$. Thus $h(x)=0$ if and only if $x=0$.

Next, we prove that $h(x)=a$ has at most one nonzero solution for each $a\in\mathbb{F}_{p^{3m}}^{*}$, that is
\begin{equation}\label{eq:cpp2}
x^{1+p^{2m}}+x^{p^m+p^{2m}}-x^{1+p^m}=ax^{p^{2m}}
\end{equation}
has at most one nonzero solution for each $a\in
\mathbb{F}_{p^{3m}}^{*}.$ Raising both sides of Eq. \eqref{eq:cpp2} to the power of $p^m$ and $p^{2m}$ respectively, we
get
\begin{align*}
x^{1+p^m}+x^{1+p^{2m}}-x^{p^m+p^{2m}}&=a^{p^m}x,\\
x^{p^m+p^{2m}}+x^{1+p^m}-x^{1+p^{2m}}&=a^{p^{2m}}x^{p^m}.
\end{align*}
Adding the above two equations together, we obtain
\begin{equation}\label{eq:cpp5}
2x^{1+p^m}=a^{p^m}x+a^{p^{2m}}x^{p^m}.
\end{equation}
Because $x\neq0$, we have $2x^{p^m}=a^{p^m}+a^{p^{2m}}x^{p^m-1}$.
Setting $y=\frac{1}{x}$, we get
\begin{equation}\label{eq:cpp6}
y^{p^m}+a^{p^{2m}-p^m}y-2a^{-p^m}=0.
\end{equation}
If Eq. \eqref{eq:cpp6} has at least two distinct nonzero
solutions $y_1$, $y_2$ in $\mathbb{F}_{p^{3m}}$ we get $y_1-y_2\in
\mathbb{F}_{p^{3m}}$ is a root of $y^{p^m}+a^{p^{2m}-p^m}y=0$, and
so a root of $y^{p^m-1}+a^{p^{2m}-p^m}=0,$ contradicting to the fact
that $y^{p^m-1}+a^{p^{2m}-p^m}=0$ has no solution in
$\mathbb{F}_{p^{3m}}^{*}.$ Therefore, Eq. \eqref{eq:cpp6} has
at most one nonzero solution in $\mathbb{F}_{p^{3m}}.$ Hence
$h(x)=a$ has at most one nonzero solution for each nonzero $a\in
\mathbb{F}_{p^{3m}}$.

To sum up, we have shown that $h(x)$ is a PP, and thus $f(x)$ is a PP. The proof is now complete.
\end{proof}

\section{Two classes of trinomial PPs}\label{trinomial pp}
It looks difficult to give a simple characterization of trinomial PPs over finite fields. In \cite{DQWYY15}, the authors use different tricks including the multivariate method introduced by Dobbertin \cite{dobb,D99} to construct several classes of trinomial PPs. In this section, we construct two classes of trinomial PPs over $\mathbb{F}_{2^m}$ by similar techniques.
\begin{theorem}
Let $m>1$ be an odd integer, and write $k=\frac{m+1}{2}.$ Then for each $u\in\mathbb{F}_{2^m}^{*}$,
$f(x)=x+u^{2^{k-1}-1}x^{2^k-1}+u^{2^{k-1}}x^{2^k+1}$ is a PP over
$\mathbb{F}_{2^m}$.
\end{theorem}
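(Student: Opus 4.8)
The plan is to absorb the parameter $u$ by a monomial substitution, reducing to the $u$-free trinomial $g(w)=w+w^{2^{k}-1}+w^{2^{k}+1}$, and then to prove that $g$ permutes $\mathbb{F}_{2^{m}}$ by the multivariate technique used in \cite{DQWYY15,dobb}. Since $\gcd(2,2^{m}-1)=1$, squaring is a bijection of $\mathbb{F}_{2^{m}}$ and I write $u^{1/2}$ for the unique square root of $u$. With $q=2^{k}$ and $s=2^{k-1}$ (so that $q=2s$), a direct computation with exponents, all reduced modulo $2^{m}-1$, shows that $f\bigl(u^{-1/2}w\bigr)=u^{-1/2}\,g(w)$ for every $w\in\mathbb{F}_{2^{m}}$; indeed the exponents of $u$ in the three terms of $f(u^{-1/2}w)$ all collapse to $-1/2$. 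Consequently $f$ is a PP precisely when $g$ is, and the rest of the argument concerns $g$.

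The structural fact I will use is that $m=2k-1$ forces $2^{2k}\equiv 2\pmod{2^{m}-1}$, hence $w^{q^{2}}=w^{2}$ for all $w\in\mathbb{F}_{2^{m}}$. Fix $v\in\mathbb{F}_{2^{m}}$ and suppose $g(w)=v$ with $w\neq 0$. Put $a=w$ and $b=w^{q}$, so that $b^{q}=w^{q^{2}}=a^{2}$. Using $w^{q-1}+w^{q+1}=w^{q}(1+w)^{2}/w$, the equation $g(w)=v$ rewrites as $b(1+a)^{2}=a(v+a)$. If $a=1$, this forces $v=1$, which is realized by the genuine solution $w=1$ (note $g(1)=1$). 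Otherwise $b=a(v+a)/(1+a)^{2}$, and I substitute this into $b^{q}=a^{2}$ (applying the Frobenius map $x\mapsto x^{q}$ to the quotient, using $a^{q}=b$) and clear $(1+a)^{4}$. After a routine expansion in characteristic $2$, in which the top-degree terms cancel, this yields
\begin{equation*}
a\,(1+a)^{2}\,(aC+V)=0,\qquad\text{where } C=v^{2}+v^{q}+1,\ V=v^{q+1}.
\end{equation*}
Since $a\notin\{0,1\}$, I obtain $aC+V=0$.

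It then remains to check that $C\neq 0$ for every $v\in\mathbb{F}_{2^{m}}$, which is what makes solving $a=V/C$ legitimate. Suppose $C=0$, i.e. $v^{2}+v^{q}=1$. Raising to the $2^{k-1}$-th power and using $v^{2^{2k-1}}=v^{2^{m}}=v$ gives $v^{q}+v=1$; adding the two relations yields $v^{2}=v$, so $v\in\{0,1\}$, and for each of these $C=1$, a contradiction. Hence $a=V/C=v^{q+1}/(v^{2}+v^{q}+1)$ is completely determined by $v$. Assembling the cases — $w=0$ occurs only for $v=0$, $w=1$ only for $v=1$, and in all other cases $w$ must equal the displayed rational function of $v$ — shows that for each $v$ the equation $g(w)=v$ has at most one solution; as $\mathbb{F}_{2^{m}}$ is finite, $g$ is a permutation, hence so is $f$.

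The steps I expect to be delicate are the expansion that produces $a(1+a)^{2}(aC+V)=0$, because the cancellations in characteristic $2$ are easy to get wrong, and the final case bookkeeping, where one must be sure that clearing denominators neither loses a solution nor introduces a spurious one and that the degenerate values $w\in\{0,1\}$ match exactly with $v\in\{0,1\}$. The nonvanishing of $C$ is really the heart of the matter, since it is what licenses the explicit inversion $w=v^{q+1}/(v^{2}+v^{q}+1)$.
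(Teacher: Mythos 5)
Your proof is correct, and it takes a genuinely different route from the paper's. The paper does not normalize $u$ away: it squares $f$, works with $h(x)=f(x)^{2}=x^{2}+u^{2^{k}-2}x^{2^{k+1}-2}+u^{2^{k}}x^{2^{k+1}+2}$, and argues by contradiction from $h(x)=h(x+ax)$ with $a\neq 0,1$; after introducing $y=x^{2^{k}}$, $b=a^{2^{k}}$ and eliminating variables, it reduces to a quadratic $\gamma^{2}+\gamma+D=0$ whose solvability is controlled by a lengthy trace computation ($\textup{Tr}_{m}(D_1)=1$), and finally to the relation $a^{8}+a^{7}+a^{6}+a^{5}+a^{4}+a^{2}+1=0$, which is impossible because that polynomial is irreducible of degree $8$ over $\mathbb{F}_{2}$ and $\gcd(8,m)=1$. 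Your argument instead exploits the fact that the exponents of $u$ in the coefficients are exactly tuned so that $f(u^{-1/2}w)=u^{-1/2}g(w)$ with $g(w)=w+w^{2^{k}-1}+w^{2^{k}+1}$, and then inverts $g$ directly: I checked the key expansion, and clearing $(1+a)^{4}$ from $b(v^{q}+b)=a^{2}(1+b)^{2}$ with $b=a(v+a)/(1+a)^{2}$ does produce $a(1+a)^{2}\bigl(a(v^{2}+v^{q}+1)+v^{q+1}\bigr)=0$ after the $a^{6}$ terms cancel, and your proof that $C=v^{2}+v^{q}+1\neq 0$ (raising to the $2^{k-1}$-th power and using $2^{2k-1}\equiv 2^{m}\equiv 1$) is sound, as is the final counting step, since a self-map of a finite set with at most one preimage per point is a bijection. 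What your approach buys is a substantially shorter proof, no appeal to irreducibility of an auxiliary degree-$8$ polynomial or to trace identities, and as a bonus an explicit formula $w=v^{q+1}/(v^{2}+v^{q}+1)$ for the preimage of $v\notin\{0,1\}$ under $g$; the paper's method, while heavier here, is the template it reuses for the second trinomial family, where a clean normalization of $u$ is not available in the same way.
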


\begin{proof}
Since $\textup{gcd}(2,2^m-1)=1$, we need only to show that
$h(x)=(f(x))^2=x^2+u^{2^{k}-2}x^{2^{k+1}-2}+u^{2^k}x^{2^{k+1}+2}$ is
a PP over $\mathbb{F}_{2^m}$. Let $\overline{u}=u^{2^k}$ and $y=x^{2^k}$.

First, we prove that $h(x)=0$ if and only if $x=0$. Clearly, if
$x=0$, then $h(x)=0$. Conversely, if there exists some
$x\in\mathbb{F}_{2^m}^{*}$ such that
\begin{equation}\label{eq:pp1}
u^{2}x^{4}+\overline{u}y^2+\overline{u}u^{2}x^{4}y^{2}=0,
\end{equation}
raising both sides of Eq. \eqref{eq:pp1} to the $2^k$-th power
gives us
$$\overline{u}^{2}y^{4}+u^{2}x^4+\overline{u}^{2}u^{2}x^{4}y^{4}=0.$$
Since $\textup{gcd}(2,2^m-1)=1$, we have
\begin{equation}\label{eq:pp2}
\overline{u}y^{2}+ux^2+\overline{u}ux^{2}y^{2}=0.
\end{equation}
Adding Eq. \eqref{eq:pp1} and Eq. \eqref{eq:pp2} together, we obtain
\begin{equation}\label{eq:pp3}
u^{2}x^{4}+ux^2+\overline{u}u^{2}x^{4}y^{2}+\overline{u}ux^{2}y^2=0,
\end{equation}
which can be factorized as $ux^2(1+ux^{2})^{1+2^k}=0$. It follows that $x^2=\frac{1}{u}$, i.e.,
 $x=\frac{1}{u^{2^{m-1}}}.$ But $h(\frac{1}{u^{2^{m-1}}})=\frac{1}{u}\neq0$, which is a
contradiction. Hence $h(x)=0$ if and only if $x=0$.

Next, if $h(x)$ is not a PP, then there exist $x\in
\mathbb{F}_{2^m}^{*}$ and $a\in \mathbb{F}_{2^m}^{*}$ such that
$h(x)=h(x+ax).$ Let $b=a^{2^k}.$ It is clear that $a,b\neq0,1.$
Since $h(x)=h(x+ax)$, we have
$$\frac{u^{2}x^4+\overline{u}y^2+\overline{u}u^{2}y^{2}x^4}{u^{2}x^2}=\frac{u^{2}(a^4+1)x^4+\overline{u}(b+1)^{2}y^{2}+\overline{u}u^2(b+1)^{2}y^{2}(a+1)^{4}x^{4}}{u^{2}(a+1)^{2}x^2},$$
which simplifies to
\begin{equation}\label{eq:pp4}
A_{1}x^{2}y^{2}+A_{2}y^{2}+A_{3}x^{2}=0,
\end{equation}
with $A_{1}=(a^{2}b^{2}+a^2+b^2+b)u\overline{u}$,
$A_{2}=(b^{2}+b)\overline{u}$, and $A_{3}=(b+a^2)u$.

We claim that $A_1A_2A_3\ne 0$. If $A_{1}=0$, we get
$(b+1)^{2}a^2=b(b+1).$ Thus $a^2=\frac{b}{b+1}.$ Raising both sides
to the $2^k$-th power, we get $b^2=\frac{a^2}{a^2+1}.$ Then $b^2=b$,
which leads to $b=0$ or $1$, a contradiction. So $A_{1}\neq0$. A
similar discussion shows that $A_{2},A_{3}\neq0$.

Raising both sides of Eq. \eqref{eq:pp4} to the $2^k$-th power,
we have
\begin{equation}\label{eq:pp5}
A_{1}^{2^k}x^{4}y^{2}+A_{3}^{2^k}y^{2}+A_{2}^{2^k}x^{4}=0.
\end{equation}
By Eq. \eqref{eq:pp4} and Eq. \eqref{eq:pp5}, cancelling $y^2$, we get
\begin{equation}\label{eq:pp6}
B_{1}x^{4}+B_{2}x^{2}+B_{3}=0,
\end{equation}
where $B_{1}=A_{3}A_{1}^{2^k}+A_{1}A_{2}^{2^k}=(b^{3}(a+1)^4)\overline{u}u^3\neq0,\ B_{2}=A_{2}^{2^k+1}\neq0,\ B_{3}=A_{3}^{2^k+1}\neq0.$

Substituting $x^2=\frac{B_2}{B_1}\gamma$ into Eq. \eqref{eq:pp6}, we obtain
\begin{equation}\label{eq:pp7}
\gamma^2+\gamma+D=0,
\end{equation}
where $D=\frac{B_{1}B_{3}}{B_{2}^2}=D_1+D_{1}^{2^k}$ and $D_1=\frac{A_{1}A_{3}^{2^k+1}}{A_{2}^{2^k+2}}=\frac{A_{1}B_3}{A_{2}B_{2}}$. We also have

\begin{align*}
\textup{Tr}_{m}(D_1)&=\textup{Tr}_{m}\left(\frac{A_1}{A_2}(\frac{A_3}{A_2})^{2^k+1}\right)\\
&=\textup{Tr}_{m}\left((1+a^2+\frac{a^2}{b})\frac{(a^2+b)(a+b)^2}{a^{2}(a+1)^{2}b(b+1)}\right)\\
&=\textup{Tr}_{m}\left(\frac{(a^2+b)(a+b)^2}{a^{2}(a+1)^{2}b(b+1)}+\frac{(a^2+b)(a+b)^2}{(a+1)^{2}b(b+1)}+\frac{(a^2+b)(a+b)^2}{(a+1)^{2}b^{2}(b+1)}\right)\\
&=\textup{Tr}_{m}\left(\frac{a^2}{(a+1)^{2}b(b+1)}+\frac{1}{(a+1)^{2}}+\frac{b^2}{a^{2}(a+1)^{2}(b+1)}+\frac{a^4}{(a+1)^{2}b(b+1)}+\frac{a^2}{(a+1)^{2}}+\right.\\
&{}\left.\frac{b^2}{(a+1)^{2}(b+1)}+\mathbf{}\frac{a^4}{(a+1)^{2}b^{2}(b+1)}+\frac{a^2}{b(a+1)^{2}}+\frac{b}{(a+1)^{2}(b+1)}\right)\\
&=\textup{Tr}_{m}\left(\frac{a^2}{(a+1)^{2}b(b+1)}+\frac{a^2}{b(a+1)^{2}}+\frac{b}{(a+1)^{2}(b+1)}\right)+\textup{Tr}_{m}\left(\frac{1}{(a+1)^{2}}+\frac{a^2}{(a+1)^{2}}\right)\\
&{}+\textup{Tr}_{m}\left(\frac{b^2}{a^{2}(a+1)^{2}(b+1)}+\frac{a^4}{(a+1)^{2}b(b+1)}+\frac{b^2}{(a+1)^{2}(b+1)}+\frac{a^4}{(a+1)^{2}b^{2}(b+1)}\right)\\
&=1.
\end{align*}

Raising both sides of Eq. \eqref{eq:pp7} to the $2^i$-th power,
$0\leq i\leq d-1,$ and then summing them up, we get
\begin{equation}\label{eq:pp8}
\gamma^{2^k}=\gamma+\sum_{i=0}^{k-1}(D_1+D_1^{2^k})^{2^i}=\gamma+D_1+\textup{Tr}_{m}(D_1)=\gamma+D_1+1.
\end{equation}
It follows that
\begin{equation}\label{eq:pp9}
\gamma^{2^k+1}=D_1\gamma+D.
\end{equation}
Combining Eqs. \eqref{eq:pp4}, \eqref{eq:pp8} and \eqref{eq:pp9}, we obtain
\begin{equation}\label{eq:pp10}
C_1\gamma+C_2=0,
\end{equation}
where $C_1=A_{1}A_{2}^{2^k+2}\neq0$ and $C_2=A_{2}^{2}B_{1}\neq0.$ Thus
$\gamma=\frac{C_2}{C_1}.$ Then from Eq. \eqref{eq:pp7}, we get
$$B_{1}A_{2}^{2}+A_{1}A_{2}B_{2}=A_{1}^{2}B_{3},$$
which leads to
$$b^2(b^2+a^4)=a^4(b^4+a^2).$$
Note that $\textup{gcd}(2,2^m-1)=1$ whence
$$a^2b^2+b^2+a^2b+a^3=0.$$
Raising both sides of the above equation to the $2^k$-th power, we
get
$$a^4b^2+a^4+a^2b^2+b^3=0.$$
It follows that
$$b^2(a^4+b)=a^2(a^2+b^2)=(a^3+a^2b)(a+b)=b^2(1+a^2)(a+b).$$
Then $$b=\frac{a^3+a^2+1}{a},$$ and raising both sides of the above
equation to the $2^k$-th power, we deduce that
$$a^8+a^7+a^6+a^5+a^4+a^2+1=0.$$
Since $x^8+x^7+x^6+x^5+x^4+x^2+1$ is irreducible over
$\mathbb{F}_{2},$ by Lemma \ref{lempp1}, $x^8+x^7+x^6+x^5+x^4+x^2+1$
is irreducible over $\mathbb{F}_{2^m}.$ Hence $a\notin
\mathbb{F}_{2^m}^{*}$, which is a contradiction. This
completes the proof.
\end{proof}

\begin{theorem}
Let $m>1$ be an odd integer such that $m=2k-1.$ Then
$f(x)=x+ux^{2^k-1}+u^{2^k}x^{2^m-2^{k+1}+2}, u\in
\mathbb{F}_{2^m}^{*},$ is a PP over
$\mathbb{F}_{2^m}$.
\end{theorem}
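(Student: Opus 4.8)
The plan is to mimic the multivariate strategy used in the previous theorem: reduce to showing $h(x) = (f(x))^{2^{k-1}}$ (or a convenient power of $f$) is a permutation, then first check that the only preimage of $0$ is $0$, and finally derive a contradiction from the assumption that $h(x) = h((1+a)x)$ for some nonzero $x$ and some $a \ne 0,1$. Since $\gcd(2, 2^m-1)=1$, replacing $f$ by a $2$-power of $f$ costs nothing, and the point of doing so is to turn the exponents $2^k-1$ and $2^m - 2^{k+1}+2$ into sparse polynomials in $x$ and $y := x^{2^k}$; because $m = 2k-1$ we have $2^k \cdot 2^k \equiv 2 \pmod{2^m-1}$, so the Frobenius $x \mapsto x^{2^k}$ is essentially an involution up to a squaring, which is exactly what makes the $(x,y)$-system close up. I would set $\overline u = u^{2^k}$, $b = a^{2^k}$, and rewrite $h(x) = h((1+a)x)$ as a polynomial identity $P(x,y) = 0$ with coefficients that are rational expressions in $a,b,u,\overline u$.

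The core of the argument is then a resultant-style elimination, carried out in two stages exactly as in the preceding proof. First I would apply the Frobenius $x \mapsto x^{2^k}$ to the relation $P(x,y)=0$ to obtain a companion relation $\widetilde P(x,y)=0$, and eliminate $y$ (or $y^2$) between the two to get a low-degree polynomial equation in $x$ alone, whose coefficients $B_i$ are explicit in $a,b,u$. Along the way I must verify that none of these coefficients vanishes — each vanishing would force, after another Frobenius twist, an equation like $b^2 = b$ or $a^2 = a$, hence $a,b \in \{0,1\}$, contradicting $a \ne 0,1$. After a normalizing substitution $x^2 = \frac{B_2}{B_1}\gamma$ I expect to land on an Artin–Schreier equation $\gamma^2 + \gamma + D = 0$ with $D = D_1 + D_1^{2^k}$, so that $\mathrm{Tr}_m(D) = 0$ automatically and a solution exists only when the trace condition on $D_1$ is met; computing $\mathrm{Tr}_m(D_1)$ and showing it equals $1$ (forcing no solution, hence a contradiction) OR, if it can be $0$, pushing through the second elimination. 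In the latter case I would use $\gamma^{2^k} = \gamma + D_1 + \mathrm{Tr}_m(D_1)$ together with the original relation to solve linearly for $\gamma$, substitute back into the quadratic, and reduce to a single polynomial equation in $a$ alone.

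The expected endgame, in parallel with the previous theorem, is that this final equation in $a$ is $g(a) = 0$ for some fixed polynomial $g \in \mathbb{F}_2[x]$ that is irreducible over $\mathbb{F}_2$ of degree coprime to $m$; by Lemma~\ref{lempp1} it stays irreducible over $\mathbb{F}_{2^m}$, so it has no root in $\mathbb{F}_{2^m}^*$, giving the desired contradiction. (One should double-check whether the relevant polynomial here is again $x^8+x^7+x^6+x^5+x^4+x^2+1$ or a different one forced by the shifted exponent $2^m-2^{k+1}+2$; its degree must be odd or at least coprime to every admissible $m$, which is the compatibility one has to confirm.)

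The main obstacle, as in the companion proof, is purely computational bookkeeping: carrying the rational coefficients $A_i, B_i, C_i$ through two Frobenius-elimination steps without sign or exponent errors, and in particular proving each non-vanishing claim $A_i \ne 0$, $B_i \ne 0$, $C_i \ne 0$ by the "twist and get $b^2=b$" trick. The conceptually delicate point is the trace computation for $D_1$: one must expand $D_1$ into partial-fraction-type summands in $a$ and $b$, use $\mathrm{Tr}_m(z^{2^k}) = \mathrm{Tr}_m(z^2) = \mathrm{Tr}_m(z)$ to collapse Frobenius-conjugate terms, and show the whole thing telescopes to a constant; getting this constant right ($0$ versus $1$) determines which of the two branches above actually occurs and hence the precise shape of the final polynomial $g(a)$.
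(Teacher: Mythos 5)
Your proposal is a plan rather than a proof: every step that carries the actual mathematical content --- the explicit eliminated equations, the non-vanishing of the coefficients $A_i,B_i,C_i$, the value of $\textup{Tr}_{m}(D_1)$, and above all the identity and irreducibility of the final polynomial $g(a)$ --- is deferred, with an explicit ``one should double-check.'' For the preceding theorem those computations \emph{are} the proof, so transplanting its template without executing it establishes nothing here. More importantly, the template has the wrong shape for this trinomial. Writing $y=x^{2^k}$, $\overline{u}=u^{2^k}$, $\overline{a}=a^{2^k}$ and clearing denominators (note $2^m-2^{k+1}+2\equiv 3-2^{k+1}\pmod{2^m-1}$, so the third monomial is $\overline{u}x^3/y^2$ for $x\neq 0$), the equation $f(x)=a$ becomes $x^2y^2+uy^3+\overline{u}x^4+axy^2=0$. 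Applying the Frobenius $z\mapsto z^{2^k}$ (using $2^{2k}\equiv 2\pmod{2^m-1}$) and taking two suitable combinations, the cubic terms cancel and one arrives at $(\overline{a}+u\overline{u})x^3+ax^2y+auy^2=0$ and then at the \emph{linear} relation $(\overline{a}+\overline{u}u^2+\overline{a}u)y+a\overline{a}x=0$; solving gives the unique solution in closed form, $x=a^3\overline{a}^2/(b\overline{b})$ with $b=a^2+\overline{u}u^2+\overline{a}u\neq 0$. No Artin--Schreier quadratic $\gamma^2+\gamma+D=0$, no trace evaluation, and no irreducible octic in $a$ ever appear, and there is no reason to expect the $u$-dependence to cancel down to a fixed polynomial over $\mathbb{F}_2$ the way it happens to in the previous theorem. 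Your predicted endgame is therefore not merely unverified but does not describe what this equation actually does.

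Two smaller points. First, your opening reduction to $h(x)=(f(x))^{2^{k-1}}$ is unnecessary and unhelpful here: the exponent $2^m-2^{k+1}+2$ is already even, and the right normalization is simply to multiply $f(x)=a$ by $xy^2$ to clear the pole at $0$. Second, the part of your plan that does match the paper is the treatment of the fiber over $0$: assuming $x^2y^2+uy^3+\overline{u}x^4=0$ with $x\neq 0$, the same Frobenius-and-combine step forces $x^2=uy$, hence $x=u^{2^{k-1}+1}$, which is then checked directly not to be a root of $f$. That portion of your outline is sound; the injectivity half, as proposed, is not a proof and is aimed at the wrong target.
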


\begin{proof}
We first prove that $f(x)=0$ if and only if $x=0$. Let
$\overline{u}=u^{2^k}$ and $y=x^{2^k}$. Clearly, if $x=0$ then $f(x)=0$.
Conversely, suppose there exists some $x\in \mathbb{F}_{2^m}^{*}$
such that
\begin{equation}\label{eq:pp11}
x^{2}y^{2}+uy^3+\overline{u}x^{4}=0.
\end{equation}
Raising both sides of the above equation to the $2^k$-th power, we
get
\begin{equation}\label{eq:pp12}
x^{4}y^{2}+u^{2}y^4+\overline{u}x^{6}=0.
\end{equation}
Multiplying both sides of Eq. \eqref{eq:pp11} by $x^2$,
we obtain
\begin{equation}\label{eq:pp13}
x^{4}y^{2}+ux^{2}y^3+\overline{u}x^{6}=0.
\end{equation}
Adding Eq. \eqref{eq:pp12} and Eq. \eqref{eq:pp13} together, we have
$$ux^{2}y^{3}+u^{2}y^4=0.$$
It follows that $x^2=uy$. So we have $x=u^{2^{k-1}+1}$, which leads
to a contradiction since $f(u^{2^{k-1}+1})=u^{2^{k-1}+1}\neq0.$ Thus
$f(x)=0$ if and only if $x=0.$

Next, let $\overline{a}=a^{2^k}.$ We will show that $f(x)=a$ has a unique nonzero solution for
each nonzero $a\in \mathbb{F}_{2^m}$. That is, for the
equation
\begin{equation}\label{eq:pp14}
x^{2}y^{2}+uy^3+\overline{u}x^{4}+axy^2=0,
\end{equation} there exists a unique  solution $x\in
\mathbb{F}_{2^m}^{*}.$
Raising both sides of Eq. \eqref{eq:pp14} to the $2^k$-th power
and multiplying Eq. \eqref{eq:pp14} by $x^2,$ we get
\begin{equation}\label{eq:pp15}
x^{4}y^{2}+\overline{u}x^6+u^{2}y^{4}+\overline{a}x^{4}y=0,
\end{equation} and
\begin{equation}\label{eq:pp16}
x^{4}y^{2}+ux^{2}y^3+\overline{u}x^{6}+ax^{3}y^2=0.
\end{equation}
Summing Eq. \eqref{eq:pp15} and Eq. \eqref{eq:pp16}, and dividing by $y$, we
have
\begin{equation}\label{eq:pp17}
u^{2}y^3+ux^{2}y^2+\overline{a}x^4+ax^3y=0.
\end{equation}
Computing Eq. \eqref{eq:pp14}$\cdot u+$Eq. \eqref{eq:pp17}, and dividing by
$x$, we obtain
\begin{equation}\label{eq:pp18}
(\overline{a}+u\overline{u})x^3+ax^{2}y+auy^2=0.
\end{equation}
Raising both sides of Eq. \eqref{eq:pp18} to the $2^k$-th
power, and then adding $\overline{a}\cdot$Eq. \eqref{eq:pp14}, we have
\begin{equation}\label{eq:pp19}
(\overline{a}+\overline{u}u^2+\overline{a}u)y+a\overline{a}x=0.
\end{equation}
Solving Eqs. \eqref{eq:pp18} and \eqref{eq:pp19}, we get
\begin{equation}\label{eq:pp20}
x=\frac{a^3\overline{a}^2}{b\overline{b}},
\end{equation}
Here, $b=a^2+\overline{u}u^2+\overline{a}u$ and $\overline{b}=b^{2^k},$ which can be directly verified to be nonzero.
  This completes the proof.
\end{proof}
\section{Differential properties of power functions}\label{section6}
In this section, we consider the differential uniformity of monomial PPs. We first recall the basic definitions.
\begin{definition}
Let $F$ be a function from $\mathbb{F}_{2^n}$ to $\mathbb{F}_{2^m}$. For any $a\in\mathbb{F}_{2^n}$, the derivative of $F$ with respect to $a$ is the function $D_{a}(F)$ from $\mathbb{F}_{2^n}$ into $\mathbb{F}_{2^m}$ defined by
$$D_{a}(F(x))=F(x+a)+F(x),\ x\in\mathbb{F}_{2^n}.$$
\end{definition}
The resistance to differential cryptanalysis is related to the following quantities.
\begin{definition}
Let $F$ be a function from $\mathbb{F}_{2^n}$ into $\mathbb{F}_{2^n}$. For any $a$ and $b$ in $\mathbb{F}_{2^n}$, we denote
$$\delta(a,b)=\sharp\{x\in\mathbb{F}_{2^n}|D_{a}(F(x))=b\}.$$
Then the differential uniformity of $F$ is
$$\delta(F)=\textup{max}_{a\neq0,b\in\mathbb{F}_{2^n}}\delta(a,b).$$
\end{definition}

\begin{remark}
In the case $F(x)=x^d$ is a monomial, for any nonzero $a\in \mathbb{F}_{2^n},$
the equation $(x+a)^d+x^d=b$ can be rewritten as $a^d\left((\frac{x}{a}+1)^d+(\frac{x}{a})^d\right)=b$.
This implies that $\delta(a,b)=\delta(1,b/a^d)$. Therefore, for a monomial function,
 the differential properties are determined by the values $\delta(1,b), b \in \mathbb{F}_{2^n}.$
  From now on, we denote the quantity $\delta(1,b)$ by $\delta(b)$ for monomial functions.
\end{remark}

 The following two lemmas can be found in \cite{BRS}, which will be used later.
\begin{lemma}{\rm\cite{BRS}}\label{lem1}
For a positive integer $m$ and $a,b \in \mathbb{F}_{2^m},a\neq0$,
the quadratic equation $x^2+ax+b=0$ has solutions in
$\mathbb{F}_{2^m}$ if and only if $\textup{Tr}_{m}(\frac{b}{a^2})=0$.
\end{lemma}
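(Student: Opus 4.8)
The plan is to reduce the general quadratic to Artin--Schreier form and then run a short dimension count. Since $a\neq 0$, the substitution $x=ay$ is a bijection of $\mathbb{F}_{2^m}$, and it turns $x^2+ax+b=0$ into $a^2y^2+a^2y+b=0$, i.e. into $y^2+y=b/a^2$. Thus the original equation has a solution in $\mathbb{F}_{2^m}$ if and only if $y^2+y=c$ does, where $c:=b/a^2$, and it suffices to prove that the latter is solvable exactly when $\textup{Tr}_m(c)=0$.

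For this I would view $L\colon y\mapsto y^2+y$ as an $\mathbb{F}_2$-linear map from $\mathbb{F}_{2^m}$ to itself. Its kernel consists of the $y$ with $y^2=y$, which is precisely $\mathbb{F}_2=\{0,1\}$; hence $\ker L$ has two elements and the image of $L$ is an $\mathbb{F}_2$-subspace of $\mathbb{F}_{2^m}$ of size $2^{m-1}$. On the other side, $\textup{Tr}_m$ is a nonzero $\mathbb{F}_2$-linear form $\mathbb{F}_{2^m}\to\mathbb{F}_2$, so its kernel is also an $\mathbb{F}_2$-subspace of size $2^{m-1}$.

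The key point is the inclusion of the image of $L$ in the kernel of $\textup{Tr}_m$: for every $y\in\mathbb{F}_{2^m}$ the Frobenius relation $\textup{Tr}_m(y^2)=\textup{Tr}_m(y)$ gives $\textup{Tr}_m(y^2+y)=0$. Two additive subgroups of $\mathbb{F}_{2^m}$ of equal size, one contained in the other, must coincide; therefore the image of $L$ equals the kernel of $\textup{Tr}_m$. Hence $y^2+y=c$ has a solution if and only if $\textup{Tr}_m(c)=0$, and unwinding $c=b/a^2$ finishes the argument.

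I do not foresee a genuine obstacle; the only care needed is in the counting --- that $\ker L$ is exactly $\mathbb{F}_2$, and that $\textup{Tr}_m$ really is onto $\mathbb{F}_2$ so that both subgroups have index $2$ --- after which the equality of two nested equal-size subgroups is immediate. If one prefers an explicit witness when $m$ is odd, one can instead take the half-trace $y=\sum_{i=0}^{(m-1)/2}c^{2^{2i}}$ and verify directly that $y^2+y=c+\textup{Tr}_m(c)$, so that $y$ solves $y^2+y=c$ whenever $\textup{Tr}_m(c)=0$; but the linear-algebra proof above handles all $m$ at once and is what I would write up.
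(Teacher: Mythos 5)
Your proof is correct. Note, however, that the paper does not prove this lemma at all: it is quoted from Berlekamp--Rumsey--Solomon \cite{BRS} as a known fact, so there is no internal argument to compare against. What you give is the standard proof, and every step checks out: the substitution $x=ay$ (legitimate since $a\neq 0$) reduces the equation to the Artin--Schreier form $y^2+y=b/a^2$; the map $y\mapsto y^2+y$ is $\mathbb{F}_2$-linear with kernel exactly $\{0,1\}$, so its image is a subgroup of index $2$ contained in the index-$2$ kernel of the (nonzero, hence surjective) trace form, forcing equality. Your closing remark about the explicit half-trace solution $y=\sum_{i=0}^{(m-1)/2}c^{2^{2i}}$ for odd $m$ is also correct, and in fact odd $m$ is precisely the case used in Section 6 of the paper.
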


\begin{lemma}{\rm\cite{BRS}}\label{lem2}
For a positive integer $m$ and $a \in \mathbb{F}_{2^m}^{*}$, the
cubic equation $x^3+x+a=0$ has
\begin{enumerate}
  \item[(1)] a unique solution in $\mathbb{F}_{2^m}$ if and only if
$\textup{Tr}_{m}(a^{-1}+1)=1$;
  \item[(2)] three distinct solutions in $\mathbb{F}_{2^m}$ if and only if
$p_{m}(a)=0,$ where the polynomial $p_{m}(x)$ is recursively defined
by the equations
$p_{1}(x)=p_{2}(x)=x,\ p_{k}(x)=p_{k-1}(x)+x^{2^{k-3}}p_{k-2}(x)$ for $k\geq 3;$
  \item[(3)] no solution in $\mathbb{F}_{2^m}$, otherwise.
\end{enumerate}
\end{lemma}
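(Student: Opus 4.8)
The plan is to pin down, for each nonzero $a$, exactly how the cubic $f(x)=x^3+x+a$ factors over $\mathbb{F}_{2^m}$. First I would observe that $f$ is separable: in characteristic $2$ its derivative is $f'(x)=x^2+1=(x+1)^2$, and since $f(1)=a\neq0$ the root $x=1$ of $f'$ is not a root of $f$, so $\textup{gcd}(f,f')=1$ and $f$ has three distinct roots $x_1,x_2,x_3$ in $\overline{\mathbb{F}_2}$. Consequently the number $N$ of roots lying in $\mathbb{F}_{2^m}$ can only be $0$, $1$, or $3$ (a cubic over a field cannot have exactly two roots), and these three possibilities are precisely cases (3), (1), (2). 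The entire task is therefore to decide, from $a$ alone, which case occurs, and I would do this through two successive dichotomies.

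The first dichotomy separates $N=1$ from $N\in\{0,3\}$ and yields part (1). I would introduce the resolvent pair
$$u=x_1x_2^2+x_2x_3^2+x_3x_1^2,\qquad v=x_1^2x_2+x_2^2x_3+x_3^2x_1,$$
and note that the even permutations of the roots (the identity and the two $3$-cycles) fix $u$ and $v$, whereas each transposition interchanges them. Since $N$ equals the number of roots fixed by the Frobenius $\phi\colon t\mapsto t^{2^m}$, the map $\phi$ acts as a transposition exactly when $N=1$ and as an even permutation exactly when $N\in\{0,3\}$. Hence $N=1$ iff $u\notin\mathbb{F}_{2^m}$, while $N\in\{0,3\}$ iff $u,v\in\mathbb{F}_{2^m}$. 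Now $u$ and $v$ are the two roots of $T^2+(u+v)T+uv$, whose coefficients are symmetric and hence lie in $\mathbb{F}_{2^m}$; using the elementary symmetric relations $e_1=0$, $e_2=1$, $e_3=a$ together with Newton's identities I would compute $u+v=a$ and $uv=1+a^2$. By Lemma \ref{lem1}, this quadratic splits over $\mathbb{F}_{2^m}$ iff $\textup{Tr}_{m}((1+a^2)/a^2)=0$, and since $\textup{Tr}_{m}(y^2)=\textup{Tr}_{m}(y)$ this equals $\textup{Tr}_{m}(a^{-1}+1)$. Therefore $N=1$ iff $\textup{Tr}_{m}(a^{-1}+1)=1$, which is part (1); simultaneously $N\in\{0,3\}$ iff $\textup{Tr}_{m}(a^{-1}+1)=0$.

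For the second dichotomy I would assume $\textup{Tr}_{m}(a^{-1}+1)=0$, so $N\in\{0,3\}$, and use that $N=3$ iff every root lies in $\mathbb{F}_{2^m}$ iff $f(x)\mid x^{2^m}-x$ iff $x^{2^m}\equiv x\pmod{f(x)}$. Working in $R=\mathbb{F}_{2^m}[x]/(f)$, I would reduce each $x^{2^k}$ to the form $\alpha_kx^2+\beta_kx+\gamma_k$; from $x^3=x+a$ (so $x^4=x^2+ax$) and squaring one gets the recursion $\alpha_{k+1}=\alpha_k^2+\beta_k^2$, $\beta_{k+1}=a\alpha_k^2$, $\gamma_{k+1}=\gamma_k^2$ with $(\alpha_0,\beta_0,\gamma_0)=(0,1,0)$, so $\gamma_k\equiv0$ and $\alpha_k,\beta_k\in\mathbb{F}_2[a]$. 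Then $N=3$ iff $(\alpha_m,\beta_m)=(0,1)$, and $N=0$ (part (3)) is the complementary case. The remaining, and hardest, step is to show that this fixed-point condition is equivalent to the vanishing of the recursively defined $p_m(a)$. I would do this by identifying $p_k$ with an explicit combination of $\alpha_k,\beta_k$ and checking the recurrence $p_k=p_{k-1}+x^{2^{k-3}}p_{k-2}$ by induction against the $(\alpha_k,\beta_k)$ recursion; as a consistency check, $p_3(a)=a+a^2$, whose only nonzero root $a=1$ indeed makes $x^3+x+1$ split over $\mathbb{F}_{8}$.

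The main obstacle is exactly this last matching. The trace criterion of the first dichotomy is clean and self-contained, but reconciling the nonlinear squaring recursion governing $(\alpha_k,\beta_k)$ with the linear recursion defining $p_k$, and controlling how iterated squaring interacts with reduction modulo $f$, is where the real work lies; this is the part of the Berlekamp--Rumsey--Solomon analysis that must be carried out in full to justify the precise form of $p_m$.
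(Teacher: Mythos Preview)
The paper does not prove this lemma at all: it is quoted from \cite{BRS} and used as a black box, so there is no ``paper's proof'' to compare against. What you have written is therefore not a comparison target but an attempted independent proof of a cited result.

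As a proof, your outline is sound for part~(1) and for the trichotomy $N\in\{0,1,3\}$, and your resolvent computation $u+v=a$, $uv=1+a^{2}$ is correct. However, part~(2) has a genuine gap that you yourself flag: you never actually establish that the condition $x^{2^{m}}\equiv x\pmod{f}$ is equivalent to $p_{m}(a)=0$. You set up the squaring recursion for $(\alpha_{k},\beta_{k})$ and propose to ``identify $p_{k}$ with an explicit combination of $\alpha_{k},\beta_{k}$,'' but you do not say which combination, nor do you verify that the nonlinear recursion $\alpha_{k+1}=\alpha_{k}^{2}+\beta_{k}^{2}$, $\beta_{k+1}=a\alpha_{k}^{2}$ collapses to the \emph{linear} (in the $p_{k}$'s) recursion $p_{k}=p_{k-1}+x^{2^{k-3}}p_{k-2}$. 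This is precisely the substance of the Berlekamp--Rumsey--Solomon argument, and without it your part~(2) is only a plan, not a proof. If you intend to complete it, the clean route is to show by induction that $\beta_{k}=a^{2^{k-2}}p_{k-1}(a)^{2}$ (or an equivalent closed form), from which both $p_{m}(a)=0\Rightarrow(\alpha_{m},\beta_{m})=(0,1)$ and the converse can be read off; the trace condition from part~(1) is needed in the converse direction to rule out the $N=0$ case.
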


As a preparation, we have the following lemma.

\begin{lemma}\label{lem3}
Let $n=2m$ with $m$ odd, $b \in
\mathbb{F}_{2^n}\setminus\mathbb{F}_{2^m}$ and $y \in
\mathbb{F}_{2^m}\setminus\mathbb{F}_{2}.$ Then the number of
solutions of the equation $x^4+y^2(x^2+x+1)+x+1+b=0$ is $0\textup{ or }
4.$ Moreover, if $x_0$ is a solution, then the other three solutions
are given by $x_0+1,\ x_1$ and $x_1+1,$ where $x_1$ satisfies
$x_{1}^2+x_1=x_{0}^2+x_0+1+y^2.$
\end{lemma}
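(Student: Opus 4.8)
The plan is to analyze the quartic equation $x^4+y^2(x^2+x+1)+x+1+b=0$ directly by exploiting the obvious symmetry $x\mapsto x+1$. First I would observe that the map $x\mapsto x+1$ fixes the solution set: substituting $x+1$ for $x$ leaves $x^4$ unchanged (since $(x+1)^4=x^4+1$ and similarly $x^2+x$ is invariant under $x\mapsto x+1$, so $(x+1)^2+(x+1)+1=x^2+x+1$), and $x+1+b$ becomes $x+b$, so the two "$+1$" contributions cancel. Hence solutions come in pairs $\{x_0,x_0+1\}$, and the number of solutions is even. To pin it down to $0$ or $4$, I would pass to a resolvent: set $u=x^2+x$, so that $x^4=u^2+u+ (x^2+x)=u^2+x^2+x = u^2 + u$ — more precisely $x^4+x^2 = (x^2+x)^2+ (x^2+x) \cdot 0$... let me instead write $x^4 = (x^2)^2$ and use $x^2 = u+x$, giving $x^4 = u^2 + x^2 = u^2+u+x$. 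Substituting into the equation yields $u^2+u+x + y^2 u + y^2 + x+1+b = u^2+(1+y^2)u + (y^2+1+b)=0$, a quadratic in $u$ alone.

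So the equation reduces to $u^2+(1+y^2)u+(y^2+1+b)=0$ together with $x^2+x=u$. Since $y\in\mathbb{F}_{2^m}\setminus\mathbb{F}_2$, we have $1+y^2\neq0$, so the $u$-quadratic has either $0$ or $2$ roots $u$ in $\mathbb{F}_{2^n}$; call them $u_0,u_1$ when they exist, with $u_0+u_1=1+y^2$ and $u_0u_1 = y^2+1+b$. For each such $u_i$, the equation $x^2+x=u_i$ has either $0$ or $2$ solutions, and if $x_0$ solves $x^2+x=u_0$ then $x_0+1$ is the other. The key point is that \emph{if one $u_i$ gives solutions, so does the other}: I would show that the trace conditions $\textup{Tr}_n(u_0)=0$ and $\textup{Tr}_n(u_1)=0$ are equivalent given that the $u$-quadratic is solvable. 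Indeed $\textup{Tr}_n(u_0)+\textup{Tr}_n(u_1)=\textup{Tr}_n(1+y^2)=\textup{Tr}_n(1)+\textup{Tr}_n(y^2)$; since $n=2m$ is even, $\textup{Tr}_n(1)=0$, and since $y\in\mathbb{F}_{2^m}$, $\textup{Tr}_n(y^2)=\textup{Tr}_m(\textup{Tr}^n_m(y^2))=\textup{Tr}_m(0)=0$ because $\textup{Tr}^n_m(z)=z+z^{2^m}=0$ for $z\in\mathbb{F}_{2^m}$. Hence $\textup{Tr}_n(u_0)=\textup{Tr}_n(u_1)$, so either both $u_i$ lift to a pair of $x$'s (total $4$ solutions) or neither does (total $0$).

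It remains to identify the four solutions as $x_0,x_0+1,x_1,x_1+1$ with $x_1^2+x_1=x_0^2+x_0+1+y^2$. Given a solution $x_0$ with $x_0^2+x_0=u_0$, the other root of the $u$-quadratic is $u_1=u_0+(1+y^2)=x_0^2+x_0+1+y^2$, and the solutions above $u_1$ are exactly the $x_1$ with $x_1^2+x_1=u_1$, together with $x_1+1$; this is precisely the stated description. I would also need the non-degeneracy remark that $u_0\neq u_1$ (guaranteeing the four $x$-solutions are genuinely distinct, not a repeated pair): this holds because $u_0+u_1=1+y^2\neq0$. The main obstacle — really the only non-routine step — is the trace computation showing $\textup{Tr}_n(1+y^2)=0$, which forces the two branches to behave identically; once that symmetry is in hand the count $0$ or $4$ is immediate and the explicit form of the solutions follows by unwinding the substitution $u=x^2+x$. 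Everything else is bookkeeping with the Artin–Schreier structure of $x\mapsto x^2+x$ and Lemma~\ref{lem1}.
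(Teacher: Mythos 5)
Your proof is correct and follows essentially the same route as the paper: both arguments rest on the observation that the quartic is a quadratic in $u=x^2+x$ (the paper writes this as the factorization $(x^2+x+x_0^2+x_0)(x^2+x+x_0^2+x_0+1+y^2)$), and both hinge on the trace computation $\textup{Tr}_{n}(1+y^2)=0$, which forces the two Artin--Schreier fibres to be simultaneously solvable or simultaneously empty. The only difference is presentational: you set up the resolvent quadratic in $u$ before assuming a solution exists, whereas the paper starts from a given solution $x_0$ and exhibits the factorization directly.
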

\begin{proof}
If $x_0$ is a solution of equation $x^4+y^2(x^2+x+1)+x+1+b=0$, then
$x_0+1$ is also a solution. So we have
$$x^4+y^2(x^2+x+1)+x+1+b=(x^2+x+x_{0}^2+x_0)(x^2+x+x_{0}^2+x_0+1+y^2).$$
Since $\textup{Tr}_{n}(x_{0}^2+x_0+1+y^2)=0$, by
Lemma~\ref{lem1} the equation $x^2+x+x_{0}^2+x_0+1+y^2=0$ also has $2$
solutions and these two solutions are different from $x_0$ and
$x_0+1$. Hence the number of solutions of the equation
$x^4+y^2(x^2+x+1)+x+1+b=0$ is $0\textup{ or } 4.$ The second part is
obvious.
\end{proof}

We now state our result.
\begin{theorem}
Let $n=2m$ with $m$ odd and $d=2^{m+1}+3.$ Then $F_{d}:x\rightarrow
x^d$ is a permutation over $\mathbb{F}_{2^n}$ with $\delta(F_d)\leq
10$. Moreover, for $b\in\mathbb{F}_{2^{m}}$, we have $\delta(b)\in\{0,4\}$.
\end{theorem}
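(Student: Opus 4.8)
The plan is to compute $(x+1)^d+x^d$ explicitly and then, for each $b$, count the $x\in\mathbb{F}_{2^n}$ with $(x+1)^d+x^d=b$; by the Remark preceding the statement this count is $\delta(b)$. Write $\bar x:=x^{2^m}$, so $x^{2^{m+1}}=\bar x^{2}$ and $\bar x^{2^m}=x$ on $\mathbb{F}_{2^n}$. From $(x+1)^{2^{m+1}}=\bar x^{2}+1$ and $(x+1)^3=x^3+x^2+x+1$ one gets
$$(x+1)^d+x^d=\bar x^{2}(x^2+x+1)+x^3+x^2+x+1=:P(x),$$
and we write $(\star)$ for the equation $P(x)=b$. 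First I would record two facts: (i) if $x$ solves $(\star)$ then so does $x+1$, hence $\delta(b)$ is even; and (ii) adding $(\star)$ to its image under $x\mapsto x^{2^m}$ and writing $s:=x+\bar x=\textup{Tr}_m^n(x)\in\mathbb{F}_{2^m}$, all mixed terms cancel and one is left with $s^3+s=b+b^{2^m}=\textup{Tr}_m^n(b)=:\beta$. Thus the trace $s$ of any solution is a root of $T(s):=s^3+s+\beta$; since $T'(s)=(s+1)^2$ and $T(1)=\beta$, $T$ has no repeated root, so $T$ has $0$, $1$, or $3$ roots in $\mathbb{F}_{2^m}$, and they sum to $0$. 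I will bound $\delta(b)$ by grouping the solutions according to the value of $s$.

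\emph{Case $b\in\mathbb{F}_{2^m}$.} Then $\beta=0$, so $T(s)=s(s+1)^2$ forces $s\in\{0,1\}$. If $s=0$ then $x\in\mathbb{F}_{2^m}$ and $(\star)$ becomes $x^4+x=1+b$; the map $x\mapsto x^4+x=(x^2+x)^2+(x^2+x)$ is two-to-one from $\mathbb{F}_{2^m}$ onto $\{y:\textup{Tr}_m(y)=0\}$ (because $w\mapsto w^2+w$ restricts to a bijection of that hyperplane), so this contributes $2$ if $\textup{Tr}_m(b)=1$ and $0$ otherwise. If $s=1$ then $\bar x=x+1$ and $(\star)$ becomes $x^4+x^2=b$, i.e. $x^2+x=\sqrt b\in\mathbb{F}_{2^m}$, which has exactly two roots in $\mathbb{F}_{2^n}$; by Lemma~\ref{lem1} these lie in $\mathbb{F}_{2^m}$ exactly when $\textup{Tr}_m(\sqrt b)=\textup{Tr}_m(b)=0$, hence they carry $x+\bar x=1$ (so contribute $2$) precisely when $\textup{Tr}_m(b)=1$. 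Adding the two contributions, $\delta(b)=4$ if $\textup{Tr}_m(b)=1$ and $\delta(b)=0$ if $\textup{Tr}_m(b)=0$; in particular $\delta(b)\in\{0,4\}$ for $b\in\mathbb{F}_{2^m}$.

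\emph{Case $b\notin\mathbb{F}_{2^m}$.} Here $\beta\neq0$, so every root $s$ of $T$ lies in $\mathbb{F}_{2^m}\setminus\mathbb{F}_2$. Fix such a root. For $x$ with $x+\bar x=s$, substituting $\bar x=x+s$ in $(\star)$ gives $E_s(x):=x^4+s^2(x^2+x+1)+x+1+b=0$, which is exactly the equation of Lemma~\ref{lem3} with parameter $s$ (its hypotheses hold since $b\notin\mathbb{F}_{2^m}$ and $s\in\mathbb{F}_{2^m}\setminus\mathbb{F}_2$); hence $E_s$ has $0$ or $4$ roots in $\mathbb{F}_{2^n}$, and when nonempty they are $x_0,x_0+1,x_1,x_1+1$ with $x_1^2+x_1=x_0^2+x_0+1+s^2$. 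The solutions of $(\star)$ in this group are exactly the roots of $E_s$ that satisfy $x+\bar x=s$. Now $x_0$ and $x_0+1$ share the value of $x+\bar x$, as do $x_1$ and $x_1+1$; and with $e:=x_0+x_1$ one has $e^2+e=1+s^2$, so if $x_0+\bar x_0=s$ then $x_1+\bar x_1=s+\textup{Tr}_m^n(e)$, where $\textup{Tr}_m^n(e)\in\{0,1\}$ and $\textup{Tr}_m^n(e)=0$ iff $w^2+w=1+s^2$ is solvable in $\mathbb{F}_{2^m}$, iff (by Lemma~\ref{lem1}, using $\textup{Tr}_m(1)=1$ for odd $m$) $\textup{Tr}_m(s)=1$. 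Consequently the contribution of $s$ is $0$ or $4$ when $\textup{Tr}_m(s)=1$, and $0$ or $2$ when $\textup{Tr}_m(s)=0$. Summing over the roots of $T$: if $T$ has at most one root, $\delta(b)\leq4$; if $T$ has three roots $s_1,s_2,s_3$, then $s_1+s_2+s_3=0$ forces $\textup{Tr}_m(s_1)+\textup{Tr}_m(s_2)+\textup{Tr}_m(s_3)=0$ in $\mathbb{F}_2$, so some $\textup{Tr}_m(s_i)=0$, and then $\delta(b)\leq4+4+2=10$.

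Finally, $F_d$ is a permutation since $\textup{gcd}(d,2^n-1)=1$: reducing $d=2^{m+1}+3$ gives $d\equiv5\pmod{2^m-1}$ with $5\nmid 2^m-1$ for odd $m$ (as $2^m\equiv2$ or $3\pmod5$) and $d\equiv1\pmod{2^m+1}$, while $\textup{gcd}(2^m-1,2^m+1)=1$. Combining the two cases, $\delta(F_d)=\max_b\delta(b)\leq10$, and $\delta(b)\in\{0,4\}$ for $b\in\mathbb{F}_{2^m}$. The step I expect to be the main obstacle is the bookkeeping in the case $b\notin\mathbb{F}_{2^m}$: the crude bound only yields $3\cdot4=12$, and one must verify that a root $s$ with $\textup{Tr}_m(s)=0$ contributes at most $2$ by tracking which of the four roots of $E_s$ actually have trace $s$; it is the linear relation $s_1+s_2+s_3=0$ among the roots of $T$ (so that not all three can have $\textup{Tr}_m(s_i)=1$) that reduces the bound from $12$ to $10$.
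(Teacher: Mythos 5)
Your proposal is correct and follows essentially the same route as the paper: the same reduction to the system $\{x^4+y^2(x^2+x+1)+x+1+b=0,\ y^3+y+a=0,\ x+\overline{x}=y\}$ with $a=b+\overline{b}$, the same two cases $b\in\mathbb{F}_{2^m}$ and $b\notin\mathbb{F}_{2^m}$, and the same use of Lemma~\ref{lem3} together with the trace identity $\textup{Tr}_m(y_1)+\textup{Tr}_m(y_2)+\textup{Tr}_m(y_3)=\textup{Tr}_m(0)=0$ to exclude $\delta(b)=12$. Your per-root bookkeeping (contribution $0$ or $4$ when $\textup{Tr}_m(s)=1$, and $0$ or $2$ when $\textup{Tr}_m(s)=0$) is just a slightly more explicit packaging of the paper's contradiction argument.
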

\begin{proof} It can be verified that $\textup{gcd}(d,2^{n}-1)=1$, so $F_{d}$ is a permutation. For each $x \in \mathbb{F}_{2^n},$ let $\overline{x}:=x^{2^m}.$ It is clear that $x+\overline{x} \in \mathbb{F}_{2^m}$ and $x\overline{x} \in \mathbb{F}_{2^m}.$ We can then verify that $$D_{1}(F_{d}(x))=(x+1)^d+x^d=(\overline{x}^2+x)(x^2+x+1)+1=(\overline{x}x)^2+((\overline{x}+x)^2+1)(x+1).$$
Then it suffices to show that for any $b \in \mathbb{F}_{2^n}$, the equation
 $D_{1}(F_{d}(x))=b$ has at most $10$ solutions.

Assume that
\begin{equation}\label{6.1}
(\overline{x}x)^2+((\overline{x}+x)^2+1)(x+1)=b.
\end{equation}
Raising both sides of Eq. \eqref{6.1} to the $2^m$-th power, we get
\begin{equation}\label{6.2}
(\overline{x}x)^2+((\overline{x}+x)^2+1)(\overline{x}+1)=\overline{b}.
\end{equation}
Adding Eq. \eqref{6.1} and Eq. \eqref{6.2} together, we have
\begin{equation}\label{6.3}
((\overline{x}+x)^2+1)(\overline{x}+x)=b+\overline{b}.
\end{equation}
Setting $y:=\overline{x}+x\in \mathbb{F}_{2^m}$ and $a:=b+\overline{b}\in \mathbb{F}_{2^m},$ we obtain
\begin{equation}\label{6.4}
y^3+y+a=0.
\end{equation}
Replacing $\overline{x}=y+x$ into Eq. \eqref{6.1}, we have
\begin{equation}\label{6.5}
x^4+y^2(x^2+x+1)+x+1+b=0.
\end{equation}
Therefore, $x$ is a solution of Eq. \eqref{6.1} if and only if it is a solution to following equations
\begin{equation}\label{eqs1}
\left\{
\begin{array}{rcl}
x^4+y^2(x^2+x+1)+x+1+b=0, \\[1.0ex]
y^3+y+a=0, \\[1.0ex]
\overline{x}+x=y.
\end{array}\right.
\end{equation}

Hence $\delta(F_d)\leq 12$. Below, we consider the two cases where $a=0$ and $a\neq0$.

{\bf Case 1: $a=0$}

It is easy to see that $b \in \mathbb{F}_{2^m}$ and $0,\ 1$ are solutions of Eq. \eqref{6.4}. We consider these two cases separately.
\begin{enumerate}
  \item[(1)] If $y=0$, then $x\in \mathbb{F}_{2^m}.$ Thus Eq. \eqref{eqs1} will become $x^4+x+1=b,$ which has either $0$ solution or $2$ solutions. And it has $2$ solutions if and only if $\textup{Tr}_{m}(b)=1$.
  \item[(2)] If $y=1$, then Eq. \eqref{eqs1} becomes
  \begin{equation}\label{eqs2}
  \left\{\begin{array}{rcl}
  x^4+x^2+b=0, \\[1.0ex]
  \overline{x}+x=1.
  \end{array}\right.
  \end{equation}
  Since $\textup{gcd}(2,2^n-1)=1$, equation $x^4+x^2+b=0$ is equivalent to $x^2+x+b^{2^{m-1}}=0$. Clearly it has $2$ solutions by Lemma~\ref{lem1}. Since $1=\overline{x}+x=\sum_{i=0}^{m-1}(x^2+x)^{2^i}=\textup{Tr}_{m}(b^{2^{m-1}})$ we get that Eq. \eqref{eqs2} has $2$ solutions if and only if $\textup{Tr}_{m}(b)=1$.
\end{enumerate}

Note that, for both cases, Eq. \eqref{eqs1} has two solutions if and only if $\textup{Tr}_{m}(b)=1$. Therefore, $\delta(b) \in \{0,4\}$.

{\bf Case 2: $a\neq0$}

In this case it is obvious that $b \not\in \mathbb{F}_{2^m}$ and $y \not\in
\mathbb{F}_{2}$.

By Lemma~\ref{lem2}, Eq. \eqref{6.4} has $0$, $1$ or $3$ solutions. We consider these cases separately.

\begin{enumerate}
  \item[(1)] If Eq. \eqref{6.4} has no solution, then $\delta(b)=0.$
  \item[(2)] If Eq. \eqref{6.4} has one solution, say $y_0$, then by Lemma~\ref{lem3}, Eq. \eqref{6.5} has 0 or 4 solutions given by $x_{11},x_{11}+1,x_{21}$ and $x_{21}+1$. However, we need $x_{i1}+\overline{x_{i1}}=y_0$ holds for $i=1,2$. Thus $\delta(b)\in \{0,2,4\}.$
  \item[(3)] If Eq. \eqref{6.4} has three solutions, denoted by $y_1,y_2$ and $y_3$. Then we have $y_1+y_2+y_3=0.$ For each $y_i,\ 1\leq i\leq 3,$ by Lemma~\ref{lem3}, there are $0$ or $4$ solutions for Eq. \eqref{6.5}. So the total number of solutions for $x$ is $0,\ 4,\ 8$ or $12.$
  \begin{enumerate}
      \item[(i)] If the number of solutions for $x$ is $0,\ 4$ or $\ 8$, then the number of solutions to Eq. \eqref{eqs1} is at most 8. Thus $\delta(b)\in \{0,2,4,6,8\}.$
      \item[(ii)] If the number of solutions for $x$ is $12$, that is, for each $y_i,\ 1\leq i\leq 3,$ there are $4$ solutions of Eq. \eqref{6.5}. Let the $12$ solutions be $\{x_{ij},x_{ij}+1 | i=1,2,3;\ j=1,2\},$ with $y_{i}$ corresponding to $x_{i1},x_{i1}+1,x_{i2}$ and $x_{i2}+1$. If $x_{i1},x_{i1}+1,x_{i2}$ and $x_{i2}+1$ are exactly the solutions of Eqs. \eqref{eqs1}, we can easily get
          \begin{equation}\label{eq2}
          \left\{
          \begin{array}{lcl}
          x_{i1}+x_{i2}+(x_{i1}+x_{i2})^{2}=1+y_{i}^{2}, \\
          x_{i1}+x_{i2}\in \mathbb{F}_{2^m},
          \end{array}\right.
          \end{equation}
          which means that the equation $t^2+t+1+y_{i}^{2}=0$ has 2 solutions over $\mathbb{F}_{2^m}.$ By Lemma~\ref{lem1}, we have $\textup{Tr}_{m}(1+y_{i}^2)=0$ so that $\textup{Tr}_{m}(y_{i})=1.$ Therefore, if $\delta(b)=12$, then $\textup{Tr}_{m}(y_{1})=\textup{Tr}_{m}(y_{2})=\textup{Tr}_{m}(y_{3})=1.$ However, $1=\textup{Tr}_{m}(y_{1})+\textup{Tr}_{m}(y_{2})+\textup{Tr}_{m}(y_{3})=\textup{Tr}_{m}(y_{1}+y_{2}+y_{3})=0$, which is a contradiction. So $\delta(b)\leq 10.$
  \end{enumerate}
\end{enumerate}
Thus we obtain $\delta(F_d)\leq 10$.
\end{proof}

\begin{remark}
For $d=2^m+2^{(m+1)/2}+1,$ we can obtain $\delta(F_d)\leq 10$ in a similar way. In fact, with $m=2r-1$
 and $a:=b+\overline{b}$, Eq. \eqref{eqs1} now take the following form
\begin{equation}\label{eqs3}
\left\{
\begin{array}{rcl}
x^4+(ay+1)x^2+ayx+(y^2+1)\overline{b}^{2^r}+b\overline{b}=0, \\[1.0ex]
(y+1)x^{2^r}+x^2+yx+y+1+b=0, \\[1.0ex]
y^3+(a+1)y^2+a^{2^r}y+a^{2^r}=0, \\[1.0ex]
\overline{x}+x=y.
\end{array}\right.
\end{equation}
Hence $\delta(F_d)\leq 12.$ If $b\in\mathbb{F}_{2^{m}}$, we can
easily get $\delta(b)\in\{0,4\}$. If $\delta(b)=12$ for some
$b\in\mathbb{F}_{2^{n}}\backslash\mathbb{F}_{2^{m}},$ we have
$\textup{Tr}_{m}(ay_{i})=1,$ and
$1=\textup{Tr}_{m}(ay_{1})+\textup{Tr}_{m}(ay_{2})+\textup{Tr}_{m}(ay_{3})=\textup{Tr}_{m}(a(a+1))=0$,
which is a contradiction.
\end{remark}

 Hence we have the following result.
\begin{theorem}
Let $n=2m$ with $m$ odd and $d=2^m+2^{(m+1)/2}+1.$ Then $F_{d}:x\rightarrow
x^d$ is a permutation over $\mathbb{F}_{2^n}$ with $\delta(F_d)\leq
10$. Moreover, for $b\in\mathbb{F}_{2^{m}}$, we have $\delta(b)\in\{0,4\}$.
\end{theorem}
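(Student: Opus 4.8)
The plan is to imitate, line by line, the proof of the previous theorem (the case $d=2^{m+1}+3$), since the \textbf{Remark} immediately preceding the statement already sketches the entire reduction for $d=2^m+2^{(m+1)/2}+1$ and produces the analogue of the system \eqref{eqs1}, namely \eqref{eqs3}. So my first step will be to carry out the character-free computation of $D_1(F_d(x)) = (x+1)^d + x^d$ in terms of $x$ and $\overline{x}=x^{2^m}$ (with $m=2r-1$, $d = 2^m + 2^r + 1$ after noting $(m+1)/2 = r$), and then close the system under the $2^m$-power map exactly as in \eqref{6.1}--\eqref{6.5}: taking the trace-like sum $b+\overline{b}$ of the main equation produces a cubic in $y:=\overline{x}+x\in\mathbb{F}_{2^m}$, here $y^3+(a+1)y^2+a^{2^r}y+a^{2^r}=0$ with $a:=b+\overline{b}$, while substituting $\overline{x}=y+x$ back gives the quartic $x^4+(ay+1)x^2+ayx+(y^2+1)\overline{b}^{2^r}+b\overline{b}=0$ together with the linearized constraint $(y+1)x^{2^r}+x^2+yx+y+1+b=0$. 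This yields \eqref{eqs3}, hence the a priori bound $\delta(F_d)\le 12$ (the cubic has at most $3$ roots $y$, each quartic has at most $4$ roots $x$).

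Next I would establish the easy case $b\in\mathbb{F}_{2^m}$, where $a=b+\overline{b}=0$: the cubic degenerates and its only $\mathbb{F}_{2^m}$-roots are $y=0$ and $y=1$ (one checks $y^3+y^2=0$). Handling $y=0$ forces $x\in\mathbb{F}_{2^m}$ and reduces the quartic to a quadratic (after extracting a square root, legitimate since $\gcd(2,2^n-1)=1$) whose solvability is governed by $\textup{Tr}_m(b)$; handling $y=1$ similarly reduces to a quadratic, and the extra constraint $\overline{x}+x=1$ is again equivalent to the very same trace condition $\textup{Tr}_m(b)=1$, exactly as in equations \eqref{eqs2}. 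In both subcases one gets $0$ or $2$ solutions, and they cannot occur simultaneously with opposite parity, so $\delta(b)\in\{0,4\}$. This is the ``Moreover'' clause.

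For the main bound, the Remark already gives the crucial obstruction: if $b\notin\mathbb{F}_{2^m}$ (so $a\ne0$, $y\notin\mathbb{F}_2$) and the cubic in $y$ has its maximal three roots $y_1,y_2,y_3$, and if, for the sake of contradiction, $\delta(b)=12$ — i.e. all three quartics attain their full $4$ solutions and all $12$ lie on the fibre $\overline{x}+x=y_i$ — then for each $i$ one derives, as in \eqref{eq2}, that $t^2+t+1+y_i^2=0$ is solvable over $\mathbb{F}_{2^m}$, hence by Lemma~\ref{lem1} $\textup{Tr}_m(a y_i)=1$ for $i=1,2,3$. (Here the shift from $\textup{Tr}_m(y_i)$ to $\textup{Tr}_m(ay_i)$ is the only substantive change caused by the new exponent; it comes from the different coefficients in the quartic of \eqref{eqs3}.) Then $1 = \sum_{i=1}^{3}\textup{Tr}_m(ay_i) = \textup{Tr}_m\big(a(y_1+y_2+y_3)\big) = \textup{Tr}_m\big(a(a+1)\big)$, using $y_1+y_2+y_3 = a+1$ from the coefficient of $y^2$ in the cubic; but $\textup{Tr}_m(a(a+1)) = \textup{Tr}_m(a^2+a) = \textup{Tr}_m(a^2)+\textup{Tr}_m(a) = 0$, a contradiction. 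Hence $\delta(b)\le 10$, and therefore $\delta(F_d)\le 10$.

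The step I expect to be the genuine obstacle is the first one: verifying that $D_1(F_d(x))$ really collapses to the clean shape recorded in \eqref{eqs3}, and in particular pinning down the quartic's coefficients so that the derived quadratic in the $\delta(b)=12$ analysis comes out as $t^2+t+1+y_i^2=0$ with the precise multiplier $a$ attached to $y_i$ in the trace. Everything downstream (the $2$-or-$0$ dichotomy for $b\in\mathbb{F}_{2^m}$, the trace-sum contradiction) is then essentially forced and parallels the $d=2^{m+1}+3$ proof verbatim; so the careful bookkeeping of the $2^r$-power Frobenius on the mixed terms $x^{2^r},\ \overline{b}^{2^r}$ is where the real work lies, while the combinatorial heart of the argument is already done for us by Lemma~\ref{lem1}, Lemma~\ref{lem2}, and Lemma~\ref{lem3}.
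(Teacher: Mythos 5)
Your proposal follows essentially the same route as the paper, whose own argument for this theorem is exactly the preceding Remark: reduce $D_1(F_d(x))=b$ to the system \eqref{eqs3}, dispose of $b\in\mathbb{F}_{2^m}$ as in Case 1 of the $d=2^{m+1}+3$ proof, and rule out $\delta(b)=12$ via $\textup{Tr}_m(ay_i)=1$ for $i=1,2,3$ against $\textup{Tr}_m(a(a+1))=0$. The only cosmetic slip is that in the $\delta(b)=12$ analysis the auxiliary quadratic is no longer literally $t^2+t+1+y_i^2=0$ (that form yields $\textup{Tr}_m(y_i)=1$, not $\textup{Tr}_m(ay_i)=1$); you correctly flag that its coefficients change with the new quartic, which is precisely the bookkeeping the paper also leaves implicit.
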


\begin{remark}
Here we give a concrete example to illustrate the idea of the proof. Let $w$ be a primitive
element of $\mathbb{F}_{2^n}$, $n=10$, $d=67$,
$b=w^{27}$ and $a=b+\overline{b}$. Then Eq. \eqref{6.4} and
Eq. \eqref{6.5} become $y^3+y+a=0$ and $x^4+y^2(x^2+x+1)+x+1+b=0$ respectively.

\begin{tabular}{c|c|c|c}
\hline
Solutions of   &                        &  Solutions of     &  Solutions of           \\
$y^3+y+a=0$    &$\textup{Tr}_{m}(y_{i})$ &$x^4+y^2(x^2+x+1)+x+1+b=0$   & $D_{1}(F_{d}(x))=b$\\\hline
 $y_1=w^{330}$  &$1$                      &$\{w^{672},w^{1019};w^{619},w^{975}\}$ &      $w^{226},w^{633},$               \\  \cline{1-3}
 $y_2=w^{363}$  &$1$                      &$\{w^{226},w^{633};w^{586},w^{903}\}$  &         $w^{586},w^{903},$              \\  \cline{1-3}
 $y_3=w^{924}$  &$0$                      &$\{w^{129},w^{340};w^{774},w^{883}\}$  &      $w^{129},w^{340}$                \\ \hline
\end{tabular}\\

In the above table, for a fixed element $b$, we obtain $3$ solutions
of Eq. \eqref{6.4}, denoted by $y_1,\ y_2$ and $\ y_3.$ For
each $y_i,$ by Eq. \eqref{6.5} we get $4$ solutions. We need to
determine whether they are satisfying $x+\overline{x}=y_i$. Since
$\textup{Tr}_{m}(y_{3})=0,$ there are at least $2$ solutions which
are not satisfying $x+\overline{x}=y_3$ (in the above example, $w^{774}$
and $w^{883}$ are not). However, for each $y_i,\ i=1,2,$ we can not
determine whether the $4$ solutions are satisfying $x+\overline{x}=y_i$
since $\textup{Tr}_{m}(y_{1})=\textup{Tr}_{m}(y_{2})=1$ (in the
above example, the four solutions corresponding to $y_1$ are not,
while the four solutions corresponding to $y_2$ are). Therefore, in
our proof we only can get $\delta(b)\leq 10,$ but in fact
$\delta(b)=6$ for this example. Thus, we need to find more detailed
conditions to characterize the solutions of the equation.
\end{remark}

\section{Conclusion}\label{conclusion}
Permutation and CPPs have important applications in cryptography. This paper demonstrates some new results on permutation and CPPs. First, by using the AGW Criterion, we proved a conjecture proposed by Wu et al. \cite{WLHZ}. Then we give three other new classes of monomial CPPs over finite fields and the main tool is additive characters over the underlying finite fields. Moreover, a class of trinomial CPPs and two classes of trinomial PPs are also presented in this paper. Finally, for $d=2^{m+1}+3$ or $2^{m}+2^{\frac{m+1}{2}}+1$, Blondeau et al. \cite{BCC} conjectured that $x^{d}$ is differentially $8$-uniform over $\mathbb{F}_{2^{n}}$, where $n=2m$. We make some progress towards this conjecture and prove that the differential uniformity of $x^{d}$ is at most $10$. It seems not easy to exclude the possibility that $\delta(b)=10$ for some $b\in\mathbb{F}_{2^{n}}\backslash\mathbb{F}_{2^{m}}$. We look forward to seeing further progress on this conjecture.

\section*{Acknowledgements}
The authors express their gratitude to the anonymous reviewers for their detailed and constructive comments which are very helpful to the improvement of the presentation of this
paper. The research of T. Feng was supported by Fundamental Research Fund for the Central Universities of China, the National Natural Science Foundation
of China under Grant No.~11201418 and  Grant No.~11422112, and the Research Fund for Doctoral Programs from the Ministry of Education of
China under Grant 20120101120089. The research of G. Ge was supported by the National Natural Science Foundation of China under Grant Nos. 11431003 and 61571310.

\end{document}